\newcommand{\bone}{\mathds{1}}
\setlist{noitemsep}
\setlist[description]{leftmargin=\parindent,labelindent=0pt}
\setlist[itemize]{leftmargin=\parindent,labelindent=0pt}
\crefname{demproc}{Process}{Processes}
\crefname{experiment}{Experiment}{Experiments}
\newcommand{\makedemproc}{\SetAlgorithmName{Process}{demproc}{List of Democratic Processes}}
\theoremstyle{plain}
\newtheorem{theorem}{Theorem}
\crefname{theorem}{Theorem}{Theorems}
\newtheorem{lemma}[theorem]{Lemma}
\crefname{lemma}{Lemma}{Lemmas}
\crefname{conjecture}{Conjecture}{Conjectures}
\crefname{corollary}{Corollary}{Corollaries}
\newtheorem{proposition}[theorem]{Proposition}
\crefname{proposition}{Proposition}{Propositions}
\theoremstyle{definition}
\crefname{example}{Example}{Examples}
\newtheorem{definition}[theorem]{Definition}
\crefname{definition}{Definition}{Definitions}
\crefname{figure}{Figure}{Figures}
\title{Generative Social Choice}
 \author{\textbf{Sara Fish$^1$, Paul G\"olz$^2$, David C. Parkes$^1$, Ariel D. Procaccia$^1$,}\\\textbf{Gili Rusak$^1$, Itai Shapira$^1$,} and \textbf{Manuel W\"uthrich$^1$}}
 \date{\large $^1$Harvard University\hspace{.3cm}$^2$Cornell University}
\newcommand{\pg}[1]{\textcolor{black}{#1}}
\definecolor{english}{rgb}{0.0, 0.5, 0.0}
\newcommand{\disquery}[2]{\textsc{Disc}(#1, #2)} 
\newcommand{\genquery}[2]{\textsc{Gen}(#1, #2)}
\newcommand{\emdash}{\,---\,}
\newcommand{\rthlargest}[1]{\mathrm{max}_{(#1)}}
\renewcommand{\theta}{\vartheta}
\DeclareMathOperator*{\argmin}{argmin}
\DeclareMathOperator*{\argmax}{argmax}
\newcommand{\nsubseteq}{\not\subseteq}
\newenvironment{prompt}{\begingroup\ttfamily}{\endgroup}
\newcommand{\removelatexerror}{\let\@latex@error\@gobble}
\begin{document}

\maketitle

\begin{abstract}
The mathematical study of voting, \emph{social choice theory}, has traditionally only been applicable to choices among a few predetermined alternatives, but not to open-ended decisions such as collectively selecting a textual statement.
We introduce \emph{generative social choice}, a design methodology for open-ended democratic processes that combines the rigor of social choice theory with the capability of large language models to generate text and extrapolate preferences.
Our framework divides the design of AI-augmented democratic processes into two components: first, proving that the process satisfies representation guarantees when given access to oracle queries; second, empirically validating that these queries can be approximately implemented using a large language model. 
We apply this framework to the problem of summarizing free-form opinions into a proportionally representative slate of opinion statements; specifically, we develop a democratic process with representation guarantees and use this process to {portray} the opinions of participants in a survey about {abortion policy}.
In a trial with 100 representative US residents, we find that {84 out of 100 participants feel ``excellently'' or ``exceptionally''} represented by the slate of five statements we extracted.
\end{abstract}

\section{Introduction}
Voting is a key way in which groups\emdash{}be they national electorates, members of a legislature, or members of a board\emdash{}make common decisions.
The theoretical foundation of voting is provided by the field of \emph{social choice theory}, which studies mathematical guarantees in the context of how different voting rules aggregate individual preferences into a collective decision.
The typical social choice setting involves a \emph{small, predetermined set of alternatives} (e.g., the candidates in an election), over which voters specify their preferences and from which the voting rule selects an outcome.

Many pressing policy questions, however, are too nuanced to fit this neat template of choosing between a few alternatives.
The need for open-ended forms of democratic input is demonstrated, for example, by the increased use of \emph{deliberative minipublics}~\citep{OECD20,FGGH+21}, which provide policy recommendations to governments on complex issues such as climate change~\citep{WCS22} and electoral reform~\citep{Fournier11}.
Similarly nuanced questions arise around the alignment of artificial intelligence~(AI) with societal interests; in this context, Meta~\citep{Clegg23} and Open\-AI~\citep{EL24} have been experimenting with democratic processes that seek public input to open-ended questions such as ``how far [\dots] personalization of AI assistants like ChatGPT to align with a user's tastes and preferences should go?''~\citep{ZDAE+23}.
Though deliberation can address such open-ended questions, it lacks two key strengths of voting: scalability~\cite[e.g.,][]{Goodin00} and guarantees on its outcomes.

To address these shortcomings, we introduce a new paradigm for the design of democratic processes: \emph{generative social choice}.
It fuses the rigor of social choice theory with the flexibility and power of generative AI, in particular large language models (LLMs), to reach collective answers to open-ended questions in a scalable and principled way.

\subsection{How LLMs Address the Limitations of Classical Social Choice}
In our view, there are two fundamental obstacles to applying classical social choice to open-ended questions, both of which can be overcome by LLMs. \smallskip

\begin{itemize}
\item \textit{Unforeseen Alternatives.}
In classical social choice, the set of alternatives is explicitly specified and static.
Take the 2016 Brexit referendum, for example, in which the alternatives were either to maintain the status quo or make a clean break with the European Union.
Since intermediate options were not specified, they could not be selected by voters, even if they might have enjoyed a much larger degree of support.
Even in participatory budgeting~\citep{Cab04}, the set of alternatives is limited to the budget-feasible subsets of \emph{previously proposed} projects.

By contrast, LLMs have the capability of \emph{generating} alternatives that were not initially anticipated but find common ground between agents.
In principle, the possible outcomes of an LLM-augmented democratic process may span the universe of all relevant outcomes for the problem at hand, e.g., all possible bills or statements. \smallskip

\item \emph{Extrapolating Preferences.}
In classical social choice theory, voters specify their preferences in a rigid format.
Typically, agents evaluate each alternative independently, or, if the alternatives form a combinatorial domain,\footnote{This is the case, for example, in multi-winner elections or participatory budgeting.} a voting rule might assume that preferences have a restricted parametric shape and only elicit its parameters.
Clearly, this approach does not suffice if a democratic process may produce alternatives that were not previously anticipated, and therefore not elicited: to even know which alternatives would be promising to generate, the process must be able to extrapolate agents' preferences.

LLMs can address this problem as they enable participants to \emph{implicitly} specify their preferences by expressing their opinions, values, or criteria in natural language. The LLM can act as a proxy for the participant, predicting their preferences over any alternative, whether foreseen or newly generated.
\end{itemize}

\subsection{A Framework for Generative Social Choice}
It is clear, at this point, what LLMs can contribute to social choice.
LLMs and social choice theory make an odd couple, however, because social choice focuses on rigorous guarantees whereas LLMs are notoriously impervious to theoretical analysis.
We propose a framework for generative social choice that addresses this difficulty by breaking the design of democratic processes into two interacting components.

\begin{itemize}
\item \emph{First component: Guarantees with perfect queries.}
Assume that the LLM is an oracle that can precisely answer certain types of queries, which may involve generating new alternatives in an optimal way or predicting agents' preferences.
Once appropriate queries have been identified, the task is to design algorithms that, when given access to an oracle for these queries, provide social choice guarantees.

\item \emph{Second component: Empirical validation of queries.}
Assuming the LLM to be a perfect oracle is helpful for guiding the design of a democratic process, but of course not an accurate reflection of reality.
In the second component, the task is to implement the proposed queries using calls to an LLM, and to empirically validate how well these implementations match the queries. 
\end{itemize}
Naturally, the two components interact:
The theory identifies queries that are useful for social choice and should hence be validated empirically.
Conversely, experiments show which queries can be answered accurately in practice, raising the question of which guarantees algorithms relying on these queries might provide.

A key benefit of this framework is that theoretical results derived in it are future-proof: as LLMs continue to rapidly improve, they will only grow more reliable in answering queries, making the LLM-based aggregation methods ever more powerful.

\subsection{Our Results: A Case Study in Generative Social Choice}
In addition to introducing the framework presented above, we demonstrate it in one particular setting: summarizing a large body of free-form opinions into a slate of few statements, in a representative manner.
In this setting, participants share free-form opinions about a given policy issue on an online platform such as \emph{Polis}~\citep{SBES+21} or \emph{Remesh},\footnote{\url{https://www.remesh.ai/}} or as part of a qualitative survey.
Then, a voting rule selects a slate of $k$ statements that is  proportionally representative of the diversity and relative prevalence of viewpoints among the participant population.

The setting of statement selection was formalized by~\citet{HKPT+23} in the language of multi-winner approval elections:
If we think of statements as candidates, and of an agreement between participants and statements as binary approval votes, the slate should satisfy axioms for proportional representation from this literature such as \emph{justified representation}~(JR).
In our work, we allow cardinal (rather than just binary) levels of participant--statement agreement.
Furthermore, we introduce a novel strengthening of JR, \emph{balanced justified representation (BJR)}, which we believe to be particularly well suited for our statement-selection application and of independent interest.

Whereas previous summarization systems can only select a slate among users' statements, our process
can \emph{generate new statements}, which might find new common ground between participants and allow for more representative slates. Our process takes as input each user's interactions on the platform as a description of their preferences.
The process then employs an LLM to \emph{(1)}~translate these descriptions into participants' utilities for any new statements (\emph{discriminative} queries, in the language of machine learning), and \emph{(2)}~generate statements that maximize the utility of a subset of participants, based on their descriptions (\emph{generative} queries).

Following our framework's first component, we show that, with access to polynomially many of these queries, a democratic process resembling \emph{Greedy Approval Voting}~\citep{ABCE+17} guarantees BJR.
Crucially, this guarantee holds not just relative to a set of predetermined statements but to the space of all possible statements (\cref{sec:theory:algs}).

A potential issue with this process is that, through the generative query, it calls the LLM with a prompt whose length scales linearly in the number of participants.
This is problematic since LLMs can only handle input of bounded length.
We show that, unless one makes assumptions on the structure of preferences, this problem of linear-size queries is unavoidable for any process guaranteeing BJR with subexponentially many queries (\cref{sec:theory:tqueries}).
If, however, the space of statements and preferences is structured, specifically, if it has finite VC dimension \citep{Vapnik1998-fp}, democratic processes based on sampling can guarantee BJR (with high probability) using a polynomial number of queries whose length is independent of the number of participants (\cref{sec:theory:structured}).

In \cref{sec:experiments}, we present a practical, LLM-based implementation of discriminative and generative queries. 
Empirical validation shows that the proposed implementation of the discriminative query accurately extrapolates agents' preferences to unseen statements. Further, we show that the proposed implementation of the generative query consistently produces high-agreement statements.

Equipped with these query implementations, we then deploy the full democratic process in \cref{sec:pilot}.
{We pilot our process to study US residents' opinions on abortion policy.}
We elicit free-text opinions about this topic from a sample of 100 participants and distill them into a representative slate of five statements, using our LLM-enhanced democratic process.
To validate that these statements faithfully represent the population, we conduct a second survey with a fresh sample of 100 US residents.
{After matching the five statements to equal-sized blocs of participants,  
84\% of participants say that their assigned statement captures their view on abortion policy ``excellently'' or ``exceptionally''. Only a single participant feels less than ``well'' represented (the midpoint of our scale) by their assigned statement.} To support future research on online participation, we made the participants' full responses available as a public data set.\footnote{\url{https://github.com/generative-social-choice/survey_data/}} 

\subsection{Related Work}
In a position paper that is independent of our work, \citet{SVDH+23} discuss the opportunities and risks of LLMs in the context of Polis. The opportunities they identify include topic modeling, summarization, moderation, comment routing, identifying consensus, and vote prediction. Most relevant to us are their experiments for the vote prediction task, which are closely related to our implementation and evaluation of discriminative queries.
In the future, our democratic process as a whole could serve in the summarization role envisioned by \citet{SVDH+23}, for which they do not propose specific algorithms and perform no systematic experiments.

Our discriminative queries are related to the paradigm of \emph{virtual democracy}, which facilitates automated decisions on ethical dilemmas by learning the preferences of stakeholders and, at runtime, predicting their preferences over the current alternatives and aggregating the predicted preferences; example papers, which employ classical machine-learning algorithms, apply the paradigm to domains such as autonomous vehicles~\citep{NGAD+18}, food rescue~\citep{LKKK+19}, and kidney exchange~\citep{FSSD+20}. These papers all aim to predict preferences on a fixed set of alternatives\emdash they do not generate new alternatives.

A source of inspiration for our work is the paper of \citet{BCST+22}. They fine-tune an LLM to generate a single consensus statement for a specific group of people, based on written opinions and ratings of candidate statements. Reward models are trained to capture individual preferences, and the acceptability of a statement for the group is measured through a social welfare function. More recent work by \citet{TBJS+24}, which is concurrent with ours, also seeks to generate consensus statements by modeling rewards and fine-tuning a large language model; social choice plays a direct role, as rankings over candidate statements are aggregated using a voting rule due to \citet{Schul03}. Among other findings, \citet{TBJS+24} show that the consensus statements produced by their system are preferred by participants to those proposed by mediators. One difference between these two papers and ours is that we do not attempt to find a single statement that builds consensus across the entire group\emdash we instead allow for multiple statements representing distinct opinions.
Despite this difference, part of our evaluation of the generative query adapts an experimental setup by \citet{BCST+22}.
A more fundamental difference is that we view our experiments as an instance of a broader framework that allows for a systematic investigation of the types of queries an LLM can perform and the theoretical guarantees they provide.

Finally, we build
on the rich literature on justified representation in approval-based committee elections~\citep{ABCE+17,LS23a}.
As we have already mentioned, \citet{HKPT+23} also study representation axioms from this literature in a statement-selection context.
The key technical challenge in their work is that they only have access to partial approval votes.
The learning-theoretic approach they adopt, as well as a later refinement by \citet{BP23}, bears technical similarity to the algorithm we propose for obtaining representation with size-constrained generative queries.
All previous papers in this literature assume a non-generative setting with a fixed set of alternatives.

\section{Model}

Let $N$ be a set of $n$ \emph{agents}, and let $\mathcal{U}$ denote the \emph{universe of} (well-formed, on-topic) \emph{statements}, which may be finite or infinite. Each agent $i\in N$ has a \emph{utility function} $u_{i}:\mathcal{U}\to\mathbb{R}$ that maps statements to utilities.
Whereas our positive results apply for arbitrary real-valued utility functions, our impossibilities will even hold in the restricted setting of \emph{approval utilities}, where utilities are 0 or 1, which much of the prior work has focused on \citep{LS23a}.
An \emph{instance} of the statement-selection problem consists of $N$, $\mathcal{U}$, $\{u_{i}\}_{i\in N}$, and a \emph{slate size} $k\in\mathbb{N}_{\geq1}$.

A \emph{democratic process} is an algorithm that, when run on an instance, returns a \emph{slate}, i.e., a multiset consisting of $k$ statements from the universe.\footnote{Allowing a slate to contain the same statements multiple times avoids technical problems with the edge case where generative queries return the same statement, in which case no query-based algorithm would be able to procure $k$ distinct statements. We also believe this choice to be suitable for our application domain, where representing multiple segments of the population by identical statements might sometimes be appropriate, for example if all agents in these segments have identical preferences. For ease of exposition, we will slightly abuse notation and treat slates as if they were sets; this essentially amounts to assuming that different generative queries do not return exactly the same statement.}
Crucially, this algorithm receives only $N$ and $k$ in its input, but not $\mathcal{U}$ or the $u_{i}$, which it must instead access through queries as we describe below.
{Note that our model treats the desired slate size $k$ as fixed. In settings where a range of slate sizes is permissible, one can use the democratic process to produce slates of all valid sizes and then select the slate maximizing a measure of fit, analogously to determining the number of clusters in clustering~\cite[][p.\ 126 ff.]{ELL+11}.}

For convenience, we denote the $r$th largest element in a finite set $X$ of real numbers (for $1 \leq r \leq |X|$) by $\rthlargest{r}(X)$.
To deal with edge cases, we set $\rthlargest{0}(X) \coloneqq \infty$ for all sets $X$.

\subsection{Queries}

Since the democratic process does not receive the statements and preferences in its input, it instead accesses them indirectly through \emph{queries}. The democratic processes we develop make use of two query types: 
\begin{description}
\item [Discriminative Queries.]
Discriminative queries extrapolate an agent's utility function to unseen statements. For an agent $i$ and statement $\alpha$, $\disquery{i}{\alpha}$ returns $u_{i}(\alpha)$. 
\item [Generative Queries.]  
For a set of agents $S$ of size at most $t$ and an integer $0 \leq r \leq |S|$, $t$-$\genquery{S}{r}$ returns the statement in $\mathcal{U}$ that maximizes the $r$-highest utility among the members of $S$. Formally, the query returns
\begin{equation}
\argmax_{\alpha \in \mathcal{U}} \rthlargest{r}\big(\{ u_i(\alpha) \mid i \in S \}\big),
\label{eq:generativedef}
\end{equation}
breaking ties arbitrarily.
\end{description}
Intuitively, the generative query's parameter $r$ interpolates between finding a lowest common denominator ($t$-$\genquery{S}{|S|}$ maximizes the minimum utility over $S$) and finding a statement that precisely matches a narrow coalition in $S$ (e.g., $t$-$\genquery{S}{1}$ gives some agent maximum utility, but may be unpopular among the remaining agents).
For convenience, we will simply write $\genquery{\cdot}{\cdot}$ to refer to generative queries without a size limit 
or to talk generally about generative queries with different size constraints $t$.

\subsection{Representation Axiom}

The aim of our democratic processes is to produce a slate of statements $W$ that is representative of the agent population.
If agents have approval utilities, statement selection reduces to the classic setting of multi-winner approval voting.
Therefore, our target axiom is inspired by the family of \emph{justified representation} axioms~\citep{ABCE+17} in this literature{.}

{Throughout this work, we think of the statements in $\mathcal{U}$ as expressing an entire viewpoint on the topic of discussion, not just some aspect of such a viewpoint, and of utilities as expressing how accurately and completely the statement captures the agent's viewpoint.
Our notion of representation expresses that agents have a claim to being represented to a high degree by \emph{one} statement, rather than to some average agreement across all statements on the slate\emdash a notion of representation that echoes ideas like the \emph{fully proportional representation} of \citet{Mon95} or the \emph{perfect representation} of \citet{SEL+17}.\footnote{{Other choices would have been reasonable, say, letting the statements refer to aspects of viewpoints and defining representation via a notion in which participants' claims to representation can be spread across several statements.
In our opinion-summarization setting, however, this setup would lead to slates of vague, inoffensive statements (see \cref{app:incomparablebjr}), which are of limited use for understanding the distribution of opinions in the agent population.}}
These decisions translate to the following new representation axiom:}

\begin{definition}
A slate $W$ satisfies \emph{balanced justified representation (BJR)} if there is a function $\omega: N\to W$, matching agents {to statements such that each statement on the slate is matched to $\lfloor n/k \rfloor$ or $\lceil n/k\rceil$ agents}, for which there is no coalition $S\subseteq N$, statement $\alpha\in\mathcal{U}$, and threshold $\theta \in\mathbb{R}$ such that \emph{(i)}~$|S|\geq n/k$, \emph{(ii)}~$u_{i}(\alpha)\ge \theta$ for all $i\in S$, and \emph{(iii)}~$u_{i}(\omega(i))<\theta$ for all $i\in S$.\footnote{This axiom can also be defined in a setting where slates are sets of statements, rather than multisets. In this case, the statements $\alpha$ are restricted to lie in $\mathcal{U} \setminus W$, to make the axiom satisfiable. This axiom can be satisfied by a variant of \cref{alg:cjr}, in which the choice of statements in each iteration is restricted to statements that have not previously been selected.\label{fn:axiomforsets}}
\end{definition}

In words, if there is a coalition of agents that is \emph{(i)} large enough to ``deserve'' a statement on the slate by proportionality and \emph{(ii)} has cohesive preferences (i.e., there is a statement for which all these agents have utility at least $\theta$), then \emph{(iii)} the coalition must not be ``ignored'', in the sense that at least one member must be assigned to a statement with utility at least $\theta$.\footnote{{One might hope to strengthen this definition, so that it rules out any deviation in which the coalition members strictly increase their utility, rather than just those deviations where the coalition members all cross a common threshold $\theta$. As we show in \cref{app:incomparablebjr}, such a strengthening of BJR can be impossible to satisfy.}}

Our notion of BJR strengthens the classical axiom of justified representation, and is logically incomparable to several other axioms in the social choice literature.
We prove these relationships and {justify} the need for a new axiom in \cref{app:incomparablebjr}.
Throughout this paper, we will aim to build democratic processes that satisfy BJR, even when the universe of statements is very large and can only be navigated through queries.

\section{First Component: Guarantees with Perfect Queries}

\label{sec:theory}

In this section, we instantiate the first component of the generative social choice framework.
We defer all proofs to \cref{app:proofs}.

\subsection{Unconstrained Queries}
\label{sec:theory:algs}

We begin by constructing a democratic process that guarantees BJR in polynomial time. This algorithm uses queries of type $\disquery{\cdot}{\cdot}$ and $n$-$\genquery{\cdot}{\cdot}$, i.e., generative queries without constraints on the number of input agents.
The democratic process we propose, shown in \cref{alg:cjr}, can either be seen as a generalization of \emph{Greedy Approval Voting}~\citep{ABCE+17}, or as a variant of the \emph{Greedy Monroe Rule}~\citep{SFS15} that selects statements based on an egalitarian~{\citep{BSU13}} rather than utilitarian criterion.
Our democratic process iteratively constructs a slate, adding statements one at a time.
In each iteration, it identifies a set $T$ of $n/k$ (up to rounding) remaining agents and a statement $\alpha$ such that $\min_{i \in T} u_i(\alpha)$ is maximized.
It then adds $\alpha$ to the slate, removes the agents $T$ (who are now satisfied), and repeats.
Our proof in \cref{app:proofs} that this process satisfies BJR follows in structure the argument by \citet{ABCE+17} that Greedy Approval Voting satisfies JR.

\begin{algorithm}
\DontPrintSemicolon \makedemproc \label[demproc]{alg:cjr}

\textbf{Inputs}: agents $N$, slate size $k$\;

$\bar{r} \gets n\frac{1}{k}$\;

$S\gets N$\;

$W\gets\emptyset$\;

\For{$j = 1,2,\dots,k$}{

$\alpha \gets \genquery{S}{\lceil \bar{r} \rceil}$

$W\gets W\cup\{\alpha\}$\;

$r\gets \pg{\begin{cases}
    \lceil \bar{r} \rceil & \text{if $j \leq n - k \cdot \lfloor \bar{r} \rfloor$} \\
    \lfloor \bar{r} \rfloor & \text{else} \end{cases}}$

$T \gets$ the $r$ agents in $S$ with largest $\disquery{\cdot}{\alpha}$

$S\gets S\setminus T$\; } 

\Return $W$\; \caption{Democratic Process for Balanced Justified Representation}
\end{algorithm}

\begin{restatable}{theorem}{propcjr}
\label{prop:cjr} \Cref{alg:cjr} satisfies balanced justified representation in polynomial time in $n$ and $k$, using queries of types $n$-$\genquery{\cdot}{\cdot}$ and $\disquery{\cdot}{\cdot}$. \end{restatable}

\subsection{Size-Constrained Generative Queries}
\label{sec:theory:tqueries}
So far, our generative queries could generate optimal statements even if the queried set $S$ of agents was as large as $n$.
When implementing a generative query using an LLM, however, the prompt to the LLM must include, for each agent in $S$, enough information to extrapolate the agent's preferences across the universe of statements. 
Since this information can easily take hundreds of tokens (if not more) per agent in $S$, but LLMs have bounded context windows (e.g.~128,000 tokens for GPT-4o), there are barriers to scaling to hundreds of agents and beyond. Moreover, LLMs with long context windows may struggle to effectively use the entirety of their context window \citep{LLH+23}. This motivates the design of democratic processes that function using only generative queries with bounded input size $|S|$.
Therefore, we investigate in this section whether democratic processes can still ensure BJR when generative queries are limited to sets of agents of some size $t$ that is substantially smaller than $n$. 
Immediately, we see that, if the query size $t$ is even just slightly smaller than $n/k$, representation cannot be attained:
\begin{restatable}{proposition}{propjrlower}
    \label{prop:jrlower}
    No democratic process can guarantee balanced justified representation with arbitrarily many $\frac{n}{k}\, (1 - \frac{1}{k})$-$\genquery{\cdot}{\cdot}$ and $\disquery{\cdot}{\cdot}$ queries.
    This impossibility even holds in the subsetting of approval utilities and for the weaker axiom of justified representation.
\end{restatable}
Conceptually, the proof of this theorem and the subsequent impossibility theorem are based on the idea of \emph{overshadowing}.
Specifically, we construct instances that have few ``popular'' statements and many ``unpopular'' statements with lower support.
For a given set $S$ of at most $t$ agents, our instances will ensure that some unpopular statement will be at least as well liked \emph{within $S$} as any popular statement.
Thus, all generative queries might return unpopular statements, and we design the instance such that no slate composed entirely of unpopular statements is representative.
In \cref{app:proofs}, we apply this idea in a straightforward way to prove \cref{prop:jrlower}.

On the face of it, slightly larger size-constrained generative queries seem promising for achieving BJR, since there is a democratic process that achieves BJR with queries of size $t = \lceil n/k \rceil$.
Indeed, observe that, for any $S$ and $r$,
\begin{align*}
\genquery{S}{r} &= \argmax_{\alpha \in \mathcal{U}} \rthlargest{r}\big(\{ u_i(\alpha) \mid i \in S \}\big) = \argmax_{\alpha \in \mathcal{U}} \max_{\substack{S' \subseteq S\\|S'| = r}} \rthlargest{r}\big(\{ u_i(\alpha) \mid i \in S' \}\big) \\
&= \argmax_{\alpha \in \big\{\genquery{S'}{r} \,\big|\, S' \subseteq S, |S'| = r\big\}} \rthlargest{r}\big(\{ u_i(\alpha) \mid i \in S\}\big),
\end{align*}
which shows that any call to $\genquery{S}{r}$ can be simulated by (exponentially many) $r$-$\genquery{\cdot}{\cdot}$ queries and discriminative queries.
By applying this simulation to \cref{alg:cjr}, in which all generative queries satisfy $r \leq \lceil n/k \rceil$, \cref{prop:cjr} immediately implies that BJR can be implemented by $\lceil n/k \rceil$-$\genquery{\cdot}{\cdot}$ queries, though the time complexity of the modified process is obviously prohibitive.

\begin{restatable}{proposition}{thmjrupper}
    \label{thm:jrupper}
    There exists a democratic process that satisfies balanced justified representation using (exponentially many) queries of type $\lceil n/k \rceil$-$\genquery{\cdot}{\cdot}$ and $\disquery{\cdot}{\cdot}$.
\end{restatable}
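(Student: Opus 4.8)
The plan is to keep \cref{alg:cjr} intact and only to re-implement the generative queries it issues. Observe first that \cref{alg:cjr} calls the generative query exclusively in the form $\genquery{S}{\lceil\bar r\rceil}$ with $\lceil\bar r\rceil=\lceil n/k\rceil$; the parameter that later drops to $\lfloor n/k\rfloor$ controls only how many agents are removed in the step $T\gets(\cdots)$, which uses discriminative queries exclusively. Hence every generative query in \cref{alg:cjr} has rank parameter at most $\lceil n/k\rceil$. When the queried set satisfies $|S|\le\lceil n/k\rceil$, the query is already of type $\lceil n/k\rceil$-$\genquery{\cdot}{\cdot}$ and is issued directly. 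When $|S|>\lceil n/k\rceil$, set $r\coloneqq\lceil n/k\rceil$ and recall the identity established just before this statement,
\begin{equation*}
\genquery{S}{r} = \argmax_{\alpha \in \big\{\genquery{S'}{r} \,\big|\, S' \subseteq S,\ |S'| = r\big\}} \rthlargest{r}\big(\{ u_i(\alpha) \mid i \in S\}\big).
\end{equation*}
Using it, answer the query as follows: for each of the $\binom{|S|}{r}$ subsets $S'\subseteq S$ with $|S'|=r$, issue the query $\lceil n/k\rceil$-$\genquery{S'}{r}$ to obtain a candidate statement $\alpha_{S'}$; then, with at most $|S|\binom{|S|}{r}$ discriminative queries, evaluate $\rthlargest{r}(\{u_i(\alpha_{S'})\mid i\in S\})$ for each candidate; finally return a candidate maximizing this value, breaking ties arbitrarily.

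Next I would verify that this subroutine returns a valid answer to $\genquery{S}{\lceil n/k\rceil}$ in the sense of \cref{eq:generativedef}. The displayed identity guarantees that $\max_{\alpha\in\mathcal{U}}\rthlargest{r}(\{u_i(\alpha)\mid i\in S\})$ is attained by some enumerated candidate $\alpha_{S'}$, so any candidate maximizing $\rthlargest{r}(\{u_i(\cdot)\mid i\in S\})$ is a maximizer over all of $\mathcal{U}$, which is exactly what the generative query must output. Therefore the process obtained from \cref{alg:cjr} by replacing every generative call with this subroutine produces, on each instance, the output of a legitimate run of \cref{alg:cjr} (the only remaining freedom, tie-breaking within the generative query, is already left arbitrary in \cref{eq:generativedef}). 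Since \cref{prop:cjr} shows that every run of \cref{alg:cjr} satisfies BJR, so does the modified process.

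Finally I would tally the queries. Every generative query now has input size at most $\lceil n/k\rceil$, hence is a legal $\lceil n/k\rceil$-$\genquery{\cdot}{\cdot}$ query, and all other queries are discriminative. \cref{alg:cjr} performs $k$ iterations, each using the subroutine once (at most $\binom{n}{\lceil n/k\rceil}$ generative and $n\binom{n}{\lceil n/k\rceil}$ discriminative queries) together with the at most $n$ discriminative queries that select $T$, for a total of $O\!\big(k\,n\binom{n}{\lceil n/k\rceil}\big)$ queries, which is finite though exponential in $n$.

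The only real obstacle is the conceptual point already flagged: recognizing that \cref{alg:cjr} never issues a generative query with rank above $\lceil n/k\rceil$, so that the size-$\lceil n/k\rceil$ simulation suffices verbatim. The remaining points are routine: $\lceil n/k\rceil\ge1$ since $k,n\ge1$; the case $|S|\le\lceil n/k\rceil$ needs no simulation at all; and the tie-breaking inside the simulated query is immaterial because \cref{prop:cjr} holds for every tie-breaking rule.
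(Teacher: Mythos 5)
Your proposal is correct and matches the paper's argument: the paper likewise establishes the displayed identity and observes that, since every generative query in \cref{alg:cjr} has rank parameter $\lceil \bar r\rceil = \lceil n/k\rceil$, each such call can be simulated by exponentially many $\lceil n/k\rceil$-$\genquery{\cdot}{\cdot}$ queries plus discriminative queries, so BJR follows from \cref{prop:cjr}. Your additional bookkeeping (query counts, tie-breaking, the $|S|\le\lceil n/k\rceil$ case) is consistent with, and slightly more explicit than, the paper's presentation.
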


Unfortunately, the exponential running time of this na\"ive democratic process turns out to be unavoidable, even if the generative queries can have linear size in $n$.
Our proof must necessarily be more complicated than our previous impossibility in \cref{prop:jrlower}, in which we constructed an explicit instance on which \emph{any} democratic process with $t$-bounded generative queries had to violate representation.
A more sophisticated proof is necessary since, for any instance, there exists a democratic process that satisfies BJR in polynomial time with $\lceil n/k \rceil$-$\genquery{\cdot}{\cdot}$ queries on this instance; namely, a variant of the algorithm from \cref{thm:jrupper} that guesses the right subset $S'$ and returns the corresponding statement {$\genquery{S'}{r}$}.
We prove our impossibility (in \cref{app:proofs}) by showing that, for any fixed polynomial-time algorithm, \emph{there exists} an instance on which this algorithm violates BJR, through an application of the probabilistic method.
\begin{restatable}{theorem}{thmjrlowerpolytime}
    \label{thm:jrlowerpolytime}
    No democratic process can guarantee balanced justified representation with any number of $\disquery{\cdot}{\cdot}$ queries and fewer than $\frac{2}{k} \, e^{n/(12 k)}$ queries of type $\frac{n}{8}$-$\genquery{\cdot}{\cdot}$.
    This holds even for the subsetting of approval utilities and the weaker axiom of justified representation.
    As a corollary, if $k \in O(n^{0.99})$, then any democratic process guaranteeing BJR with $\frac{n}{8}$-$\genquery{\cdot}{\cdot}$ and $\disquery{\cdot}{\cdot}$ queries has exponential running time.
\end{restatable}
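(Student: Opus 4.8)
The statement is a query lower bound, and I would establish it with the probabilistic method, making the ``overshadowing'' intuition quantitative. It suffices to treat deterministic processes, since a randomized process guaranteeing BJR with few queries yields, upon fixing its coins suitably, a deterministic one; and for the corollary a polynomial-time process issues only polynomially many queries. So fix a deterministic process $\mathcal{A}$ issuing fewer than $\frac{2}{k}e^{n/(12k)}$ generative queries, each of type $\frac{n}{8}$-$\genquery{\cdot}{\cdot}$, plus arbitrarily many $\disquery{\cdot}{\cdot}$ queries, and recall that $\mathcal{A}$ may only output statements that were returned to it by generative queries. I would then specify a distribution over approval-utility instances: the universe contains a ``generic'' statement approved by each subset of $N$ of a fixed size $s$ (a suitable constant fraction of $n/k$), together with $\Theta(k)$ ``popular'' statements, the $j$-th of which is approved by an independent uniformly random set $B^*_j \subseteq N$ of size $p$, for a fixed $p$ that is a somewhat larger constant multiple of $n/k$. (This instance admits a BJR slate, by \cref{prop:cjr}, so the impossibility is about the queries, not satisfiability.)

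The adversary, exploiting the arbitrary tie-breaking of generative queries, answers every query $\frac{n}{8}$-$\genquery{S}{r}$ with a generic statement whose approver set meets $S$ in at least $r$ agents whenever one exists\emdash which, since generic statements realize every $s$-subset, is precisely when $r \le s$\emdash and is otherwise \emph{forced} to return some maximizer, possibly a popular statement. Two observations are key. First, as long as $\mathcal{A}$ is never forced, every answer is a generic statement whose approver set is independent of the random sets $B^*_j$, so the whole execution of $\mathcal{A}$ is a deterministic object; in particular the output slate $W$ and the set $V$ of agents who approve some statement in $W$ are fixed and independent of $(B^*_j)_j$. Second, $|V| \le ks < n$, because $W$ consists of $k$ generic statements each having $s$ approvers; thus no $k$ generic statements can represent all of $N$, and checking that $\mathcal{A}$ cannot cobble together a valid slate from the generic statements it is able to obtain is immediate.

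The argument now reduces to two estimates over the randomness of the instance. \emph{(i) $\mathcal{A}$ is almost surely never forced.} Conditioning on the execution being generic-only so far, the $t$-th query $(S_t, r_t)$ is a fixed pair with $|S_t| \le n/8$, and for it to be forcing we need $r_t > s$ and $|B^*_j \cap S_t| \ge r_t$ for some $j$. Now $|B^*_j \cap S_t|$ is hypergeometric with mean $|S_t|\,p/n \le p/8$, whereas $r_t > s$; calibrating $s$ to be a sufficiently large multiple of $p/8$, a Chernoff bound for the hypergeometric distribution yields $\Pr[|B^*_j \cap S_t| \ge r_t] \le e^{-\Omega(n/k)}$\emdash and with the particular cap $n/8$ and a careful choice of $s,p$, at most $e^{-n/(12k)}$. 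Summing over the $\Theta(k)$ popular statements and the fewer than $\frac{2}{k}e^{n/(12k)}$ queries keeps the total probability of ever being forced strictly below $1$. \emph{(ii) With high probability some popular coalition is ignored.} Since $V$ is independent of the $B^*_j$ with $|N \setminus V| \ge n - ks$, and since $p$ is calibrated so that $p\,(1 - ks/n) > n/k$, each $|B^*_j \cap (N \setminus V)|$ is hypergeometric with mean strictly above $n/k$; hence $\Pr[|B^*_j \cap (N\setminus V)| \ge n/k]$ is bounded away from $0$, and taking $\Theta(k)$ independent popular statements makes $\Pr[\exists j:\ |B^*_j \cap (N\setminus V)| \ge n/k]$ close to $1$.

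Picking an instance in the intersection of the two good events, $\mathcal{A}$ outputs the fixed slate $W$, which represents only the agents in $V$, while the at least $n/k$ agents of $B^*_j \setminus V$ all approve the popular statement $\alpha^*_j$ yet approve nothing in $W$; this witnesses a violation of justified representation, hence of BJR (which implies JR). The corollary follows since $\frac{2}{k}e^{n/(12k)} = e^{\Omega(n^{0.01})}$ is super-polynomial when $k \in O(n^{0.99})$, so a polynomial-time process cannot stay within the query budget. The one genuinely delicate step is the joint calibration of $s$, $p$, and the cap $n/8$: $s$ must be large enough that only the (rarely forcing) queries with $r > s$ can extract a popular statement, yet small enough that $k$ generic statements leave a coalition of size $\ge n/k$ uncovered; and $p$ must be small relative to $n/8$ so that the hypergeometric tail in step (i) actually beats $e^{-n/(12k)}$, while being large enough to guarantee an ignored coalition in step (ii). I expect making these constraints mutually consistent\emdash and squeezing the tail bound down to the stated constant\emdash to be the main obstacle.
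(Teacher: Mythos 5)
Your overall strategy\emdash the probabilistic method, a sea of low-support statements that answer every generative query, and hidden popular statements whose supporters end up unrepresented\emdash is exactly the paper's. The one substantive difference is how the popular statements are randomized, and it is precisely the point where your proof is not yet closed. The paper does not use independent random supporter sets $B^*_j$ of size $p$; it assigns each agent an independent uniform \emph{color} in $\{1,\dots,k/2\}$ and makes each color class the support of one popular statement. Because the colors partition the agents, the final step becomes deterministic pigeonhole: any slate of unpopular statements covers at most $k\cdot\frac{n}{2k}=n/2$ agents, and the $\ge n/2$ uncovered agents split among $k/2$ colors, so some color has $\ge n/k$ uncovered agents ex post, with probability $1$. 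This frees the entire probability budget for the union bound over queries, which is what lets the bound go right up to $\frac{2}{k}e^{n/(12k)}$ queries: per query and color, $\Pr[\mathsf{Binomial}(n/8,2/k)\ge \frac{n}{2k}]\le e^{-n/(12k)}$, and the union over $k/2$ colors and the queries stays strictly below $1$.

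Your version needs the intersection of two probabilistic events, and the calibration you defer is not just tedious\emdash it is genuinely borderline. With $|S|\le n/8$, the mean overlap $|B^*_j\cap S|$ is at most $p/8$, and if you set the generic-support size to $s=(1+\delta)p/8$ with $p=cn/k$, the forcing probability has exponent $\approx h(\delta)\,\frac{c}{8}\cdot\frac{n}{k}$ while the uncovered-coalition condition requires $c\bigl(1-(1+\delta)c/8\bigr)>1$, whose left side is maximized at value $2/(1+\delta)$; so you need $\delta<1$, and as $\delta\to 1$ the best achievable exponent tends to a limit that only barely clears $\frac{1}{12}$ even with the sharp form of the Chernoff bound (the crude $e^{-\delta^2\mu/(2+\delta)}$ form provably cannot reach it). On top of that you must choose the number of popular statements large enough that some $B^*_j$ lands $\ge n/k$ agents outside $V$ with probability bounded away from the failure probability of the first event, which is awkward for small $k$. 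So your route can likely be pushed through with slightly weaker constants, but the coloring construction is the missing idea that makes the stated constants fall out cleanly; I would adopt it rather than fight the hypergeometric calibration.
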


\subsection{Structured Preference Settings}
\label{sec:theory:structured}

While the last section's lower bounds are potentially worrisome, a silver lining is that the instances we used to prove them were contrived. Our impossibility proofs were constructed by drowning popular statements in an overwhelming number of relatively unpopular statements: for any set of agents (of a given size), there was a statement that was well liked by only these agents and not by any other agent.
Since statements and preferences in the real world presumably have some structure, it seems highly implausible that such an abundance of orthogonal statements would exist for real-world populations.
Note that, by ``structure'' we are not referring to any fixed geometry of alternatives (in contrast to, say, spatial models of voting).
Instead, we only require that preferences do not have infinite ``complexity''.

To formally define this complexity, we introduce the notion of a \emph{statement space} $(\mathcal{U},\mathcal{F})$ which consists of a universe of statements $\mathcal{U}$ and a set of possible utility functions $\mathcal{F} \subseteq \mathbb{R}^{\mathcal{U}}$.
A statement-selection instance belongs to $(\mathcal{U},\mathcal{F})$ if its universe of statements is $\mathcal{U}$ and if each agent $i$'s utility function $u_{i}$ appears in $\mathcal{F}$.

To measure the complexity of a statement space, we borrow a fundamental complexity notion from learning theory, the \emph{VC dimension} \citep{Vapnik1998-fp}. We extend the definition of VC dimension to statement spaces in a natural way: 
The VC dimension of $(\mathcal{U},\mathcal{F})$ is the largest $d\in\mathbb{N}$, for which there exist $u_{1},u_{2},\dots,u_{d}\in\mathcal{F}$ such that, for any index set $\mathcal{I}\subseteq\{1,\dots,d\}$, there is a statement $\alpha\in\mathcal{U}$ and threshold $\theta\in\mathbb{R}$ such that $u_{i}(\alpha)\ge\theta$ for all $ i\in\mathcal{I}$ and $u_{i}(\alpha)<\theta$ for all $i\notin\mathcal{I}$. If no largest integer $d$ exists, the VC dimension is infinite. 
In other words, $d$ is the size of the largest set of participants, such that for any subset of participants there is a statement that has a utility above some threshold {\emph{only} for participants within} this subset.

This notion of VC dimension of a statement space $(\mathcal{U},\mathcal{F})$ is identical to the classic, learning-theoretic VC dimension of a hypothesis set $\mathcal{H}$, constructed as follows. We define a family of functions $h_{\alpha, \theta}$ that map the utility functions $u \in \mathcal{F}$ to binary labels as follows:
\[
h_{\alpha,\theta}(u) \coloneqq \begin{cases}
1 & \text{if }u(\alpha)\ge\theta\\
0 & \text{else}
\end{cases}
\]
That is, $h_{\alpha, \theta}(u)$ indicates whether an agent with utility function $u$ assigns a utility of $\theta$ or larger to a statement $\alpha$.
Then, the VC dimension $d$ of a statement space $(\mathcal{U},\mathcal{F})$ is identical to the classic, learning-theoretic VC dimension of the hypothesis set $\mathcal{H} \coloneqq \left\{ h_{\alpha,\theta} \,\middle|\, \alpha\in\mathcal{U},\theta\in\mathbb{R}\right\}$, consisting
of binary classifiers over $\mathcal{F}$.

It seems unlikely that $d$ would be huge in real-world settings,
as it would imply, for instance (assuming a one-dimensional simplification), that we could find a statement such that people that lie at
opposite sides of the space of opinions all support that statement, while people that lie in
the middle disagree with it.
If, hence, the VC dimension of the statement space is finite in realistic settings, we can obtain BJR even with size-constrained generative queries, as formalized by the following theorem.

\begin{restatable}{theorem}{thmvccjr} \label{thm:vccjr} Let $d$ be the VC dimension of the statement space and $\delta>0$ the maximum admissible error probability. Then, \cref{alg:cjrsampling} runs in polynomial time in $n,k$ (independent of $d$) and satisfies BJR with probability at least $1-\delta$ using $\disquery{\cdot}{\cdot}$ and $t$-$\genquery{\cdot}{\cdot}$ queries for $t \in O\big(k^{4}(d\!+\!\log\frac{k}{\delta})\big)$. \end{restatable}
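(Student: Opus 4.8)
The plan is to design \cref{alg:cjrsampling} as a sampling-based variant of \cref{alg:cjr}: keep the greedy, round-by-round structure, but in each round $j$, instead of issuing a single unbounded generative query on the current agent set $S$, draw a uniformly random subset $S_j \subseteq S$ with $|S_j| = t$ and issue bounded generative queries $t$-$\genquery{S_j}{r'}$ for a small, carefully chosen collection of ranks $r'$. Then use discriminative queries to evaluate each returned candidate statement against \emph{all} of $S$, add the best one (call it $\alpha_j$) to the slate, remove roughly $n/k$ agents from $S$ --- the agents that like $\alpha_j$ most, together with a small number of ``filler'' agents --- and recurse. Query complexity is then clearly polynomial: $k$ rounds, each using $O(t)$ generative queries and $O(nt)$ discriminative queries, and every generative query respects the size bound $t$.

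The analysis hinges on a uniform-convergence argument. Recall that the statement space has VC-dimension $d$ exactly because the set system $\mathcal{H} = \{\{i : u_i(\alpha)\ge\theta\} : \alpha\in\mathcal{U},\theta\in\mathbb{R}\}$ has VC-dimension $d$. By the VC $\varepsilon$-approximation theorem, a uniform sample of size $t = \Theta\big(\varepsilon^{-2}(d+\log\tfrac{1}{\delta'})\big)$ is, with probability at least $1-\delta'$, an $\varepsilon$-approximation of $S$ for $\mathcal{H}$: for every $\alpha$ and $\theta$, the fraction of $S_j$ agreeing with $\alpha$ at threshold $\theta$ is within $\varepsilon$ of the corresponding fraction in $S$. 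Instantiating with $\varepsilon = \Theta(1/k^2)$ and $\delta' = \delta/k$ yields the claimed $t = O\big(k^4(d+\log\tfrac{k}{\delta})\big)$, and a union bound over the $k$ rounds puts us, with probability at least $1-\delta$, in the ``good event'' that every $S_j$ is an $\varepsilon$-approximation of the corresponding $S$; condition on this henceforth. The key consequence is a two-way transfer: if a cohesive coalition of at least $\lceil n/k\rceil$ agents of $S$ all assign utility $\ge\theta$ to some $\alpha'$, then $S_j$ contains at least $\hat r := \lfloor (\lceil n/k\rceil/|S| - \varepsilon)\,t\rfloor$ such agents, so $t$-$\genquery{S_j}{\hat r}$ returns a statement $\alpha_j$ whose $\hat r$-th largest utility on $S_j$ is at least $\theta$; transferring back, at least $\lceil n/k\rceil - O(\varepsilon\,|S|)$ agents of $S$ assign utility $\ge\theta$ to $\alpha_j$, and this holds \emph{simultaneously} for all $\alpha'$ and $\theta$ by uniformity. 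Choosing $\alpha_j$ among the candidates to maximize the appropriate order statistic over $S$ ensures it inherits this guarantee.

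With this in hand, we run the BJR argument of \cref{prop:cjr} with a twist. As before, send every agent removed in round $j$ to $\alpha_j$; removal counts are chosen to keep the matching balanced. Fix a would-be violating configuration --- equivalently, a statement $\alpha'$ and threshold $\theta$ with at least $\lceil n/k\rceil$ agents of utility $\ge\theta$ --- and track the cohesive group $B$ as rounds proceed. In every round $j$ where $B$ is still ``large'' ($|B\cap S|\ge\lceil n/k\rceil$), the chosen $\alpha_j$ is agreed with at threshold $\theta$ by at least $\lceil n/k\rceil - O(\varepsilon|S|)$ agents of $S$, so every member of $B$ that lies among those top agents and is removed in round $j$ is happy; and once $B$ shrinks below $\lceil n/k\rceil$ members, fewer than $\lceil n/k\rceil$ of its members remain to be possibly unhappy. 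The one remaining danger is the filler agents removed in the large-$B$ rounds, of which there are $O(\varepsilon|S|)$ per round and hence $O(\varepsilon n k) = O(n/k)$ in total; choosing fillers in each round to be the agents with the \emph{smallest} representation demand (itself estimable through the sample) ensures a filler can belong to a cohesive group that is large at the time of its removal only if that group's threshold is correspondingly small, and a counting argument over cohesive groups bucketed by threshold then keeps the total number of unhappy members of any $B$ below $\lceil n/k\rceil$. Making this last step go through --- controlling the accumulation of the per-round $O(1/k^2)$ slack so that BJR's \emph{exact} proportionality threshold is never breached --- is the crux; everything else combines the greedy analysis of \cref{prop:cjr} with textbook VC sample-complexity bounds.
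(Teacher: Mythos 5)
Your overall strategy---sample a bounded subset each round, invoke uniform convergence over the VC class $\{\{i : u_i(\alpha)\ge\theta\} : \alpha\in\mathcal{U},\theta\in\mathbb{R}\}$ with $\varepsilon=\Theta(1/k^2)$ and $\delta'=\delta/k$, and rerun the greedy BJR argument---matches the paper's. But your final step has a genuine gap, and you flag it yourself as ``the crux.'' By insisting on removing exactly $\lceil n/k\rceil$ agents per round, you are forced to pad each round's removal set with $O(\varepsilon|S|)$ ``filler'' agents who are \emph{not} guaranteed utility at least the round's threshold. You then need to argue that no cohesive coalition of size $\ge n/k$ can have \emph{all} of its members end up either as fillers or in rounds that do not serve it. The proposed mechanism (choose fillers by ``smallest representation demand,'' then a ``counting argument over cohesive groups bucketed by threshold'') is not an argument: a coalition's witnessing statement $\alpha'$ and threshold $\theta$ are unknown to the algorithm, an agent can belong to many overlapping cohesive groups with different thresholds, and nothing in the sketch prevents the $\Theta(n/k)$ total fillers from being exactly the members of one deserving coalition. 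As written, this step would not go through.

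The paper sidesteps the problem entirely with a small but decisive design change: each round removes only $r_j \approx n(\tfrac1k - 2\varepsilon)$ agents---strictly fewer than $n/k$---and the uniform-convergence bound guarantees that \emph{every} removed agent has utility at least the round's threshold $\theta_j$ for $\alpha_j$ (no fillers exist). After $k$ rounds at most $2kn\varepsilon = n/(2k) < n/k$ agents remain unassigned; they are distributed arbitrarily just to balance the matching $\omega$. Since a violating coalition has size $\ge n/k$, it cannot be contained in this residue, so it has a first member removed in some round $q$; at that point the coalition still lies entirely in $S_q$, the sample detects its support, hence $\theta_q\ge\theta$ and that member is represented at level $\ge\theta$, killing the violation (BJR only requires one satisfied member). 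If you adopt this ``under-fill each round, dump the $<n/k$ residue at the end'' device, the rest of your argument---including your use of an $\varepsilon$-approximation guarantee, though note the paper must prove a without-replacement version of it via transductive Rademacher complexity rather than citing the i.i.d.\ textbook bound---goes through essentially as in the paper.
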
 
The proof of this theorem can be found in \cref{app:proofs}.
The process that achieves this result, \cref{alg:cjrsampling}, is an adaptation of \cref{alg:cjr}. The key difference (\cref{ln:def2X,ln:def2Y,ln:def2alpha}) is that here we run $\genquery{Y}{\cdot}$ on a random subset $Y\subseteq N$ of the agents. Importantly, the size of this subset does not grow with the total number of agents $n$.

\begin{algorithm}[t]
\definecolor{highlightColor}{rgb}{0.156, 0.208, 0.576}

\DontPrintSemicolon \makedemproc \label[demproc]{alg:cjrsampling}

\textbf{Inputs}: agents $N$, slate size $k$, \textcolor{highlightColor}{VC dimension $d$, error probability $\delta$}\;

{\color{highlightColor}
    $n_{x}\gets O\left(k^{4}\,(d+\log(k/\delta))\right)$\;

$\epsilon\leftarrow\frac{1}{4k^{2}}$

    $\bar{r}_{x}\gets n_{x}\left(\frac{1}{k}-\epsilon\right)$\;

    $\bar{r}\gets n\left(\frac{1}{k}-2\epsilon\right)$\;
}
$S\gets N$\;

$W\gets\emptyset$\;

    \For{$j=1,2,\dots,k$}{

    {\color{highlightColor}
$X\gets$ draw $n_{x}$ agents from $N$ without replacement\;\label{ln:def2X}
    
    $Y\gets X\cap S$\;\label{ln:def2Y}

$\alpha\leftarrow\pg{\color{highlightColor}\begin{cases}
    \genquery{Y}{\left\lceil \bar{r}_{x}\right\rceil } & \text{if \ensuremath{|Y|\ge \bar{r}_{x}}}\\
\text{some arbitrary \ensuremath{\alpha\in\mathcal{U}}} & \text{else}
\end{cases}}$\;\label{ln:def2alpha}
    }

$W\gets W\cup\{\alpha\}$\;

$r\gets\pg{\begin{cases}
    \left\lceil \bar{r}\right\rceil  & \text{if \ensuremath{j\leq n-k\left\lfloor \bar{r}\right\rfloor }}\\
    \left\lfloor \bar{r}\right\rfloor  & \text{else}
\end{cases}}$

$T\gets$ the $r$ agents in $S$ with largest $\disquery{\cdot}{\alpha}$

$S\gets S\setminus T$\; }

    \Return $W$\; \caption{Democratic Process for BJR with Size-Constrained Queries. \\(differences with \cref{alg:cjr} are highlighted in color)}
\end{algorithm}

To illustrate the power of this theorem with a simple example, suppose that opinions on a discussion topic vary along three dimensions, say socially conservative vs.\ liberal, fiscally conservative vs.\ liberal, and religious vs.\ secular.
Suppose furthermore that agents and statements can be represented as points in this three-dimensional space, such that the utility $u_i(\alpha)$ is a (strictly monotonically decreasing) function of the Euclidean distance between the agent $i$ and statement $\alpha$ in this space.
Then, the hypothesis set $\mathcal{H}$ (as introduced above) of this statement space is just the set of all spheres in $\mathbb{R}^3$, which is well known to have VC dimension $d=4$~\cite[e.g.,][p.\ 122]{BHK20}.
Hence, \cref{alg:cjrsampling} produces BJR slates (up to a failure probability below $10^{-6}$) using $t$-$\genquery{\cdot}{\cdot}$ queries with $t \le \text{const} \cdot  k^4 \, (4 + \log (k) + 6\cdot\log(10))$.
If $n$ is large, this $t$ is much smaller than the lower bounds on $t$ that are implied by \cref{prop:jrlower,thm:jrlowerpolytime} when we assume an unstructured statement space.

Importantly, \cref{thm:vccjr} extends to far more complicated preference structures, and it does not require the structure to be known, but only (an upper bound on) the VC dimension. If, for example, the set of statements $\mathcal{U}$ consists of all sequences of $w$ many words in English (which has below $10^6$ words), a naive upper bound on the VC dimension of the statement space is $d \leq \log_2(|\mathcal{U}|) \leq w \, \log_2(10^6)$.
Thus, $t \le \text{const} \cdot  k^4 \, (w + \log (k))$ suffices to virtually guarantee BJR.

In summary, despite the negative worst-case results from \Cref{sec:theory:tqueries}, it is highly likely that relevant statement spaces in reality have enough structure to allow for a BJR guarantee with high probability and a relatively small number of queries, which is \emph{independent of the number of agents $n$}.
This means that we can scale the democratic process to any number of participants, say to a national audience, even when using an LLM with bounded context window size.

\section{Second Component: Empirical Validation of Queries}\label{sec:experiments}
We established in the previous section that, with access to \emph{perfect} generative and discriminative queries, we can guarantee BJR. 
In this section, we describe how we implement these queries as subprocedures interfacing with an LLM, and we empirically study how well our implementations approximate the idealized queries.

\paragraph{Evaluation Data.}
{
To evaluate the query implementations, we use the data collected in our pilot studying public opinion on abortion, which we discuss in detail in \cref{sec:pilot} and \cref{app:experiments}.
The dataset consists of survey responses by a sample of 100 US residents.
Each participant extensively describes their views on abortion in free-form responses to multiple questions, and summarizes their opinion in a self-contained statement.
Furthermore, each participant rates five example statements, each of which expresses a distinct position on how society should approach abortion in three sentences.\footnote{{These statements were generated using GPT-4o, instructed to generate views of US residents, and to make them broadly appealing. The prompt and statements can be found in \cref{app:one-shot}.}}
We elicited these ratings by asking participants ``how well does this summary capture your viewpoint on abortion?'' Participants were then asked to choose a rating on a 7-point scale with the levels ``very poorly'' (0), ``poorly'' (1), ``moderately'' (2), ``well'' (3), ``very well'' (4), ``excellently'' (5), and ``exceptionally'' (6).\footnote{{
We adopt this unbalanced Likert scale to encourage participants to be more discerning, allowing room for distinctions between statements they generally agree with and those that align with their views in greater detail and depth.
}}
We also asked participants to explain their rating in free text, by mentioning parts of the statement they agreed with, disagreed with, or parts that could be added or made more concrete.
We equate ratings with utilities; e.g., an agent $i$ rating a statement $\alpha$ with ``moderately'' means that $u_i(\alpha)=2$.}
Note, however, that the choice of numerical values is largely inconsequential given that \cref{alg:cjr} is invariant to monotone transformations of the rating scale.

\subsection{Discriminative Queries}
Our implementation of the discriminative query $\disquery{i}{\alpha}$ takes as input agent $i$'s  survey responses and the statement $\alpha$, and returns a prediction of the rating $u_i(\alpha)$.
{We implement these queries with a single call to OpenAI's GPT-4o model, as follows.
In the system prompt, we instruct the model to act as a text completion tool for a (hypothetical) user filling out an opinion survey, and ask the model to predict the user's most likely next response.
The prompt itself first lists the free-form questions along with participant $i$'s answers.
Then, it lists the rating questions, along with participant $i$'s rating and explanation.
Finally, we add one more rating question for the statement $\alpha$, and ask the model to predict the user's numeric rating level between 0 and 6. See \cref{app:discriminative_query} for details.
}

{This prompt design strikes us as promising in several ways.
The participant's free-text responses give the model in-depth information about their general thinking about abortion.
The participant's ratings of other statements serve as few-shot examples, which can help calibrate the model's prediction with the user's usage of the seven-point scale.
Finally, since the LLM's response consists of a single token (i.e., an integer between 0 and 6), we can interpret the model's token probabilities as a probability distribution capturing the model's uncertainty about $u_i(\alpha)$.
In our implementation of $\disquery{i}{\alpha}$, we return the expected $u_i(\alpha)$\emdash a real number between 0 and 6\emdash which will be used in our algorithms.
An important advantage of using the expected value (rating rather than, say, the mode) is that this reduces the prevalence of ties when \cref{alg:cjr} chooses which agents to remove from consideration.
}

\begin{figure}[htb]
\centering
\includegraphics[width=.5\textwidth]{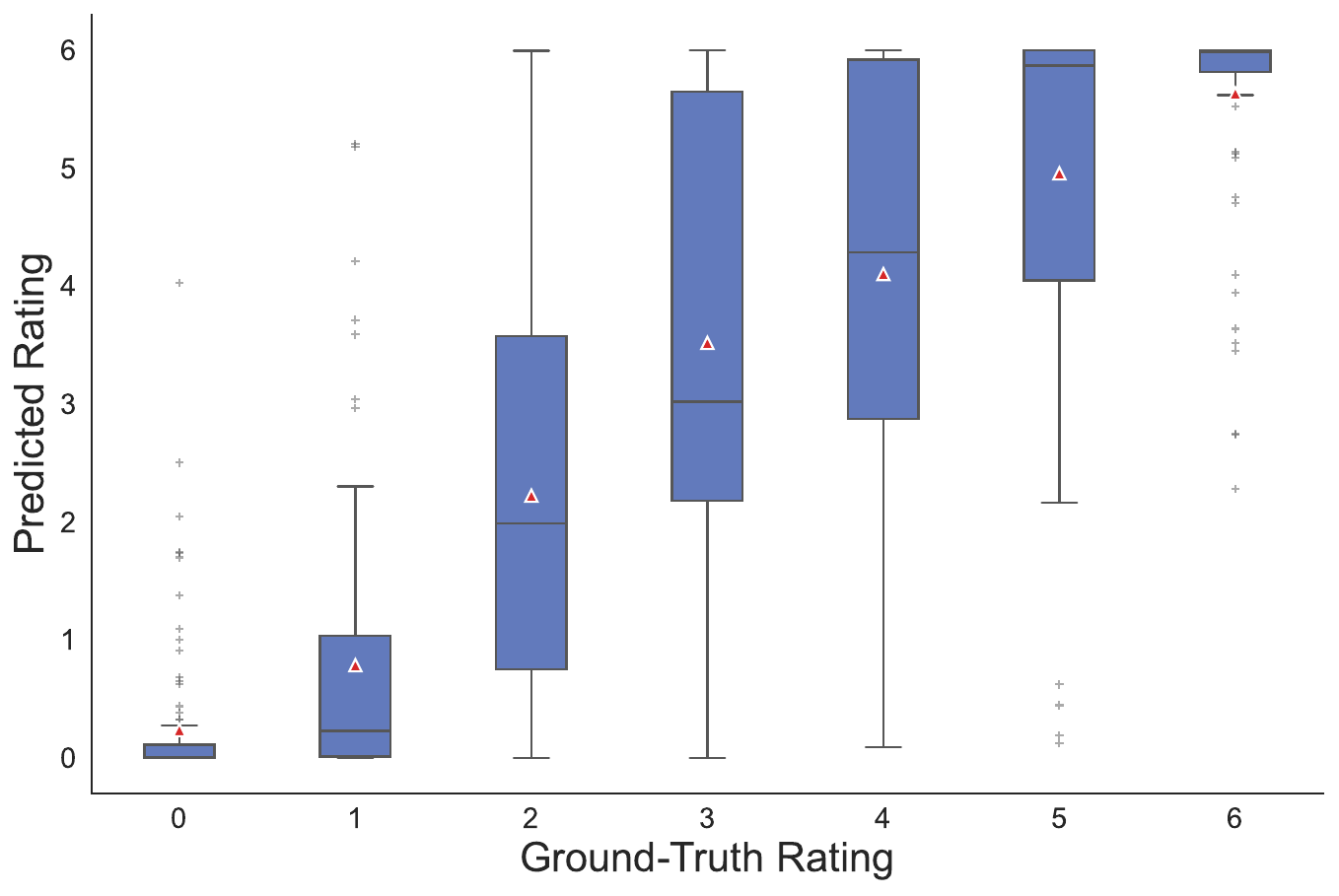}
\caption{{Distribution of discriminative query predictions on the five held-out statements for each of the 100 participants. The x-axis shows the rating level selected by the participant, the y-axis the distribution of predictions. Means are represented by triangles.}}
\label{fig:boxplot_disc_generation}
\end{figure}
\begin{figure}[htb]
\centering
\includegraphics[width=.55\textwidth]{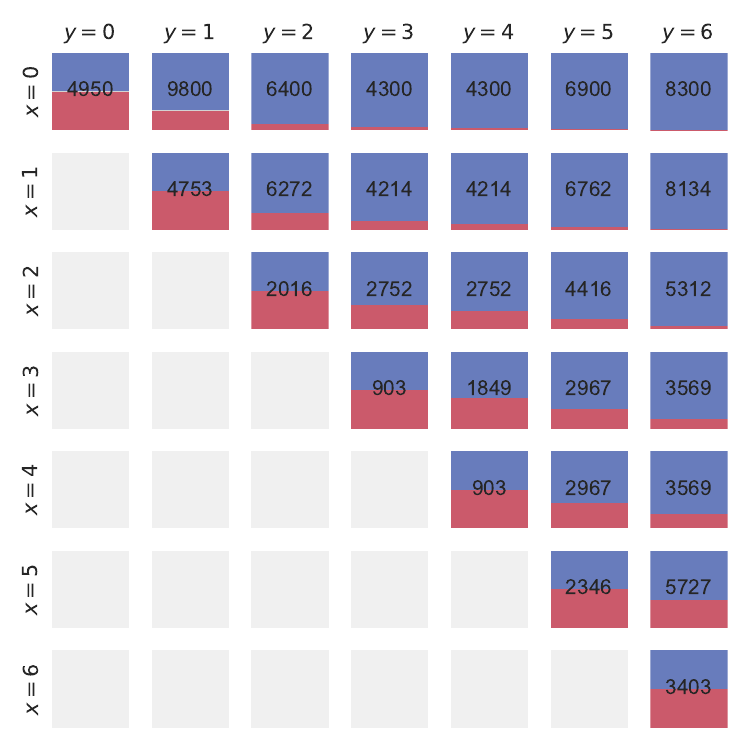}

\begin{minipage}{\textwidth}
    \centering
    \includegraphics[width=0.7\textwidth]{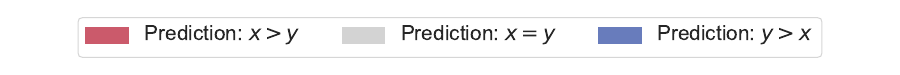}
\end{minipage}

\caption{{Predicted preference ordering for pairs of participant and held-out statement. Cell $(x, y)$ aggregates all pairs where the first participant rates the first statement at level $x$ and the second participant rates the second statement at level $y$. Colorful areas indicate in what fraction the discriminative query for the first pair gives a higher, equal, or lower prediction than the discriminative query for the second pair. The number within each cell indicates the number of relevant pairs. The lower half of the grid is symmetric to the upper half and therefore omitted.}}
\label{fig:comparisons_disc_generation}
\end{figure}

{To evaluate this discriminative-query implementation, we measure how accurately it predicts a participant's rating of a randomly-chosen example statement {(holding out this statement in the prompt, but including the other four example statements)}.
\Cref{fig:boxplot_disc_generation} displays the result of this analysis, for all 100 participants and all {5 choices of held-out statement (hence a total of 500 data points)}.
We observe that the means of the predictions are roughly calibrated, and that the 25th, 50th, and 75th percentile of the prediction distribution increase with the ground-truth rating.
The mean absolute error of the predicted rating is 0.93, less than one point on the 7-point scale.}

{
Since our democratic process is unaffected by monotone transformations of the rating scale, what matters is that the order of the predicted ratings is close to the order of the true ratings. To assess this, we consider pairs of predictions (each corresponding to a participant $i$ and a held-out statement $\alpha$) and check if $\disquery{i_1}{\alpha_1}$, $\disquery{i_2}{\alpha_2}$ have the same order as $u_{i_1}(\alpha_1)$, $u_{i_2}(\alpha_2)$.
Across all pairs with different ground-truth ratings, our discriminative query correctly orders 86.7\% (counting ties as one-half), far exceeding the 50\% accuracy of a constant predictor.
\Cref{fig:comparisons_disc_generation} breaks down this analysis based on the pair of ground-truth ratings.
We see that errors become less prevalent as we move away from the diagonal, which means that the discriminative query rarely produces the wrong ordering when the ground-truth ratings differ by more than one rating step.

One might wonder if this accuracy is explained by our query merely distinguishing popular from unpopular statements (or participants inclined to give high ratings from those who are not), but this is not the case:
when restricted to comparison pairs involving the same statement or the same participant, the accuracy remains about the same (83.3\% and 87.4\%, respectively), which shows that our discriminative query indeed captures the interaction between participant and rated statement.
}

\subsection{Generative Queries}
{We now turn to our implementation of the more ambitious generative query. Instead of passing the participants' free-form responses verbatim to the LLM, we first summarize each participant's free-form responses using an LLM.\footnote{{This summary consists of three bullet point lists: aspects of abortion most important to the participant, specific details and examples they mention, and details about the user's background. In addition, we include a global summary of the participants' opinion in two to three sentences. The summarization prompt can be found in \cref{app:summarization}.}}
In a previous version of this work, we adopted this summarization step to fit all 100 participants into the 32K token context window of an older version of GPT-4, allowing us to run \cref{alg:cjr} rather than the sampling-based \cref{alg:cjrsampling}.
Even with GPT-4o's expanded context window of 128K tokens, however, we retain this summarization since it might make useful information in participants' responses more accessible to the LLM and might reduce opportunities for bias based on participants' writing styles.
We do not include the responses to rating questions in the summaries, to avoid biasing the generation towards the example statements.
}

{We initially implemented the generative query with a single LLM call (given the summaries), but encountered challenges that persuaded us to adopt a multi-component design instead.
Most importantly, we did not succeed in prompting the LLM to effectively maximize the objective of the query $\genquery{S}{r}$, i.e., the $r$-highest utility among $S$.
For example, the statement generated by the LLM did not seem sufficiently responsive to the rank $r$.

Conceptually, the query $\genquery{S}{r}$ can be decomposed into two subtasks: \textbf{(i)}~identifying a subset $S' \subseteq S$ of $r$ agents amenable to agreeing on a statement, and \textbf{(ii)}~generating a statement that maximizes the minimum utility in $S'$.
When we asked the LLM to answer generative queries by performing these two subtasks in a chain of thought, the LLM seemed to struggle with subtask~(i) but to perform well on subtask~(ii).

Due to the LLM's difficulties in performing subtask~(i), our implementation of this subtask combines an LLM-generated feature embedding of agents with clustering, a more robust way of identifying aligned agents.
To generate the features, we randomly sample 50 of the bullet points occurring in the LLM-generated participant summaries, and then ask the LLM to rate the degree to which each agent is aligned with each feature on a 7-point scale.\footnote{{As in the discriminative query, we use the token probabilities to get an expected rating that need not be an integer.}}
These ratings embed each agent in a 50-dimensional Euclidean space, in which we use nearest-neighbor and balanced k-means clustering to identify cohesive sets of agents of a specified size.

Not only does subtask~(ii) more closely resemble a straightforward text generation task, which should play to the LLM's strengths, but there is also precedent in the literature for solving similar problems with LLMs: \citet{BCST+22} fine-tune an LLM to generate \emph{consensus opinions}, i.e., statements that are good compromises for a small group of agents.
Our setting is more challenging along several dimensions: our generation is based on much more detailed information about participants' opinions,\footnote{{In our experiment, participants wrote between 124 and 1324 words
(57––868 words excluding the the justification of ratings) within a median time of 26 minutes, whereas participants in \citet[App.\ B]{BCST+22} wrote between 10 and 200 words within a 5-minute limit on writing time. 
}}
and we aggregate over groups of up to twenty participants rather than up to five.
In addition, we aim to solve this task by prompting a general-purpose LLM rather than fine-tuning, albeit an LLM that is more advanced than the Chinchilla LLM used by \citet{BCST+22}.

We implement subtask~(ii) with a single chain-of-thought prompt to the LLM.
We instruct the LLM to write a statement with the goal of maximizing the minimum agreement among the given agents, to include points of agreement or likely agreement, and to avoid aspects that any of the agents fundamentally disagrees with.
Our chain-of-thought prompt first asks the LLM to identify common themes of participants' responses and key disagreements, and to judge for each aspect of the topic whether and how it should be included in the generated statement.
Only then should the LLM generate the statement, formulated as an opinion in the first person. See \cref{app:generative_query} for more details.

We implement the generative query as an ensemble, where we select several sets of agents through subtask~(i), apply the prompt from subtask~(ii) to turn each set into a statement, evaluate the objective value (see \cref{eq:generativedef}) of the resulting statements using the discriminative query, and then select the highest-scoring statement.
For our pilot experiment, our ensemble contains the following statement sources in each iteration of \cref{alg:cjr}:\footnote{{We also implemented a procedure to detect statements that did not express a subjective opinion, and filter them out from consideration in the ensemble. We took this measure after observing some LLM generations in our pre-tests that did not take a personal stance (e.g., ``Abortion is a sensitive and controversial topic that evokes strong emotions and differing opinions...''). Since none of the generated statements in our pilot experiment were flagged, and since we did not filter in the experiments described in this section, we omit a more detailed description.}}
\begin{itemize}
    \item We run balanced k-means clustering \citep{BBD00} to partition the remaining agents into clusters of size $n/k$, which yields $k$ sets of agents in the first round, $k-1$ in the second round, \dots, and one set in the last round.
    \item We also add a few smaller groups of agents to the ensemble, which are determined by sampling a random remaining agent and returning their nearest neighbors in the feature space. Specifically, we generate two sets of five agents and two sets of ten agents in this way. Given that these groups have fewer than $n/k$ members, we expect them to be more cohesive in their opinions, so subtask~(ii) is likely to generate more specific statements.
    \item We also allow our ensemble to select unused statements from previous calls to the generative query. This might surface good statements at no additional computational cost, since we have already invoked all the necessary discriminative queries.
\end{itemize}
}

{
Evaluating the generative queries quantitatively is challenging for several reasons.
First, since the 100 participants never see the statements generated from their responses, we do not know their real ratings for these statements and have to rely on the discriminative queries as a proxy.
Second, since the optimization over all possible statements in the idealized generative query (see \cref{eq:generativedef}) cannot be computed in practice, we lack a ground truth for how far our implementation is from the ideal query.
We can, however, compare our generated statements to a natural baseline, the statements written by the participants we are summarizing.

In light of the similarity of subtask~(ii) with the work of \citet{BCST+22}, we use an analogous evaluation scheme for our generative query implementation:
We randomly sample 1000 groups of four from the 100 participants and then apply our LLM prompt several times to generate candidate consensus statements.
(Since we query GPT-4o with a temperature of 1 for our generative prompts, the generated statement is random.)
We then compare these generated statements to the four statements written by each of the sampled participants.\footnote{{As mentioned above, these comparisons are made using our discriminative query, rather than real ratings.}}
Participants prefer a randomly-selected generated statement over the statement written by a random peer with 74.5\% probability (95\% confidence interval (CI): 73.8\% -- 75.2\%),\footnote{{In the following, all confidence intervals have 95\% confidence. Intervals for simple binomial proportions are computed with Jeffrey's intervals; other confidence intervals are computed via bootstrapping.}} and their mean score across three generated statements exceeds the mean score for the three peer statements with 87.2\% probability (CI: 86.0\% -- 88.4\%).
Though a comparison to the numbers by \citet{BCST+22} is not on equal terms,\footnote{{Most importantly, their comparisons are based on ground-truth preferences by the participant rather than a discriminative query/reward model. This makes their comparison harder in some ways (potentially higher variance in human responses, errors in the reward model). On the other hand, our statements come directly out of the LLM, whereas each of their generations is the best out of 16 generations as judged by reward models for the participants, which should help them. \citet{BCST+22} evaluate over a variety of policy questions with less detailed preferences, whereas we focus on one issue in detail. Finally, they discard a third of participants based on intra-rater reliability, whereas we only filter out a few participants with LLM-written or extremely low-effort responses.
}} this number (87.2\%) compares favorably to their reported 78\%.

Most relevant to the quality of subtask~(ii) of the generative query is the \emph{minimum} rating across the four participants.
As \cref{fig:win_rate} shows, a randomly-selected generated statement beats \emph{all four} participant statements with respect to the minimum utility with 61.1\% probability (CI: 58.0\% -- 64.1\%), and the best of four generated statements beats all four participant statements with 79.0\% probability (CI: 76.4\% -- 81.4\%).
This shows that, as judged by the discriminative query, our prompt for subtask~(ii) finds good consensus statements, clearly exceeding the baseline of taking the most broadly acceptable participant position, especially when we ensemble over a few generations.
}

\begin{figure}[t]
\centering
\includegraphics[width=.5\textwidth]{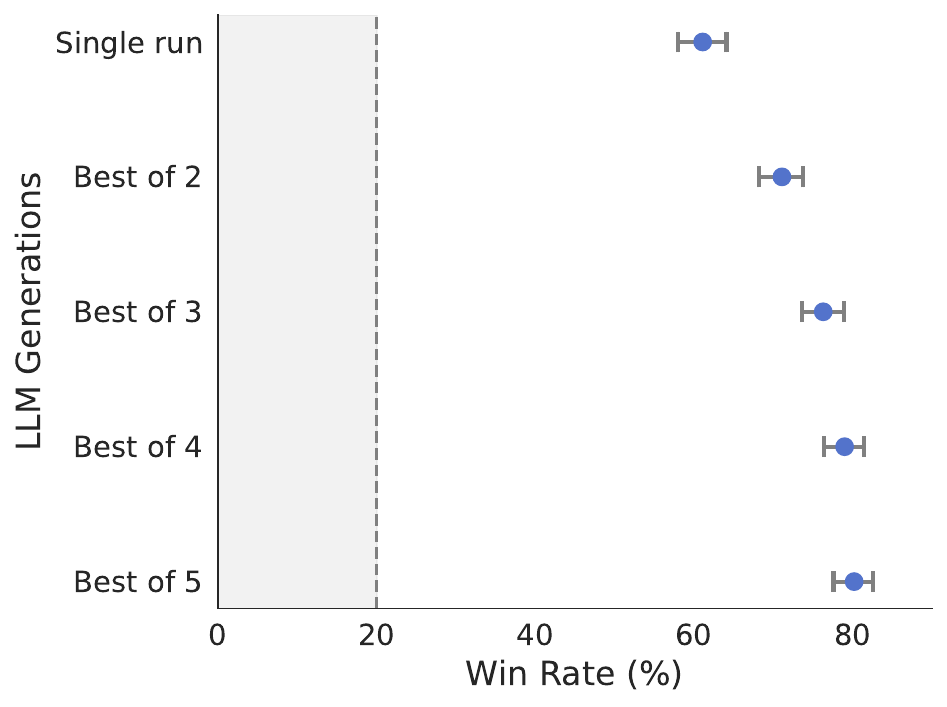}
\caption{{Win rate of LLM-generated statements over all four participant-written statements, in terms of the minimum predicted utility across the four participants. The dashed line indicates a win rate of 20\%, which would be obtained if the generated statement followed the same distribution as the human-written statements. Win rates are computed for 1000 random groups of four participants, and ranges indicate 95\% confidence intervals.}}
\label{fig:win_rate}
\end{figure}

{For subtask~(i)
we verify that, by ensembling over a few clusters, we consistently produce statements with high agreement rating.
For this experiment, we randomly draw 80 out of the 100 agents and attempt to find a statement that maximizes the 20th-highest rating.
This scenario simulates what is required of the generative query in the second round of running \cref{alg:cjr} with $n=100, k=5$. 
Recall that one source of statements in our ensemble samples a uniformly random agent, takes their 9 nearest neighbors, and then generates a statement for these 10 agents. In this experiment, we run this process four times, so we get four clusters and hence four statements in our ensemble. 
We run this experiment for 50 scenarios and show the results in \cref{fig:gen_query_eval}.

For about half of the sampled scenarios, even a single generation produces a statement whose top-20 rating is essentially maxed out, but there is still a decent chance of a poor generation (possibly due to the cluster's opinions not being well aligned with the preferences of participants outside of the cluster).
By ensembling over several clusters, we eliminate these bad generations, and almost always arrive at a statement for which at least 20 participants have close to the maximal rating.
This experiment shows that our implementation of the generative query, composed of both subtasks, produces high-quality statements.

\begin{figure}[t]
    \centering
    \includegraphics[width=0.5\textwidth]{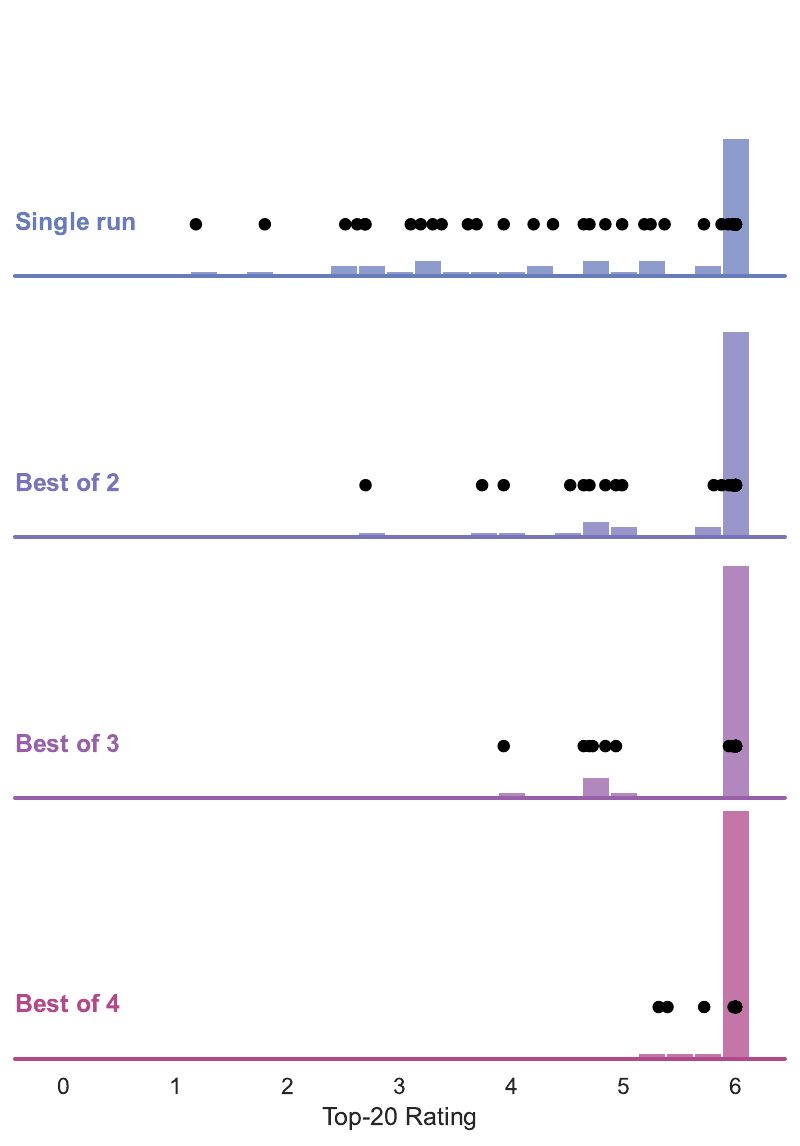}
    \caption{{Distribution of the top-20 rating of LLM-generated statements on 50 randomly subsampled sets of 80 agents, across different numbers of generation attempts.
    Statements were generated by applying the generation prompt to one, two, three, or four nearest-neighbor clusters of size 10, and taking the statement with highest top-20 rating.}
    }
    \label{fig:gen_query_eval}
\end{figure}

}

\section{A Study of Public Opinion on Abortion}
\label{sec:pilot}
{We piloted our democratic process to summarize public opinion on abortion.}
We ran surveys studying this topic on {August 8--9, 2024 and generated a representative slate of five statements.}

\subsection{Pilot Description}
We first recruit {a sample of} 100 participants through the online platform Prolific\footnote{\url{https://www.prolific.com/}}, which we refer to as the \emph{generation sample}.
{Our sample consists of US residents aged 22 and higher, stratified to reflect US residents in terms of gender and voting behavior in the 2020 presidential election.}\footnote{For more details on the demographic composition of the sample, see \cref{app:demographics}.}
{We ask these participants to complete a survey on abortion.}\footnote{See \cref{app:survey} for the verbatim survey questions.}
{Participants start reflecting on the topic by answering four questions in free-form text: how often they think about or discuss abortion, whether abortion should be legal or illegal, what their opinions on abortion are rooted in, and whether there are situations where they are uncertain about whether abortion is appropriate.
Then, we ask participants to summarize their stance on how society should deal with abortion, in a self-contained three-sentence statement.
}

We also ask participants to rate their agreement with {five} example statements, which we generated with a single call to {GPT-4o} and without knowledge of participant responses, {(see \cref{app:zero-shot} for the prompt).}
These ratings are given on the {seven}-level scale described at the beginning of \cref{sec:experiments}, and rating questions are shown to participants in random order.

{Based on participant responses, we extract a slate of five representative statements using \cref{alg:cjr} and our implementation of the queries.
As a baseline, we also ask GPT-4o to directly generate a slate of five statements, providing the LLM with the free-text responses of all 100 agents in its context window (see \cref{app:one-shot} for details).}

To evaluate {our} slate, we launch a second survey with a new set of 100 stratified participants, the \emph{validation sample}, to evaluate the slate's statements.
In this validation survey, we ask {participants to rate the ten statements from both our and the baseline slate} (in random order, using the same question format as for the ratings in the generation survey).
For reproducibility, and to support future research on online participation, we made participants' full responses publicly available at \url{https://github.com/generative-social-choice/survey_data/}.

\subsection{Results}
\label{sec:pilotresults}
{We defer the slate of five statements we generated to \cref{app:ourslate}, along with details about which components of the ensemble generated each statement.}\footnote{{When running our process on this data, calls to GPT-4o occasionally did not return correctly-formatted output during subtask (ii) of the generative query. These generations were simply ignored, occasionally leading to slightly smaller ensembles. Our ensemble approach is robust to these failures, and fixing them (by, e.g., retrying until a correctly-formatted output is returned) would only improve our results.}}
{Three of the statements express a clear pro-choice position and one statement a clear pro-life position (possibly with exceptions).
The remaining statement expresses discomfort with abortion, but advocates for the legality in cases of special hardship and for pursuing {measures outside the legal arena} to reduce the frequency of abortions.
This split of the slate is broadly in line with polls of US residents, which find that ``about six-in-ten (63\%) say abortion
should be legal in all or most cases'' \citep{Pew24}.
In \cref{app:ourslate}, we also check that the slate is plausibly proportional to the generation sample, by relating the statements on our slate to the example statements in the generation survey.
}

In the remainder of this section, we demonstrate the representativeness of the slate using the validation sample, a fresh sample of 100 US residents.
According to the ideal of proportional representation, each statement in our generated slate should represent 20\% of the US population as accurately as possible.
Following this principle, we match the participants of our validation sample to the statements of our slate such that each statement has 20 participants matched to it and {the mean participant rating} for their assignment is maximized, i.e., such that the balanced assignment maximizes the representation objective of \citet{Mon95}.
We then study the ratings of participants for their assigned statements.

\begin{figure}[h!]
    \centering
    \begin{minipage}[b]{0.47\textwidth}
        \centering
        \includegraphics[width=\textwidth]{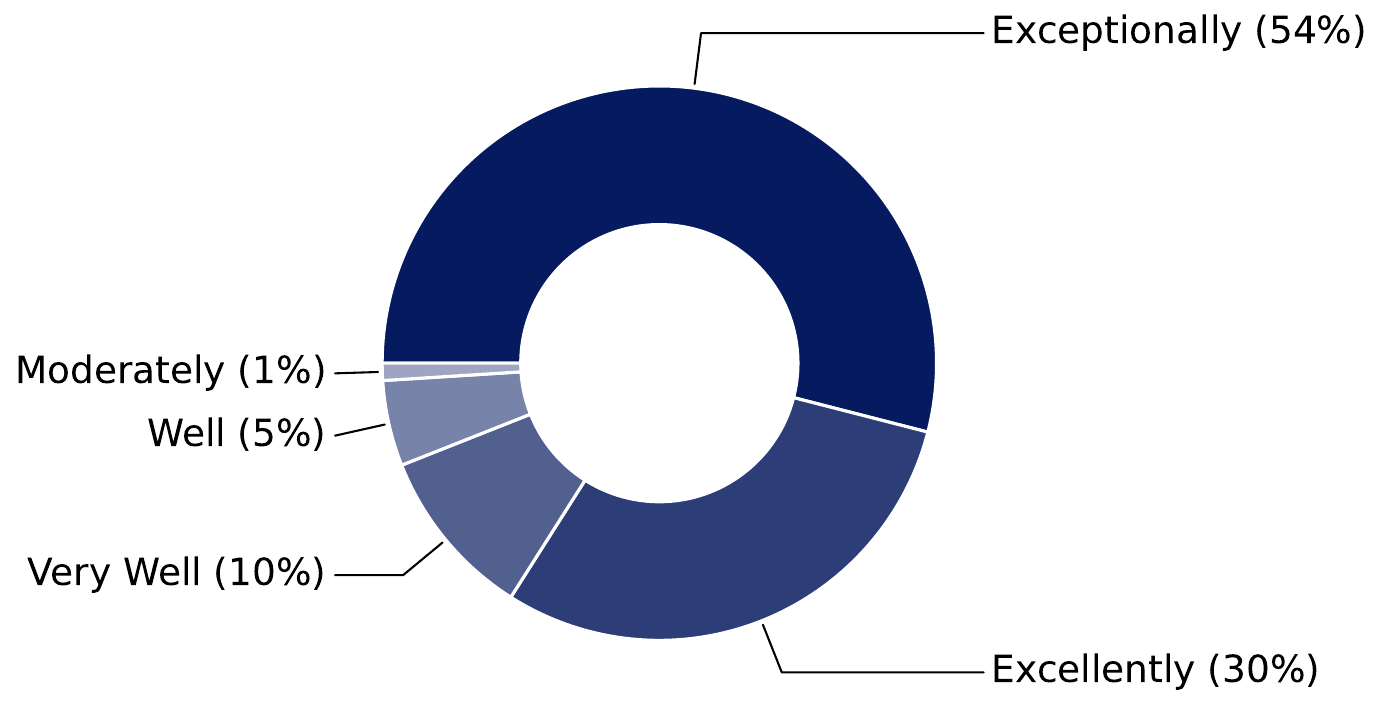}
        \caption*{Our Slate}
    \end{minipage}
    \hfill
    \begin{minipage}[b]{0.47\textwidth}
        \centering
        \includegraphics[width=\textwidth]{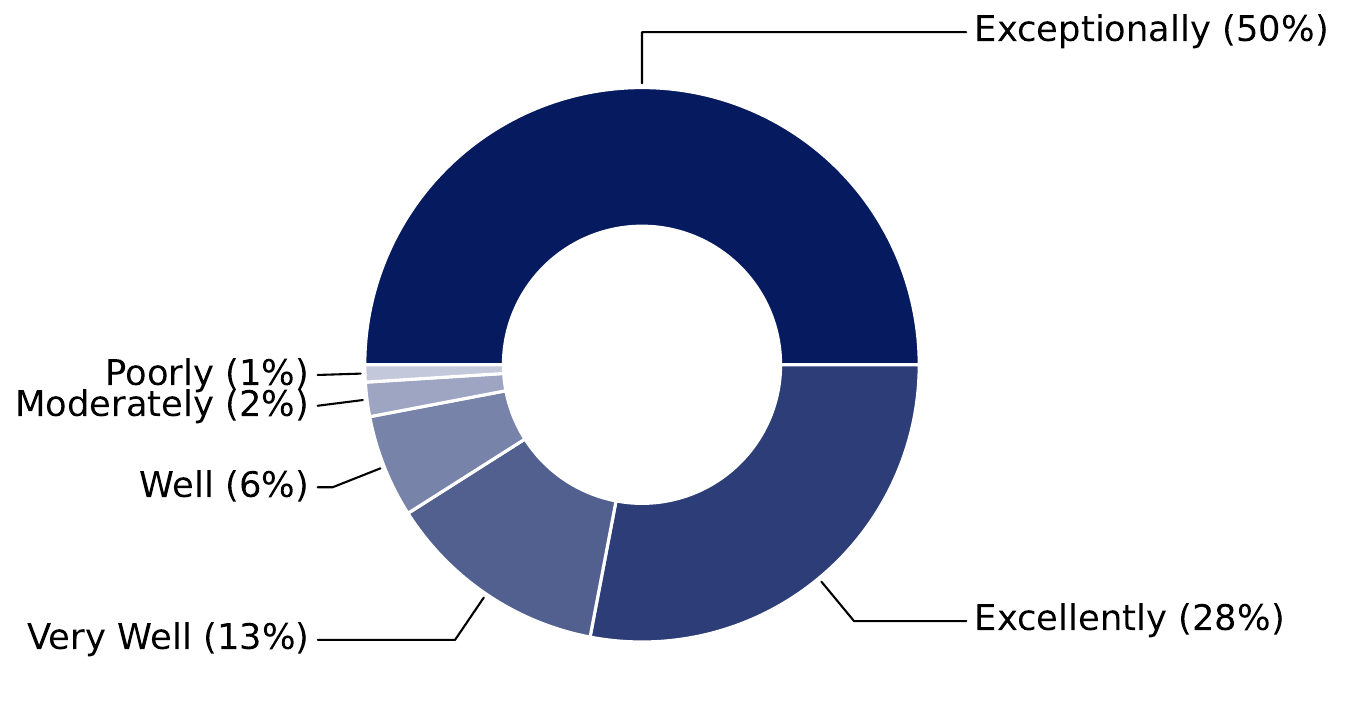}
        \caption*{Baseline Slate}
    \end{minipage}
    
    \vspace{0.5cm}
    \begin{minipage}{\textwidth}
        \centering
        \includegraphics[width=0.75\textwidth]{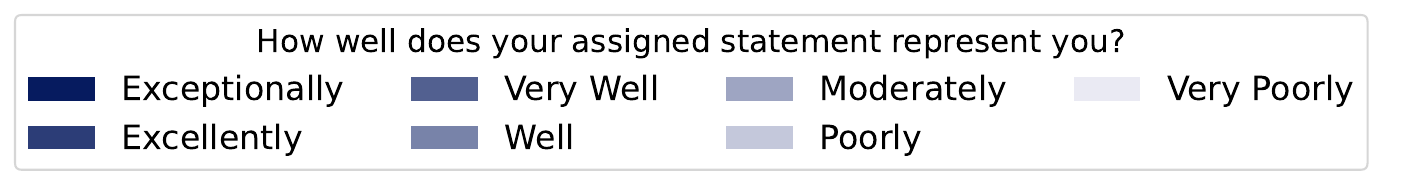}
    \end{minipage}

    \caption{Ratings of participants from the validation survey for their \emph{assigned} statement.}
    \label{fig:pie-chart}
\end{figure}

{We find that the mean rating of a participant for their assigned statement is 5.31, between the two highest level of our rating scale, ``excellently''~(5) and ``exceptionally''~(6).
As can be seen on the left-hand side in \cref{fig:pie-chart}, 
54 participants say that their assigned statement ``exceptionally'' captures their viewpoint, and an additional 30 participants say that it captures their viewpoint ``excellently''.
These levels far exceed the mean rating level of 2.74 in the survey (median: ``moderately'' (2), see \cref{app:ratingdistribution} for the distribution of ratings across all ten statements).
Only a single participant feels less than ``well''~(3) represented\emdash{}the midpoint of our scale\emdash{}by their assigned statement.}
Hence, the vast majority of participant opinions are represented accurately by our slate.

{
The baseline slate also performs well (see \cref{fig:pie-chart}, right), but obtains worse ratings than our slate:
the mean rating is 5.15, 50 participants are captured ``exceptionally,'' 28 participants ``excellently'' by their assigned statements, and 3 participants are represented less than ``well.''\footnote{{In fact, the ratings for our slate dominate those for the baseline slate in the following sense: for any rating level $\theta$, at least as many participants rate their assigned statement in \emph{our slate} at $\theta$ or better as in the baseline slate.}}
}

{
A key question is whether these slates satisfy our proportionality axiom BJR.
Though we cannot conclusively answer this question, we can approach it by investigating, for each slate, how close the statements of the other slate are to being BJR violations.
Specifically, we maximize the size of a coalition $S$ of agents such that these agents all rate some statement $\alpha$ from the other slate at some level $\theta$ or better, and all rate their assigned statement below $\theta$.
Note that a BJR violation is exactly such a coalition, if its size is at least $n/k = 20$.
For our own slate, the largest achievable coalition consists of five participants, for a statement $\alpha$ on the baseline slate (statement B5 in \cref{app:baselineslate}) that advocates for criminalization in all cases, and $\theta = 6$ (``exceptionally'').
For the baseline slate, coalitions are larger: two pro-choice statements on our slate (S2 or S4 in \cref{app:ourslate}) induce coalitions of size 9 (for $\theta = 6$).
This is another indication that our generated slate is more representative of the validation sample, and more likely to satisfy BJR.\footnote{{A coarser measure of representativeness, but one that might be easier to interpret, is to simply ask how many participants strictly prefer some statement $\alpha$ that is not on the slate over their assigned statement. The maximum such number for our slate and a baseline statement $\alpha$ is 6; for the baseline slate and a statement $\alpha$ from our slate, it is 12.}}
}

\section{Discussion}
\label{sec:discussion}

As a result of the increase in power, availability, and steerability of LLMs, we are currently witnessing an explosion of creative prototypes for {participatory} processes with generative-AI components~\cite[e.g.,][]{MM23,KSI+23,SPJ+23,DKM+23}.
This expansion of the capabilities of participation is thrilling, but\emdash{}as these prototypes continue to proliferate and eventually turn into deployed {applications}\emdash{}we ought to critically interrogate {their legitimacy} on two fronts.

The first line of questioning has already received broad attention~\cite[e.g.,][]{SVDH+23}: \emph{can the AI building blocks in the process be trusted?}
Taking our process as an example, we have started answering this question by measuring the average accuracy of our LLM queries, by overcoming an observed lack of robustness through the ensemble implementation of our generative query, and by piloting the process in practice.
Before our process is ready for high-stake deployments, though,
it must yet be hardened against malicious participant input (e.g., prompt injections~\citep{Wallace2019-vo} meant to unduly sway generative queries), and the effect of biases against groups of people~\citep{Basta2019-si,Kurita2019-zx} and viewpoints~\citep[e.g.][]{HSW23} in the LLM must be studied and counteracted.
{Continued research in these areas will likely lead to ever more powerful  and robust LLMs, approximating our oracle assumption more and more closely. Nevertheless, we propose to ``trust, but verify'' in high-stakes settings: Once the democratic process has produced a slate of statements, one could let participants vote on the statements proposed by other participants, alongside the slate produced by the our process. The slate should then only be adopted if no participant's statement witnesses a BJR violation with respect to the slate.\footnote{{If there is a violation, one might repeat the process, or select a slate that satisfies BJR with respect to the comments explicitly voted on. In this way, the LLM-augmented process does no worse than a classical process, even for arbitrary corruptions of LLM outputs.}} In this way, the democratic process can tap into the power of LLMs while ensuring that the voters, not machines, have the final word.}

{Even if the AI building blocks are trustworthy, another question remains:} \emph{is the process around the AI components democratic?}
Granted, AI participation processes typically solicit input from all participants, and might even treat participants symmetrically, but that property alone (\emph{neutrality}) is clearly not sufficient.
We believe that voting rules with AI elements, just like those without, should argue their case based on axioms that ensure, for example, the rule's responsiveness, efficiency, and fairness.

At its heart, generative social choice articulates a vision of what it means for an AI-enhanced voting rule to be democratic.
By showing the required ingredients\,---\,the axioms targeted by the rule, necessary conditions on the behavior of the LLM, and evidence that the LLM meets these conditions\,---\,a voting rule can assuage the above two threats to legitimacy, while tapping into the possibilities enabled by generative AI.

\section*{Acknowledgments}
We thank Nika Haghtalab and Abhishek Shetty for pointers on how to apply sampling bounds to sampling without replacement. 
This work was partially supported by OpenAI through the ``Democratic Inputs to AI'' program and by the Office of Naval Research under grant N00014-20-1-2488.
Manuel W\"uthrich was partially funded by the Swiss National Science Foundation (SNSF). 
Paul Gölz was supported by the National Science Foundation under Grant No. DMS-1928930 and by the Alfred P. Sloan Foundation under grant G-2021-16778 while in residence at the Simons Laufer Mathematical Sciences Institute (formerly MSRI) in Berkeley, California, during the Fall 2023 semester. Sara Fish was supported by an NSF Graduate Research Fellowship and a Kempner Institute Graduate Fellowship.

\bibliographystyle{ACM-Reference-Format}
\bibliography{abb,ultimate,more}

\appendix

\section{Relationship Between BJR and Other Justified Representation Axioms}
\label{app:incomparablebjr}
Our notion of BJR is closely related to several axioms in the social choice literature.\footnote{Note that we defined slates as multisets, whereas these axioms typically define committees as sets. The discussion in this section is both valid if one translates the multi-winner axioms into the multiset setting, or by using the set variant of BJR described in Footnote~6 in the body.}
Suppose for the time being that we were to relax BJR by not requiring the matching of agents to statements to be balanced, in which case each agent would be matched to their most preferred statement without loss of generality.
In the subsetting of approval utilities, this relaxed axiom coincides with the \emph{justified representation (JR)} axiom of~\citet{ABCE+17}.

For our setting of general cardinal utilities, the relaxed axiom is implied by \emph{extended justified representation (EJR)} and \emph{full justified representation (FJR)} as defined by \citet{PPS21}.

\begin{table}[htb]
    \vspace{\topskip}
    \begin{minipage}[t]{.45\linewidth}
      \vspace{-\topskip}
      \centering
        \caption{Utility matrix of first example instance, with $k=n=3$.}
        \label{tbl:jrbad}
        \begin{tabular}{lllll}
            \toprule
            & $\alpha$ & $\alpha'$ & $\beta$ & $\beta'$ \\
            \midrule
            $u_1$ & 1 & 1 & 0 & 0 \\
            $u_2$ & 1 & 1 & 0 & 0 \\
            $u_3$ & 0 & 0 & 1 & 1 \\
            \bottomrule
        \end{tabular}
    \end{minipage}\hfill
    \begin{minipage}[t]{.45\linewidth}
        \vspace{-\topskip}
          \centering
      \caption{Utility matrix of second example instance, with $k=n=2$.}
      \label{tbl:fjrbad}
\begin{tabular}{lllll}
\toprule
& $\alpha_1$ & $\alpha_2$ & $\beta$ & $\beta'$ \\
\midrule
$u_1$ & 3 & 0 & 2 & 2 \\
$u_2$ & 0 & 3 & 2 & 2 \\
\bottomrule
\end{tabular}
    \end{minipage}
\end{table}
The need for a new, balanced-matching-based notion of justified representation is best explained using two simple examples.
The first example, given in \cref{tbl:jrbad}, is standard: $k=3$ statements must be selected, two-thirds of agents (specifically, agents 1 and 2) approve statements $\alpha, \alpha'$, and the remaining third of agents (agent 3) approves statements $\beta, \beta'$.
As has been frequently observed \citep[e.g.,][Example 3]{ABCE+17}, JR (and thus the relaxation of BJR with unbalanced matchings) is satisfied by the slate $\{\alpha, \beta, \beta'\}$.
This is problematic since this slate is patently unproportional: it represents two-thirds of the population by one-third of the slate, and vice versa.

JR cannot rule out this form of unproportionality because each member of the two-thirds bloc is already represented by some statement they approve, and JR does not allow agents and coalitions to formulate any claims to representation beyond that point.
Axioms like EJR and FJR allow coalitions to make stronger claims than JR by assuming that 
an agent (say, agent 1 in the previous example) may prefer to be represented by \emph{multiple} statements rather than just one. Specifically, these axioms model an agent's utility as being the sum of their utilities for all statements on the slate.

Though this approach allows EJR and FJR to rule out the unproportional slates in the first example, it causes them to require slates on other instances that we find undesirable for the setting of statement selection, especially for non-approval utilities.
\Cref{tbl:fjrbad} shows one such instance, in which two statements must be selected for two agents.
Each agent $i \in \{1, 2\}$ has a statement $\alpha_i$ {that} is very specific to $i$ and thus has a high utility for $i$ but low utility for the other agent.
In this instance, we believe that a slate consisting of these two statements would be a good choice since it represents the specificity of agents' preferences to the highest degree; indeed, only this slate satisfies BJR.
EJR and FJR, by contrast, rule out these statements, since they prefer to represent both agents jointly by two less specific statements (namely, $\beta, \beta'$) rather than each agent individually by a specific statement.\footnote{One might hope that EJR and FJR can be adapted to this perspective, by extending utilities to sets in a unit-demand rather than additive way. With this modification, however, they no longer rule out the unproportional slate in the first example instance.}

Our axiom of BJR enforces more specificity on the second instance, while ruling out the unproportional slates on the first example instance.
Instead of allowing a single agent to be represented by multiple statements, BJR's analysis of the shortcoming of JR in the first example is that too many agents were represented by a single statement on the slate.
Philosophically, we see connections between our axiom and the notion of \emph{fully proportional representation} of \citet{Mon95}: ``voters should be segmented into equal-sized coalitions, each of which is assigned a representative, such that the preferences of voters are as closely as possible reflected by the representatives of their segment.''
In the remainder of this appendix, we show that BJR, other than implying JR, is incomparable to previously studied notions of justified representation, even in the setting of approval utilities.
{For a definition of these axioms, we refer the reader to Definitions 4.3, 4.5, 4.7, and 4.10 of \citet{LS23a}.}

\begin{proposition}
Balanced justified representation (BJR) is incomparable with proportional justified representation (PJR), extended justified representation (EJR), full justified representation (FJR), and core stability.
This incomparability holds even for approval utilities and holds both in our setting where slates/committees are multisets\footnote{\citet{BGP+22} give a formal embedding to translate existing justified representation axioms to the multiset setting (``party-approval elections'', in their terminology). Whereas the existence of core stable committees is unresolved when committees are sets of alternatives, such committees are guaranteed to exist in the multiset setting~\citep{BGP+22}.} and in the classical setting where they are sets (using the adaptation of BJR from Footnote~6 in the body).
\end{proposition}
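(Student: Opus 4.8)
The plan is to establish incomparability by exhibiting, for each pair of axioms, two instances: one where BJR holds but the other axiom fails, and one where the other axiom holds but BJR fails. Since the paper has already observed that BJR implies JR (both in the multiset and set settings), and since PJR, EJR, FJR, and core stability all imply JR, we only need to show that none of these four axioms implies BJR, and that BJR implies none of them. Crucially, all of this must be done in the restricted setting of approval utilities, and for both the multiset and set formalizations.

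For the direction ``BJR does not imply $X$'' where $X \in \{$PJR, EJR, FJR, core stability$\}$: I would reuse the second example instance from the body of the paper (\cref{tbl:fjrbad}), which the text already states is the unique BJR-satisfying slate and is ruled out by EJR and FJR. Since EJR, FJR, and core stability all imply PJR, and since that slate is presumably also ruled out by PJR and core stability on a suitable approval variant of the instance, one carefully chosen approval instance should simultaneously witness that BJR fails to imply all four. (If a single instance does not suffice — e.g., because PJR is weaker and might actually be satisfied — I would supplement with one more small approval instance tailored to PJR, such as an instance with a cohesive group of size $\ge 2n/k$ that BJR can ignore but PJR cannot.) For the direction ``$X$ does not imply BJR'': the first example instance (\cref{tbl:jrbad}), with the slate $\{\alpha,\beta,\beta'\}$, already shows JR (hence PJR, as the $2$-cohesive group $\{1,2\}$ is ``deducted'' appropriately) can hold while BJR fails. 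For EJR, FJR, and core stability I would exhibit an instance with an EJR/FJR/core-stable slate that nonetheless admits a BJR-violating coalition; a natural candidate is an instance where a large homogeneous bloc is ``correctly'' served many statements under the additive-utility view of EJR/FJR, but where that same allocation forces some $n/k$-sized sub-coalition to be matched (under every balanced matching) to statements they don't approve while a commonly-approved statement exists. I would verify EJR/FJR/core stability directly on such an instance by checking all cohesive coalitions.

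The main obstacle I expect is the core-stability direction and the interaction with the multiset-vs-set distinction. Core stability is the strongest of these axioms, so finding an approval instance with a core-stable committee that violates BJR requires care: I must check that no coalition $S$ with a commonly-approved deviation of size $\ge |S| \cdot k/n$ can improve, which is a global condition. The footnote in the paper (citing \citet{BGP+22}) reassures us that core-stable committees exist in the multiset setting, so such an instance is not vacuous, but constructing one where BJR specifically fails takes a delicate balance of group sizes and approval sets. The second subtlety is that the BJR-violation argument uses the balanced matching $\omega$: I must argue the violation holds for \emph{every} balanced matching, not just a convenient one, which typically forces the ``bad'' instance to have several agents who approve none of the chosen statements so that any balanced matching assigns some of them to a $0$-approval statement.

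Finally, I would organize the proof as a short sequence of labeled instances (Instance~1, Instance~2, \dots), each accompanied by (a) the slate witnessing satisfaction of the relevant axiom(s), with a brief verification, and (b) the coalition/statement/threshold (or the EJR/FJR/core certificate) witnessing the failure of the other axiom, noting in each case that the same instance and slate work verbatim in the set setting via the adaptation of BJR in \cref{fn:axiomforsets} (since in these small instances the witnessing deviation statement $\alpha$ can be taken outside the slate). I do not expect any step to require heavy computation — everything should be checkable by hand on instances with a handful of agents and candidates — so the write-up is mostly a matter of choosing the instances cleanly and presenting the verifications compactly.
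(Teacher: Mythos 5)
Your overall strategy is the same as the paper's: exploit the implication structure (all four axioms imply PJR, so one instance where BJR holds but PJR fails covers the first direction; one instance where core stability holds but BJR fails covers the second) and verify everything by hand on small approval instances. However, there are two concrete problems with the specific instances you lean on. First, your parenthetical claim that the slate $\{\alpha,\beta,\beta'\}$ in \cref{tbl:jrbad} satisfies PJR is false: with $n=k=3$, the group $\{1,2\}$ is $2$-cohesive (it has size $2 = 2n/k$ and commonly approves $\alpha,\alpha'$), but the union of slate statements approved by its members is just $\{\alpha\}$, of size $1 < 2$, so PJR is violated. This does not sink the plan (your core-stability instance would cover PJR anyway, since core stability implies PJR), but the claim as written is wrong. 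Second, the approval thresholding of \cref{tbl:fjrbad} cannot witness ``BJR does not imply PJR'': with $k=n=2$ and the slate $\{\alpha_1,\alpha_2\}$, the union of approved slate statements over the coalition $\{1,2\}$ already has size $2$, so PJR is satisfied. The supplemental instance you mention in passing is therefore not optional but necessary; the paper's version has $n=6$, $k=4$, a three-agent coalition commonly approving two statements $\alpha,\alpha'$, and a BJR slate $\{\alpha^-,\beta,\gamma,\delta\}$ in which that coalition collectively approves only one slate member.

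The more substantive gap is that the load-bearing construction for the second direction\,---\,an approval instance with a core-stable slate that violates BJR\,---\,is only sketched, and this is where essentially all the work lies. Your blueprint (a slate under which every balanced matching strands an $n/k$-sized sub-coalition on unapproved statements, while a commonly approved deviation exists outside the slate) is the right shape; the paper realizes it with $n=8$, $k=4$, two blocs of four agents, a slate $\{\alpha,\alpha',\beta,\beta'\}$ where $\alpha'$ and $\beta'$ are each approved by a single agent, and a pair statement $\gamma_{i,j}$ for every pair of agents serving as the deviation target. But verifying core stability there is not a routine enumeration: the paper's argument is a welfare-accounting bound showing that any deviating set $T$ of size at most $|S|k/n$ would need an average contribution per statement of at least $4 + 2x/|S|$ (where $x$ counts deviators among the two doubly-covered agents), that only $\alpha$ and $\beta$ can contribute as much as $4$, and that this forces $x=0$, which in turn caps their contribution at $3$\,---\,a contradiction. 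Until you supply an explicit instance and carry out a verification of this kind (or an exhaustive check over all cohesive coalitions, which is feasible but must actually be done), the proof is incomplete.
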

\begin{proof} We will show this incomparability in two steps: we first show that BJR implies none of the other axioms, and then that none of the axioms implies BJR.
\paragraph{BJR does not imply other axioms.}
Consider the instance with $n=6$, $k=4$, and the following utilities:
\begin{center}
\begin{tabular}{ccccccc}
\toprule
& $\alpha$ & $\alpha'$ & $\alpha^-$ & $\beta$ & $\gamma$ & $\delta$ \\
\midrule
$u_1$ & 1 & 1 & 1 & 0 & 0 & 0 \\
$u_2$ & 1 & 1 & 1 & 0 & 0 & 0 \\
$u_3$ & 1 & 1 & 0 & 0 & 0 & 0 \\
$u_4$ & 0 & 0 & 0 & 1 & 0 & 0 \\
$u_5$ & 0 & 0 & 0 & 0 & 1 & 0 \\
$u_6$ & 0 & 0 & 0 & 0 & 0 & 1 \\
\bottomrule
\end{tabular}
\end{center}

In this instance, the slate $\{\alpha^-, \beta, \gamma, \delta\}$ satisfies BJR since, if we assign agents 1 and 2 to $\alpha^-$, agents 3 and 4 to $\beta$, agent 5 to $\gamma$, and agent 6 to $\delta$, then only agent 3 is not already maximally satisfied.
As a result, no potential deviating coalition can include the necessary $n/k = 3/2$ agents.

By contrast, this slate does not satisfy PJR because the coalition of agents 1, 2, and 3 is large enough to proportionally claim $\ell = 2$ statements, has two statements they all like in common ($\alpha, \alpha'$), but only one of the four statements on the slate is liked by any agent in this coalition. 

Since EJR, FJR, and core stability imply PJR, none of them can be implied by BJR either.

\paragraph{Other axioms do not imply BJR.}

To prove this direction of the claim, consider the following instance with $n=8$ agents and $k=4$.
The table below shows the agents' utilities for a subset of the statements:
\begin{center}
\begin{tabular}{ccccc}
\toprule
& $\alpha$ & $\alpha'$ & $\beta$ & $\beta'$ \\
\midrule
$u_1$ & 1 & 1 & 0 & 0 \\
$u_2, u_3, u_4$ & 1 & 0 & 0 & 0 \\
$u_5$ & 0 & 0 & 1 & 1 \\
$u_6, u_7, u_8$ & 0 & 0 & 1 & 0 \\
\bottomrule
\end{tabular}
\end{center}
In addition, any pair of agents $\{i, j\}$ is associated with a statement $\gamma_{i, j}$, which exactly they approve.

In this instance, the slate $\{\alpha, \alpha', \beta, \beta'\}$ does not satisfy BJR.
Indeed, since a balanced assignment assigns two agents to each statement of the slate, it holds for any such balanced assignment that some agent $i$ assigned to $\alpha'$ and some agent $j$ assigned to $\beta'$ have 0 utility for their assigned statement.
Since these two agents could deviate to the statement $\gamma_{i,j}$, BJR is violated.

By contrast, we will show that this slate satisfies core stability, and thus the weaker axioms of FJR, EJR, and PJR.
Indeed, suppose that some non-empty coalition $S$ along with a (multi)set $T$ of at most $\frac{|S|}{n} \cdot k$ statements formed a core deviation.
Suppose that $S$ includes $0 \leq x \leq 2$ many among the agents $\{1, 5\}$.
Since agents 1 and 5 have a utility of 2 for the candidate slate, they can only be part of a deviating coalition if the deviation $T$ gives them utility at least 3.
Analogously, since the other agents have a utilty of 1 for the candidate slate, they can only deviate if $T$ gives them utility at least 2.
If we define the \emph{coalition welfare} $\mathit{cw}$ as the sum of utilities, across the agents in $S$, for $T$, it follows that $\mathit{cw} \geq 3\,x + 2\,(|S| - x) = 2\,|S| + x$.
Now, the average contribution of a statement in $T$ to this objective is
\begin{equation}
\frac{\mathit{cw}}{|T|} \geq \frac{2 \, |S| + x}{|S| \, k / n} \geq \frac{2 \, n}{k} + x \, \frac{n}{|S| \, k} = 4 + x \, \underbrace{\frac{2}{|S|}}_{> 0}. \label{eq:cw}    
\end{equation}
Note that statements $\alpha$ and $\beta$ are the only ones that can potentially contribute at least 4 to the coalitional welfare (since all other statements are approved by fewer than two agents), and they can also contribute only exactly an amount of 4, never more.
Thus, it must be that $\mathit{cw}/|T|$ is equal to 4.
This, in turn, implies that $x=0$, i.e., that agents 1 and 5 are not in $S$, and that $T$ consists only of the statements $\alpha$ and $\beta$ (possibly with repetition).
But now observe that, since agents 1 and 5 are not in the coalition, $\alpha$ and $\beta$ cannot marginally contribute more than 3 to the coalition welfare, which contradicts \cref{eq:cw} and thus shows that the slate satisfies core stability.
\end{proof}

{
Finally, one might hope to strengthen BJR by allowing coalitions to deviate as long as any member of the coalition strictly increases their utility in the deviation, rather than requiring agents to cross a common threshold $\theta$:
\begin{definition}
    A slate $W$ is in the \emph{balanced unit-demand core} if there is a function $\omega: N\to W$, matching agents to statements such that each statement on the slate is matched to $\lfloor n/k \rfloor$ or $\lceil n/k\rceil$ agents, for which there is no coalition $S\subseteq N$ and statement $\alpha\in\mathcal{U}$ such that \emph{(i)}~$|S|\geq n/k$ and \emph{(ii)}~$u_{i}(\alpha) > u_{i}(\omega(i))$ for all $i\in S$.
\end{definition}

Unfortunately, the resulting axiom may not be satisfiable.
{The possible emptiness of the core can be seen using the instance~\cite[App.\ C]{FMS18} previously used to show the analogous statement for additive, non-approval preferences:}

\begin{proposition}
The balanced unit-demand core can be empty.
\end{proposition}
\begin{proof}
Consider the following instance with six agents, six statements, and $k=3$:
\begin{center}
\begin{tabular}{ccccccc}
    \toprule
    & $\alpha$ & $\beta$ & $\gamma$ & $\alpha'$ & $\beta'$ & $\gamma'$ \\
    \midrule
    $u_1$ & 2 & 0 & 1 & 0 & 0 & 0 \\
    $u_2$ & 1 & 2 & 0 & 0 & 0 & 0 \\
    $u_3$ & 0 & 1 & 2 & 0 & 0 & 0 \\
    $u_1'$ & 0 & 0 & 0 & 2 & 0 & 1 \\
    $u_2'$ & 0 & 0 & 0 & 1 & 2 & 0 \\
    $u_3'$ & 0 & 0 & 0 & 0 & 1 & 2 \\
    \bottomrule
\end{tabular}
\end{center}
Note that the instance decomposes into two, isomorphic instances (one with primed variables and one with non-primed variables).
By symmetry, we can assume w.l.o.g.\ that the slate contains at most one non-primed statement and that, if there is a non-primed statement in the slate, this statement is $\alpha$.
That is, we can assume that neither $\beta$ nor $\gamma$ are on the slate.
Consider the coalition $\{2, 3\}$, and note that agent~2 does not value any statement on the slate better than $1$ and that agent~3 does not value any statement on the slate better than $0$, so the statement they are assigned to has at most this value.
Hence, these $2 = n/k$ agents would like to deviate to the statement $\beta$, which has value $2 > 1$ for agent~2 and value $1 > 0$ for agent~3.
This demonstrates that no slate is in the balanced unit-demand core.
\end{proof}

Hence, BJR cannot be strengthened to account for deviations without a uniform threshold while {retaining} feasibility.
The uniform-threshold technique (for generalizing approval-based proportionality notions without losing feasibility) was previously used by \citet{PPS21} to generalize EJR and FJR to general additive utilities.
}

\section{Deferred Proofs}\label{app:proofs}

\propcjr*

\begin{proof}
In this proof, we will use $\alpha_j, T_j$ to denote the values of $\alpha$ and $T$ assigned in a given iteration $1 \leq j \leq k$.
We construct the matching $\omega$ by, for each round $j = 1, \dots,k$, mapping all agents that were removed from $S$ in that round to the statement that was added to $W$ in that round, i.e. for all $i\in T_j$ we have $\omega(i)=\alpha_j$.
Clearly, this matching is balanced, since either $\lfloor n/k \rfloor$ or $\lceil n/k \rceil$ agents are removed in each round.

Now consider a coalition $S'\subseteq N$, a statement $\alpha'\in\mathcal{U}$, and a threshold $\theta \in \mathbb{R}$ such that $|S'| \geq n/k$ (and, by integrality, $|S'| \geq \lceil n/k \rceil$) and $u_{i}(\alpha')\geq \theta$ for all $i\in S'$.
Once Process~1 terminates we have $S=\emptyset$, hence there must be an earliest iteration $j$ where some agent $i' \in S'$ appeared in $T_j$.
At the beginning of iteration $j$ of the loop, it must thus still hold that $S' \subseteq S$.
Note that
\begin{align*}
\rthlargest{\lceil \bar{r} \rceil}(\{ u_i(\alpha') \mid i \in S\}) &= \rthlargest{\lceil n/k \rceil}(\{ u_i(\alpha') \mid i \in S\}) \geq \rthlargest{|S'|}(\{ u_i(\alpha') \mid i \in S\})\\
&\geq \rthlargest{|S'|}(\{ u_i(\alpha') \mid i \in S'\}) \geq \theta.
\end{align*}
Thus, since $i' \in T_j$ and by the definition of the generative query, it must hold that
\begin{align*}
u_{i'}(\omega(i'))=u_{i'}(\alpha_j) \geq \rthlargest{\lceil \bar{r} \rceil}(\{ u_i(\alpha_j) \mid i \in S\}) \geq \theta.
\end{align*}
We conclude that $S', \alpha', \theta$ do not violate BJR.
\end{proof}

\propjrlower*

\begin{proof}
Set $t \coloneqq n/k \, (1 - 1/k)$.
Let $n$ be some multiple of $k^2$, so that $t$ is an integer.
Suppose that there is one ``popular'' statement $\alpha$, which has utility 1 for all agents.
Furthermore, for each set $S$ of at most $t$ agents, let there be an ``unpopular'' statement with utility 1 for $S$ and 0 for all other agents.
This unpopular statement is a valid answer for any query of the shape $t$-$\genquery{S}{\cdot}$, because the $r$-th largest utility among $S$ for this statement is 1, the maximum possible utility of this instance.
Thus, with the right tie breaking, one can implement all $t$-$\genquery{\cdot}{\cdot}$ queries to return unpopular statements, from which it follows that the process will have to return a slate $W$ entirely of unpopular comments.

Since each unpopular statement has positive utility for at most $t$ agents, at most $k\cdot t = n \, (1-1/k) = n - n/k$ agents receive positive utility from any statement in $W$.
In other words, $n/k$ agents have utility 0 for all statements in $W$, but have utility 1 for the popular statement $\alpha$. This demonstrates a violation of (balanced) justified representation.
\end{proof}

\thmjrlowerpolytime*

\begin{proof}
    Choose $k$ to be an even integer and $n$ as a multiple of 8, such that $t \coloneqq n/8$ is integer as well.
    Fix a process that makes fewer than $\frac{2}{k}\,e^{n/(12k)}$ many $t$-$\genquery{\cdot}{\cdot}$ {queries} and any number of discriminative queries. We will prove the claim using the probabilistic method: we will define a random instance and show that the process will fail BJR with positive probability, which means that there exists a deterministic instance where the process fails BJR.
    In fact, the random instances we construct will have approval utilities, and we will derive a contradiction to not just BJR, but also JR on this instance, to simultaneously prove the ``this holds even\dots'' part of the claim.

    For given $n,k$, construct our instance as follows: Each set $S$ of $\frac{n}{2\,k}$ many agents has infinitely many ``unpopular'' statements that have utility 1 for $S$ and utility 0 for all other agents.
    Furthermore, each agent is uniformly and independently assigned a color in $\{1,2,\dots,k/2\}$, and all agents with the same color $c$ have utility 1 for a ``popular'' statement $\beta_{c}$, which has utility 0 for everyone else.
    Since all utilities are 0 or 1, there will typically be many statements $\alpha$ that are tied in the definition of a generative query $\genquery{S}{r}$: if there exist statements that have utility 1 for at least $r$ agents in $S$, any such statement may be returned; if no such statements exist, the query may return any arbitarary statement.
    To resolve this ambiguity, we assume that the generative query breaks ties in the “most favorable” way: the generative query will respond to $\genquery{S}{r}$ with a statement that has utility 1 for as many agents in $S$ as possible, and breaks remaining ties according to some canonical ordering of statements in which unpopular comments precede popular comments.

    Consider the trajectory of the process on an instance with just the unpopular statements, i.e., where each $t$-$\genquery{S}{\cdot}$ query of the process is answered by a canonical unpopular statement that attains the maximum number $\min(|S|, \frac{n}{2 \, k})$ of agents in $S$ that have utility 1 for it.
    
    Now, consider the random instance with unpopular and popular statements.
    We will show that, with positive probability, all $t$-$\genquery{\cdot}{\cdot}$ queries made by the process are still answered by their canonical unpopular statement, which means that the process will follow the same trajectory as above.
    This will be the case if, for each $t$-$\genquery{S}{\cdot}$ query made by the process and for each color $c$, at most $\frac{n}{2\,k}$ agents in $S$ have color $c$, so that $\beta_c$ will not be returned by the query.
    For a specific $S$ and $c$, the probability of this event can be upper-bounded using Chernoff as
    \begin{align*}
               & \mathbb{P}\left[\text{at least \ensuremath{\frac{n}{2\,k}} agents in \ensuremath{S} have color \ensuremath{c}}\right] \\
        ={}    & \mathbb{P}\left[\mathsf{Binomial}(n/8, 2/k) \geq 2 \cdot\frac{n}{4 \, k}\right] \\
        \leq{} & \exp\left[- \frac{n}{12 \, k}\right].
    \end{align*}
    By a union bound, it follows that, with positive probability, this event does not occur in any of the fewer than $\frac{2}{k}\,e^{n/{(12k)}}$ queries, for any of the $\frac{k}{2}$ colors. This implies that there is an instance in the support of our random instance on which the trajectory of the process remains the same as if there were no popular statements and where, in particular, the process must return a slate of unpopular statements.

    Finally, we show that, when the process only returns unpopular statements, it must violate justified representation.
    (This always hold for our random instance, ex post.)
    Since each unpopular statements gives positive utility to at most $\frac{n}{2\,k}$ agents, no more than $\frac{n}{2}$ agents can be covered by the slate of $k$ statements selected by the process.
    Therefore, there are at least $\frac{n}{2}$ uncovered agents, which are partitioned in some arbitrary manner across the $\frac{k}{2}$ many colors. By an averaging argument, there must be some color $c$ with at least $\frac{n}{k}$ uncovered agents, which means that the process' output violates justified representation and BJR for $\beta_{c}$.
\end{proof}

\begin{lemma}[Agnostic PAC learning for sampling without replacement] \label{lem:pacwithoutreplacement} Let $\mathcal{H}$ be a hypothesis class, consisting of binary classifiers $h:\mathcal{X}\to\mathcal{Y}$, with $|\mathcal{Y}|=2$, over some domain $\mathcal{X}$. Let $d<\infty$ denote the VC dimension of $\mathcal{H}$. For a given hypothesis $h\in\mathcal{H}$, denote its 0--1 loss on a nonempty finite set $S\subseteq\mathcal{X}\times\mathcal{Y}$ of labeled datapoints by $L_{S}(h)\coloneqq\sum_{(x,y)\in S}\bone\{h(x)\neq y\}/|S|$.

Let $D\subseteq\mathcal{X}\times\mathcal{Y}$ be a finite set of labeled datapoints. Consider a random process that chooses some number $m\leq|D|/2$ of labeled datapoints $S=\{(x_{1},y_{1}),(x_{2},y_{2}),\dots,(x_{m},y_{m})\}$ from $D$ uniformly and \textbf{without replacement}, and denote by $\hat{h}$ the empirical risk minimizer $\argmin_{h\in\mathcal{H}}L_{S}(h)$. For any $0<\epsilon<1,0<\delta<1$, this process will satisfy 
\begin{equation}
L_{D}(\hat{h})\leq\min_{h\in\mathcal{H}}L_{D}(h)+\epsilon\label{eq:lossbound}
\end{equation}
and 
\begin{equation}
\left|L_{S}(h)-L_{D}(h)\right|\le\epsilon\quad\forall h\in\mathcal{H}\label{eq:uniformbound}
\end{equation}
with probability at least $1-\delta$, as long as 
\begin{equation}
m\geq C\cdot\frac{d+\log1/\delta}{\epsilon^{2}}\label{eq:mbound}
\end{equation}
for some absolute constant $C$. \end{lemma}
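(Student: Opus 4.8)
The plan is to reduce the without-replacement case to the classical (with-replacement, i.i.d.) agnostic PAC learning bound via a coupling argument. The standard i.i.d.\ agnostic PAC bound (see, e.g., \citealt{Vapnik1998-fp}) states that if $S'$ consists of $m$ points drawn i.i.d.\ from a distribution $\mathcal{D}$ over $\mathcal{X} \times \mathcal{Y}$, then for $m \geq C \cdot (d + \log(1/\delta))/\epsilon^2$, with probability at least $1-\delta$ the uniform convergence bound $|L_{S'}(h) - L_{\mathcal{D}}(h)| \leq \epsilon$ holds for all $h \in \mathcal{H}$ simultaneously, and consequently the empirical risk minimizer is $\epsilon$-suboptimal. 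Here I would take $\mathcal{D}$ to be the uniform distribution over the finite set $D$, so that $L_{\mathcal{D}}(h) = L_D(h)$; then \cref{eq:uniformbound} is exactly uniform convergence, and \cref{eq:lossbound} follows from it by the usual two-$\epsilon$ argument ($L_D(\hat h) \leq L_S(\hat h) + \epsilon \leq L_S(h^\star) + \epsilon \leq L_D(h^\star) + 2\epsilon$, after rescaling $\epsilon$ by a constant, which is absorbed into $C$).

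The first step, then, is to recall or cite the i.i.d.\ agnostic PAC bound and check that VC-dimension of $\mathcal{H}$ over the \emph{subdomain} appearing in $D$ is at most $d$ (restricting the domain can only decrease the VC-dimension). The second and crucial step is to transfer the guarantee from i.i.d.\ sampling to sampling without replacement. The cleanest route I would take is to invoke a concentration/negative-association comparison: for the supremum statistic $\sup_{h \in \mathcal{H}} |L_S(h) - L_D(h)|$, sampling $m \leq |D|/2$ points without replacement is at least as concentrated as sampling $m$ points with replacement. One standard way to make this rigorous is Hoeffding's reduction theorem, which states that for any convex function $\phi$, the expectation of $\phi$ applied to a sum of a without-replacement sample is dominated by that of the corresponding with-replacement sample; applying this to the exponential moments used inside the symmetrization/union-bound proof of the uniform convergence bound shows the same tail bound holds. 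Alternatively, one can cite a ready-made statement of this fact (e.g., the observation, attributed in the acknowledgments to Haghtalab and Shetty, that VC-type sampling bounds carry over to sampling without replacement).

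The main obstacle is precisely this last transfer. The generic "without-replacement is more concentrated" principle handles a single fixed $h$ immediately (it is a bounded-difference or Hoeffding statement for one average), but the PAC bound needs uniform control over all of $\mathcal{H}$, which in the i.i.d.\ proof comes from symmetrization with ghost samples and a growth-function/Sauer--Shelah union bound. I would need to check that each ingredient of that chain — symmetrization, the Rademacher/ghost-sample step, and the final exponential tail bound — survives the replacement of i.i.d.\ draws by a without-replacement draw. Symmetrization and the growth-function bound are combinatorial and do not care how $S$ was generated once its \emph{size} is fixed; only the final exponential concentration step uses independence, and that is exactly where Hoeffding's convex-domination reduction plugs in (the factor $m \leq |D|/2$ is the mild regime where the comparison is cleanest and avoids degenerate variance behavior). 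So the proof would conclude by running the textbook argument verbatim, substituting the without-replacement exponential inequality at the single point where independence was used, and collecting constants into $C$.
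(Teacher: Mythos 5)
Your overall strategy\,---\,dominate the without-replacement sample by the i.i.d.\ one and then invoke the standard agnostic PAC bound\,---\,is a legitimate route, but as written it has a genuine gap at exactly the step you flag as the crux. It is not true that independence enters the classical uniform-convergence proof only at the final per-dichotomy Hoeffding step. The ghost-sample symmetrization also relies on the product structure: after conditioning on the combined sample, one introduces Rademacher signs by observing that swapping $z_i \leftrightarrow z_i'$ coordinatewise preserves the joint law of $(S,S')$. For two independent without-replacement samples this fails, because a swap can create a repeated element inside $S$, so the swapped configuration is no longer distributed as a without-replacement sample. Moreover, Hoeffding's reduction theorem as usually stated concerns a convex function of a \emph{single} sample sum; the statistic $\sup_{h}\lvert L_S(h)-L_D(h)\rvert$ is a supremum over many sums, so you would need the multivariate version of the reduction, applied to the convex function $p\mapsto\sup_h\lvert\langle v_h,p\rangle-L_D(h)\rvert$ of the empirical measure $p$, together with an MGF (not merely a tail) bound for the i.i.d.\ supremum. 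None of this is impossible, but it is precisely the content the proposal defers to ``checking,'' and the check as described (substitute at the single point where independence was used) would not go through.

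The paper's proof avoids this by splitting into two cases. When $|D|\ge m^2/\delta$, a birthday-paradox coupling shows that a with-replacement sample of size $m$ contains a collision with probability at most $\delta/2$, so the without-replacement sample can be realized by rejection sampling and the standard i.i.d.\ bound transfers essentially verbatim; this is close in spirit to your reduction but only works when $D$ is large relative to $m$. When $|D|<m^2/\delta$, the paper instead invokes the transductive Rademacher generalization bound of El-Yaniv and Pechyony, which is engineered for uniform convergence under sampling without replacement, and then does the work of bounding the transductive Rademacher complexity by $O(\sqrt{d/m})$ via the Bartlett--Mendelson bound plus a Chernoff estimate, and of absorbing the various slack terms into \cref{eq:mbound}. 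If you want to keep a one-shot reduction with no case split, the cleanest repair is to apply the multivariate Hoeffding domination directly to $\mathbb{E}\bigl[\exp\bigl(\lambda\sup_h\lvert L_S(h)-L_D(h)\rvert\bigr)\bigr]$ and pair it with sub-Gaussian concentration of the i.i.d.\ supremum around its mean.
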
 
\begin{proof}
If $|D|\geq m^{2}/\delta$, the result will follow from the sampling bounds for i.i.d.\ samples. Note that we can implement the without-replacement drawing of $S$ through rejection sampling, i.e., by drawing a sample of $m$ datapoints uniformly \emph{with} replacement, and re-drawing if this sample should contain any datapoint multiple times. We will consider only the first round of this rejection sampling. The probability that any two datapoints are identical is at most $\sum_{i=0}^{m-1}i/|D|=\frac{m\,(m-1)}{2\,|D|}\leq\frac{m^{2}}{2\,|D|}\leq\delta/2$, so we reject with probability at most $\delta/2$. Moreover, since drawing with replacement is the same as drawing i.i.d.\ from the uniform distribution over $D$, we can apply a standard agnostic PAC learning bound~\citep[Thm.\ 6.8]{Shalev-Shwartz2014-tv} to show that the empirical risk minimizer $\hat{h}$ on the sample with replacement satisfies \cref{eq:lossbound} with probability at least $1-\delta/2$ as long as the constant in \cref{eq:mbound} is sufficiently large. By a union bound over both events, with probability at least $1-\delta$, the with-replacement sample is not rejected and additionally satisfies \cref{eq:lossbound}, which proves the claim for our sampling process without replacement in the case of $|D|\geq m^{2}/\delta$.

From here on, suppose that $|D|<m^{2}/\delta$. Essentially, our claim will follow from Theorem~2 by \citet{EP09b}, a bound on transductive learning, but we have to do some work to get their bound into our desired shape. We apply their Theorem~2 twice, with a value of $\delta$ that is half of the $\delta$ in our theorem, the full sample $D$, the hypothesis class $\mathcal{H}$, $\gamma=1$, and setting $m$ once to $m$ and once to $|D|-m$ (swapping the role of sampled and not sampled datapoints). By union-bounding over both invocations and unfolding some definitions in the theorem, we obtain that, with probability at least $1-\delta$, it holds for all $h\in\mathcal{H}$ that 
\begin{equation}
L_{D\setminus S}(h)\leq L_{S}(h)+R_{\mathit{trans}}(\mathcal{H})+\mathit{slack}\quad\text{and}\quad L_{S}(h)\leq L_{D\setminus S}(h)+R_{\mathit{trans}}(\mathcal{H})+\mathit{slack}\label{eq:twosideloss}
\end{equation}
where $R_{\mathit{trans}}(\mathcal{H})$ denotes the \emph{transductive Rademacher complexity} of $\mathcal{H}$ on $D$, and $\mathit{slack}$ is defined and bounded in the following.

The slack term is defined as 
\[
\mathit{slack}\coloneqq c_{0}\,q\,\sqrt{m}+\sqrt{\frac{s\,q}{2}\,\ln1/\delta},
\]
where $c_{0}<5.05$ is an absolute constant, $q\coloneqq\frac{1}{m}+\frac{1}{|D|-m}\leq\frac{2}{m}$, and $s\coloneqq\frac{|D|}{(|D|-1/2)\cdot(1-\frac{1}{2(|D|-m)})}$. Since $m$ is a positive integer, $m\geq1$, hence $|D|-m\geq m\geq1$, and thus $s=\frac{|D|}{|D|-1/2}\cdot\frac{1}{1-\frac{1}{2\,(|D|-m)}}\leq4/3\cdot2=8/3$. Thus, 
\begin{equation}
\mathit{slack}\leq\frac{5.05\cdot2}{\sqrt{m}}+\sqrt{\frac{8/3}{m}\,\ln1/\delta}=\frac{1}{\sqrt{m}}(10.10+\sqrt{8/3\,\ln1/\delta}).\label{eq:slackbound}
\end{equation}

Next, we bound the transductive Rademacher complexity, for which we require several definitions: Let $\vec{x}\in\mathcal{X}^{|D|}$ be a vector listing the first components (i.e., the unlabeled datapoints) for all members of $D$, in arbitrary order. For an index set $\mathcal{I}\subseteq\{1,\dots,|D|\}$, let $\vec{x}_{\mathcal{I}}\in\mathcal{X}^{|\mathcal{I}|}$ be the restriction of $\vec{x}$ to the indices $\mathcal{I}$. For a hypothesis $h$ and a vector $\vec{v}$, let $h(\vec{v})$ be the vector that results from applying $h$ element-wise to the entries of $\vec{v}$. Since the codomain of the hypothesis class is binary, i.e. $|\mathcal{Y}|=2$, we will assume here that $\mathcal{Y}=\{-1,1\}$ without loss of generality. For any $t\in\mathbb{N}$, let $\Sigma_{\mathit{trans}}^{t}$ denote the probability distribution over vectors of length $t$, whose entries are drawn i.i.d.\ and are equal to $-1$ with probability $\frac{m\,(|D|-m)}{|D|^{2}}$, equal to $1$ with probability $\frac{m\,(|D|-m)}{|D|^{2}}$, and are 0 otherwise. Furthermore, let $\Sigma_{\mathit{ind}}^{t}$ denote the probability distribution over vectors of length $t$ whose entries are independently drawn and $-1$ or $1$ with equal probability. Finally, denote by $\mathcal{B}$ the probability distribution over subsets of $\{1,\dots,|D|\}$ in which each element is contained in the subset independently with probability $2\,\frac{m\,(|D|-m)}{|D|^{2}}$.

In this notation, \citet[Def. 1 and p. 6]{EP09b} define the transductive Rademacher complexity $R_{\mathit{trans}}(\mathcal{H})$ as 
\[
({\textstyle {\frac{1}{m}+\frac{1}{|D|-m}})\cdot\mathbb{E}_{\vec{\sigma}\sim\Sigma_{\mathit{trans}}^{|D|}}\sup{}_{h\in\mathcal{H}}\;\vec{\sigma}^{T}h(\vec{x}).}
\]
Note that we can draw $\vec{\sigma}$ from $\Sigma_{\mathit{trans}}^{|D|}$ in two steps: we first draw the set of indices $\mathcal{I}$ from $\mathcal{B}$ whose entries in $\vec{\sigma}$ are nonzero, and then set $\vec{\sigma}$'s coordinates in $\mathcal{I}$ to $-1$ or $1$ with equal probability. Therefore, we can equivalently write 
\[
R_{\mathit{trans}}(\mathcal{H})=({\textstyle {\frac{1}{m}+\frac{1}{|D|-m}})\cdot\mathbb{E}_{\mathcal{I}\sim\mathcal{B}}\;\mathbb{E}_{\vec{\sigma}\sim\Sigma_{\mathit{ind}}^{|\mathcal{I}|}}\sup{}_{h\in\mathcal{H}}\;\vec{\sigma}^{T}h(\vec{x}_{\mathcal{I}}).}
\]
By \citet[Lemma 4 \& Thm. 6]{BM02},
$\mathbb{E}_{\vec{\sigma}\sim\Sigma_{\mathit{ind}}^{|\mathcal{I}|}}\sup{}_{h\in\mathcal{H}}\,\vec{\sigma}^{T}h(\vec{x}_{\mathcal{I}})\leq c_{1}\sqrt{d|\mathcal{I}|}$ for some absolute constant $c_{1}$. Thus, we can bound 
\begin{align}
R_{\mathit{trans}}(\mathcal{H}) & \leq c_{1}\,({\textstyle {\frac{1}{m}+\frac{1}{|D|-m}})\cdot\mathbb{E}_{\mathcal{I}\sim\mathcal{B}}\sqrt{d|\mathcal{I}|}\notag}\\
 & \leq c_{1}\,{\textstyle \frac{2}{m}\cdot\mathbb{E}_{\mathcal{I}\sim\mathcal{B}}\sqrt{d|\mathcal{I}|}\tag*{\ensuremath{m\leq|D|-m}}}\\
 & \leq\frac{2\,c_{1}\,\sqrt{d}}{m}\cdot\mathbb{E}_{t\sim\mathit{Binomial}\left(|D|,\frac{2\,m\,(|D|-m)}{|D|^{2}}\right)}\sqrt{t}\nonumber \\
 & \leq\frac{2\,c_{1}\,\sqrt{d}}{m}\,{\textstyle \left(\sqrt{6\,\frac{m\,(|D|-m)}{|D|}}+\mathbb{P}\left[\mathit{Binomial}\left(|D|,\frac{2\,m\,(|D|-m)}{|D|^{2}}\right)>6\,\frac{m\,(|D|-m)}{|D|}\right]\cdot\sqrt{|D|}\right)\notag}\\
 & \leq\frac{2\,c_{1}\,\sqrt{d}}{m}\,{\textstyle \left(\sqrt{6\,\frac{m\,(|D|-m)}{|D|}}+\exp(-2\,\frac{m\,(|D|-m)}{|D|})\cdot\sqrt{|D|}\right)\tag*{(Chernoff bound)}}\\
 & \leq\frac{2\,c_{1}\,\sqrt{d}}{m}\,{\textstyle \left(\sqrt{6\,m}+\exp\left(\frac{\ln|D|}{2}-m\right)\right)\tag*{(\ensuremath{1/2\leq\frac{|D|-m}{|D|}\leq1})}}\\
 & \leq\frac{2\,c_{1}\,\sqrt{d}}{m}\,{\textstyle \left(\sqrt{6\,m}+\exp\left(\frac{\ln m^{2}/\delta}{2}-m\right)\right)\tag*{(\ensuremath{|D|<m^{2}/\delta})}}\\
 & =\frac{2\,c_{1}\,\sqrt{d}}{m}\,{\textstyle \left(\sqrt{6\,m}+\exp\left(\frac{\ln1/\delta}{2}+\ln m-m\right)\right)\notag}\\
 & \leq\frac{2\,c_{1}\,\sqrt{d}}{m}\,{\textstyle \left(\sqrt{6\,m}+\exp\left(\frac{\ln1/\delta}{2}-(1-1/e)m\right)\right)\tag*{(\ensuremath{x-\ln x\geq(1-1/e)\,x}))}}\\
\intertext{By choosing a large enough constant in \cref{eq:mbound}, we can ensure that \ensuremath{(1-1/e)m\geq\frac{\ln1/\delta}{2}}. Then, we can continue:} & \leq\frac{2\,c_{1}\,\sqrt{d}}{m}\,{\textstyle \left(\sqrt{6\,m}+e^{0}\right)\leq\frac{2\,c_{1}\,\sqrt{d}}{m}\,(\sqrt{6}+1)\sqrt{m}\notag}\\
 & \leq\frac{c_{2}\,\sqrt{d}}{\sqrt{m}},\label{eq:rademacherbound}
\end{align}
where we set $c_{2}\coloneqq2\,(\sqrt{6}+1)\,c_{1}$. Putting together \cref{eq:twosideloss,eq:slackbound,eq:rademacherbound}, we obtain that, for all $h\in\mathcal{H}$, 
\[
{\textstyle L_{D\setminus S}(h)\leq L_{S}(h)+\alpha\quad\text{and}\quad L_{S}(h)\leq L_{D\setminus S}(h)+\alpha}
\]
where we defined 
\[
\alpha \coloneqq \frac{10.10+\sqrt{8/3\,\ln1/\delta}+c_{2}\,\sqrt{d}}{\sqrt{m}}.
\]
We have 
\begin{align*}
L_{D}(h) & =\frac{m}{|D|}L_{S}(h)+\frac{|D|-m}{|D|}L_{D\setminus S}(h)\\
 & \le\frac{m}{|D|}L_{S}(h)+\frac{|D|-m}{|D|}\left(L_{S}(h)+\alpha\right)\\
 & \le L_{S}(h)+\alpha
\end{align*}
and using a similar argument for the the other side, we obtain an error bound that holds uniformly across all {hypotheses}
\[
\left|L_{S}(h)-L_{D}(h)\right|\le\alpha\quad\forall h.
\]
Finally, we compute also a bound for the empirical risk minimizer. We set $h^{*}\coloneqq\argmin_{h\in\mathcal{H}}L_{D}(h)$. Then, we bound 
\begin{align*}
 & L_{D}(\hat{h})-L_{D}(h^{*})\\
={} & \frac{m}{|D|}\,\left(L_{S}(\hat{h})-L_{S}(h^{*})\right)+\frac{|D|-m}{|D|}\,\left(L_{D\setminus S}(\hat{h})-L_{D\setminus S}(h^{*})\right)\\
\leq{} & {\textstyle \frac{m}{|D|}\,\left(L_{S}(\hat{h})-L_{S}(h^{*})\right)+\frac{|D|-m}{|D|}\,\left(L_{S}(\hat{h})-L_{S}(h^{*})+2\,\alpha\right)}\\
={} & {\textstyle \underbrace{L_{S}(\hat{h})-L_{S}(h^{*})}_{\text{\ensuremath{\leq0}, by definition of \ensuremath{\hat{h}}}}+2\,\frac{|D|-m}{|D|}\,\alpha}\\
\leq{} & 2\,\alpha.
\end{align*}
By choosing the constant in \cref{eq:mbound} large enough, we can ensure\footnote{We may assume without loss of generality that $d\geq1$, since, if $d=0$, $\mathcal{H}$ only contains a single classifier and the claim holds trivially. If $d\geq1$, we can upper bound the term $\frac{12\cdot10.10^{2}}{\epsilon^{2}}$ by a multiple of $\frac{d}{\epsilon^{2}}$.} that 
\begin{align*}
m & \geq\frac{4}{\epsilon^{2}}\cdot3\,(10.10^{2}+8/3\,\ln1/\delta+c_{2}^{2}d).\\
\intertext{By Cauchy's inequality, this implies that}m & \geq\frac{4}{\epsilon^{2}}\cdot(10.10+\sqrt{8/3\,\ln1/\delta}+c_{2}\sqrt{d})^{2},
\end{align*}
and, by rearranging, that 
\begin{align*}
\epsilon & \geq2\,\frac{10.10+\sqrt{8/3\ln1/\delta}+c_{2}\sqrt{d}}{\sqrt{m}}\\
 & =2\cdot\alpha.
\end{align*}
Thus, with probability at least $1-\delta$, $\epsilon\geq L_{D}(\hat{h})-L_{D}(h^{*})$, and $\epsilon\ge\left|L_{S}(h)-L_{D}(h)\right|~\forall h$, as claimed. 
\end{proof}

\begin{algorithm}[t]
\DontPrintSemicolon \makedemproc \label[demproc]{alg:cjrsamplingexplicit} 
\textbf{Inputs}: agents $N$, slate size $k$, VC dimension $d$, error probability $\delta$\;

    $n_{x}\gets16\,C\,k^{4}\,(d+\log(k/\delta))$ ($C$ is the constant from \cref{lem:pacwithoutreplacement})\;

\If{$n\le2\cdot n_{x}$}{$n_{x}\leftarrow n$}

    $\epsilon\gets\frac{1}{4k^{2}}$\;

    $\bar{r}_{x}\gets n_{x}\left(\frac{1}{k}-\epsilon\right)$\;

    $\bar{r}\gets n\left(\frac{1}{k}-2\epsilon\right)$\;

    $S_{1}\gets N$\;

    $W_{0}\gets\emptyset$\;

    \For{$j=1,2,\dots,k$}{

    $X_{j}\gets$ draw $n_{x}$ agents from $N$ without replacement\;

    $Y_{j}\gets X_{j}\cap S_{j}$\;

    $\alpha_{j}\gets\pg{\begin{cases}
    \genquery{Y_{j}}{\left\lceil \bar{r}_{x}\right\rceil } & \text{if \ensuremath{|Y_{j}|\ge\bar{r}_{x}}}\\
\text{some arbitrary \ensuremath{\alpha\in\mathcal{U}}} & \text{else}
\end{cases}}$\;\label{ln:defalphaj}

    $\theta_{j}\gets\sup\left\{ \theta \mid \left|\textsc{supp}\left(\alpha_{j},\theta|Y_{j}\right)\right|\ge\bar{r}_{x}\right\} $\;\label{ln:defthetaj}

    $W_{j}\gets W_{j-1}\cup\{\alpha_{j}\}$\;

$r_{j}\gets\pg{\begin{cases}
    \left\lceil \bar{r}\right\rceil  & \text{if \ensuremath{j\leq n-k\left\lfloor \bar{r}\right\rfloor }}\\
    \left\lfloor \bar{r}\right\rfloor  & \text{else}
\end{cases}}$

$T_{j}\gets$ the $r_{j}$ agents in $S_{j}$ with largest $\disquery{\cdot}{\alpha_{j}}$

    $S_{j+1}\gets S_{j}\setminus T_{j}$\; }

    \Return $W_{k}$\; \caption{Democratic Process for BJR with Size-Constrained Queries (more explicit version of Process~2).}
\end{algorithm}

\thmvccjr*
\begin{proof}
For convenience, we define $\textsc{supp}(\alpha,\theta|S) \coloneqq \left\{ i\in S \mid u_{i}(\alpha)\ge\theta\right\} $ to be the set of agents in $S$ who have utility at least $\theta$ for statement $\alpha$. Further, we define \cref{alg:cjrsamplingexplicit}, which is equivalent to Process~2 but whose more explicit notation makes it easier to refer to specific values of the variables in this proof. Note that we have 
\[
\genquery{S}{\lceil r\rceil}=\argmax_{\alpha\in\mathcal{U}}\sup\left\{ \theta \mid |\textsc{supp}(\alpha,\theta|S)|\ge r\right\}
\]
    and hence we can write $\alpha_{j}$ defined in \cref{ln:defalphaj} of \cref{alg:cjrsamplingexplicit} as 
\begin{equation}
    \alpha_{j}=\argmax_{\alpha\in\mathcal{U}}\sup\left\{ \theta \mid \left|\textsc{supp}\left(\alpha,\theta|Y_{j}\right)\right|\ge\bar{r}_{x}\right\}.\label{eq:alpha}
\end{equation}

\paragraph{Step 1.}
We start by showing that with probability at least $1-\delta$, we have

\begin{equation}
\left|\frac{1}{n_{x}}\left|\textsc{supp}\left(\alpha,\theta|Y_{j}\right)\right|-\frac{1}{n}\left|\textsc{supp}\left(\alpha,\theta|S_{j}\right)\right|\right|\le\epsilon\label{eq:epsilonbound}
\end{equation}
for all $\alpha\in\mathcal{U}$, $\theta\in\mathbb{R}$, and $1 \leq j \leq k$.
For convenience, we define the indicator function:

\[
f_{\alpha,\theta}(i) \coloneqq \mathbb{I}\left[u_{i}(\alpha)\ge\theta\right].
\]
We can now write:
\begin{align*}
\frac{1}{n}\left|\textsc{supp}\left(\alpha,\theta|S_{j}\right)\right| & =\frac{1}{n}\left|\left\{ i\in S_{j} \mid u_{i}(\alpha)\ge\theta\right\} \right|\\
 & =\frac{1}{n}\sum_{i\in N} \, \mathbb{I}\left[u_{i}(\alpha)\ge\theta\right]\mathbb{I}\left[i\in S_{j}\right]\\
 & =\frac{1}{n}\sum_{i\in N}f_{\alpha,\theta}(i) \, \mathbb{I}\left[i\in S_{j}\right]
\end{align*}
and similarly:
\begin{align*}
\frac{1}{n_{x}}\left|\textsc{supp}\left(\alpha,\theta|Y_{j}\right)\right| & =\frac{1}{n_{x}}\sum_{i\in N}f_{\alpha,\theta}(i) \, \mathbb{I}\left[i\in Y_{j}\right]\\
 & =\frac{1}{n_{x}}\sum_{i\in N}f_{\alpha,\theta}(i) \, \mathbb{I}\left[i\in X_{j}\cap S_{j}\right]\\
 & =\frac{1}{n_{x}}\sum_{i\in X_{j}}f_{\alpha,\theta}(i) \, \mathbb{I}\left[i\in S_{j}\right].
\end{align*}
To bound the difference between these two terms, we map them to the learning-theoretic setting from \cref{lem:pacwithoutreplacement} as follows:
Let the domain $\mathcal{X}$ be the set of agents $N$, and the labels $\mathcal{Y}$ be $\{0,1\}$.
The set of labeled datapoints is $D \coloneqq \{(i,0)\}_{i\in N}$, from which we draw the uniform sample $S \coloneqq \{(i,0)\}_{i\in X_{j}}$ without replacement, and the hypothesis class is: 
\begin{align*}
\text{\ensuremath{\mathcal{H}}} \coloneqq \left\{ f_{\alpha,\theta}(\cdot) \, \mathbb{I}\left[\cdot\in S_{j}\right] \mid \alpha\in\mathcal{U},\theta\in\mathbb{R}\right\} .
\end{align*}
Hence, each hypothesis can be identified with a pair $(\alpha,\theta)$ and it is then easy to see that the losses from \cref{lem:pacwithoutreplacement} are precisely the terms we are trying to relate:
\begin{align*}
L_{S}(\alpha,\theta) & =\frac{1}{n_{x}}\left|\textsc{supp}\left(\alpha,\theta|Y_{j}\right)\right| \text{ and} \\
L_{D}(\alpha,\theta) & =\frac{1}{n}\left|\textsc{supp}\left(\alpha,\theta|S_{j}\right)\right|.
\end{align*}
Hence, \cref{lem:pacwithoutreplacement}, along with a union bound across the $k$ steps, tells us that if the sample size satisfies: 
\begin{align}
n_{x} & \ge C \cdot \frac{\text{\textsc{vc-dim}}(\mathcal{H})+\log k/\delta}{\epsilon^{2}}\label{eq:nx}\\
 & =16 \, C \, k^{4}(\text{\textsc{vc-dim}}(\mathcal{H})+\log k/\delta),\nonumber
\end{align}
then \cref{eq:epsilonbound} holds with probability at least $1-\delta$. 
To show \cref{eq:epsilonbound}, it remains to relate $\text{\textsc{vc-dim}}(\mathcal{H})$ to the VC dimension $d$ of our statement space. Note that for all hypotheses in $\mathcal{H}$, all datapoints in $S_{j}$ are constrained to $0$ due to the factor $\mathcal{I}[\cdot \in S_j]$.
Compared to a definition without this indicator factor, this restriction does not increase the VC dimension of the hypothesis class since the datapoints in $S_j$ cannot be part of any shattered subset. Consequently, $\text{\textsc{vc-dim}}(\mathcal{H})$ is at most equal to the VC dimension of the hypothesis class 
\begin{align*}
\left\{ f_{\alpha,\theta}(\cdot) \mid \alpha\in\mathcal{U},\theta\in\mathbb{R}\right\} .
\end{align*}
It is easy to verify that the VC dimension of this set of indicator functions corresponds to our notion of VC dimension $d$, hence $\text{\textsc{vc-dim}}(\mathcal{H})\le d$, which means that our $n_{x}$ from \cref{alg:cjrsamplingexplicit} satisfies \cref{eq:nx} and therefore \cref{eq:epsilonbound} holds with the desired probability.

\paragraph{Step 2.}
Next, we show that, when \cref{eq:epsilonbound} holds, it must hold that, for each iteration $j$, all of the agents $T_{j}$ removed in this iteration have utility at least $\theta_{j}$ for the selected statement $\alpha_{j}$.
For this, it suffices to show that there are at least $r_{j}$ agents in $S_{j}$ with utility at least $\theta_{j}$ for $\alpha_{j}$, i.e., that $\left|\textsc{supp}\left(\alpha_{j},\theta_{j}|S_{j}\right)\right|\ge r_{j}$.
First, observe that we defined $r_{j}$ such that we always have $|S_{j}|\ge r_{j}$, since
\[ \sum_{1 \leq j \leq k} r_j \leq k \, \lfloor n \, (\tfrac{1}{k} - 2 \epsilon)\rfloor + \left(n - k \, \lfloor n \, (\tfrac{1}{k} - 2 \epsilon)\rfloor\right) \leq n.\]
Secondly, in the edge case where $|Y_{j}|<\bar{r}_{x}$, we have, by its definition in \cref{ln:defthetaj}, $\theta_{j}=-\infty$ and hence the requirement is trivially satisfied.
In the more interesting case of $|Y_{j}|\ge\bar{r}_{x}$, the same definition implies that:
\begin{align*}
    \left|\textsc{supp}\left(\alpha_{j},\theta_{j}|Y_{j}\right)\right|&\geq \bar{r}_{x}.
\end{align*}
By applying our assumption of \cref{eq:epsilonbound}, it follows that:
\begin{align*}
    \frac{1}{n}\left|\textsc{supp}\left(\alpha_{j},\theta_{j}|S_{j}\right)\right|+\epsilon & \ge\frac{\bar{r}_{x}}{n_{x}}\\
    \intertext{and thus that}
    \left|\textsc{supp}\left(\alpha_{j},\theta_{j}|S_{j}\right)\right| & \ge n\cdot\left(\frac{\bar{r}_{x}}{n_{x}}-\epsilon\right)\\
     & =\bar{r}.
\end{align*}
    Since the left-hand-side is an integer and $r_{j}\le\left\lceil \bar{r}\right\rceil $, it follows that
\begin{equation}
\left|\textsc{supp}\left(\alpha_{j},\theta_{j}|S_{j}\right)\right|\ge r_{j}
\end{equation}
as desired.

\paragraph{Step 3.}
We can now finally show that the algorithm satisfies BJR. Let the matching $\omega$ be such that, for all rounds $j\in\{1,\dots,k\}$ and agents $i\in T_{j}$, we have $\omega(i)=\alpha_{j}$. Note that any two $T_{j},T_{j'}$ differ in size by at most $1$, hence clearly the balancing condition (i.e., $|\{i:\omega(i)=w\}|\in\{\lceil n/k\rceil,\left\lfloor n/k\right\rfloor \}$ for all $w\in W_{k}$) can be satisfied by assigning the remaining agents in $S_{k+1}$ appropriately to statements in $W_{k}$. Having defined a balanced matching $\omega$, consider a coalition $S\subseteq N$ of size $\ge n/k$, a candidate $\alpha\in\mathcal{U}$, and a $\theta\in\mathbb{R}$ such that $u_{i}(\alpha)\ge\theta$ for all $i\in S$. 

The number of agents remaining after the $k$ iterations satisfies $|S_{k+1}|<n/k$, hence $S\nsubseteq S_{k+1}$. To see this, consider the number of agents, $r_{j}$, removed in each round. During 
\[
\max\left\{ \min\left\{ n-k\left\lfloor \bar{r}\right\rfloor ,k\right\} ,0\right\} 
\]
rounds, we remove $\left\lceil \bar{r}\right\rceil $ agents per round, and for the remaining rounds we remove $\left\lfloor \bar{r}\right\rfloor $ agents per round. It follows that in average, we remove $\min\left\{ \frac{n}{k},\left\lceil \bar{r}\right\rceil \right\} $ agents per round. It is easy to verify that $\min\left\{ \frac{n}{k},\left\lceil \bar{r}\right\rceil \right\} \ge\bar{r}$, hence 
\[
    |S_{k+1}|\le n-k\bar{r}=2\cdot k\cdot n\cdot\epsilon=\frac{n}{2k}.
\]
This means that for some iteration $q\in[k]$ we have $S\cap T_{q}\neq\emptyset$. Let $q$ be the iteration where this happens the first time, which implies that $S\subseteq S_{q}$ and thus that 
\begin{align*}
\frac{n}{k} & \le\left|\textsc{supp}\left(\alpha,\theta|S\right)\right|\\
 & \le\left|\textsc{supp}\left(\alpha,\theta|S_{q}\right)\right|,\\
 \intertext{or, equivalently, that}
\frac{1}{k} & \le\frac{1}{n}\left|\textsc{supp}\left(\alpha,\theta|S_{q}\right)\right|.
\end{align*}
Assuming \cref{eq:epsilonbound}, which holds with probability at least $1-\delta$ as established in the first step, it follows that 
\begin{align*}
\frac{1}{k}-\epsilon & \le\frac{1}{n_{x}}\left|\textsc{supp}\left(\alpha,\theta|Y_{q}\right)\right|,\\
\intertext{or, equivalently, that }
    \bar{r}_{x} & \le\left|\textsc{supp}\left(\alpha,\theta|Y_{q}\right)\right|.
\end{align*}
Hence, $\alpha$ is a candidate in the definition of $\alpha_{q}$ as expressed in \cref{eq:alpha}.
Therefore, it must be that $\theta_{q}\ge\theta$. As shown in the second step, all agents in $i\in T_{q}$ have utility $u_{i}(\alpha_{q})\ge\theta_{q}\ge\theta$. Since at least one agent $i\in S$ is in $T_{q}$, we have $\theta\le u_{i}(\alpha_{q})=u_{i}(\omega(i))$, which means that there can be no violation of BJR.
\end{proof}

\section{Deferred Details About Pilot}
\label{app:experiments}
\subsection{Representativeness of the Samples}
\label{app:demographics}

\begin{figure}[htb]
\centering
\includegraphics[width=\linewidth]{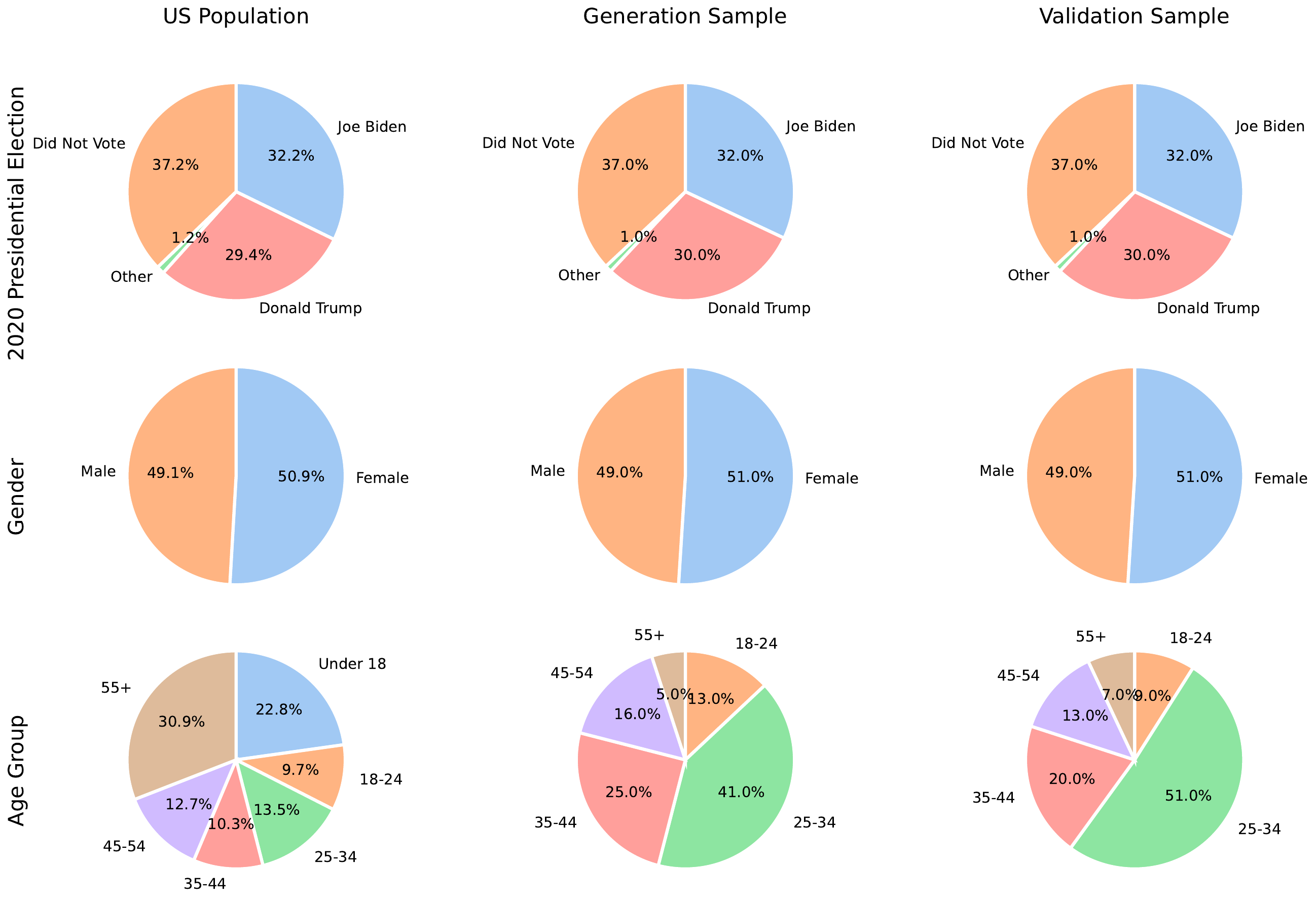}
\caption{Demographic composition of both samples, compared to the US population as of the 2020 Census \cite{USCensus2020} and the 2020 U.S. Presidential Election results \cite{ElectionResults2020}.}
\label{fig:demographics}
\end{figure}

{
We recruited a sample of U.S.\ residents on Prolific, stratified on gender and voting behavior in the 2020 U.S.\ presidential election (Joe Biden, Donald Trump, third-party candidate, and nonvoter).
We adopt these two criteria of stratification because they are especially salient for the topic of abortion policy.
We excluded participants below the age of 22 at the time of the survey, to ensure that all respondents were of voting age in the 2020 election.
As shown in \cref{fig:demographics}, our sample mirrors the U.S.\ population's composition in terms of these stratified features.
Our sample is not just representative along gender and voting-behavior lines but also within intersections of these features.
Compared to the U.S.\ population, our sample skews young; in particular, the 25--34 and 35--44 year cohorts are overrepresented.
}

{To maintain data integrity, we filtered out submissions suspected to be generated by language models or those that appeared extremely low effort.
Specifically, we recruited, for each of our surveys, 110 participants on Prolific, and manually identified responses that exhibited patterns typical of AI-generated content or minimal engagement with the survey questions.\footnote{For example, one participant’s response included the phrase, “You’ve hit the Free plan limit for GPT-4o”, and another copy-pasted the same answer repeatedly.}
For the generation sample, we excluded 4 of the 110 submissions for LLM usage and 2 submissions for extremely low effort; and for the validation sample, we excluded 5 submissions for LLM usage and 2 for extremely low effort.
A member of our team then selected the 100 submissions for our survey from those not filtered out, by re-establishing proportionality along voting behavior, gender, and intersections of the two categories.
To avoid bias, the team member was blinded to the content of submissions for this decision.
}

\subsection{Slates}
\label{app:slate}
\definecolor{DataControlColor}{RGB}{23,150,150}
\definecolor{AccuracyColor}{RGB}{124,205,124}
\definecolor{PrivacySecurityColor}{RGB}{23,88,200} 

{
\subsubsection{Slate Generated by Our Democratic Process}
\label{app:ourslate}
\begin{enumerate}
\item[S1.] I believe that abortion should be legal and that a woman has the right to choose what she does with her body. It is crucial for the decision to have an abortion to remain a healthcare decision between a woman and her doctor, free from political interference. Society should respect and uphold these rights to ensure women's autonomy and wellbeing.
\item[S2.] I believe that abortion should be legal and accessible to everyone, as it is essential healthcare. It is critical for women to have control over their own bodies and make decisions about abortion without government interference. Safe and legal access to abortion ensures that women can make the best choices for their health and well-being.
\item[S3.] I believe that abortion should be illegal in most cases and only allowed under specific circumstances. It is morally wrong and equates to taking a life, as I believe in the sanctity of human life starting from conception. People should take responsibility for their actions and should never use abortion as a form of birth control.
\item[S4.] I believe that abortion should be legal and that women should have the right to make decisions about their own bodies. The decision to have an abortion should be a private matter, free from societal judgment, as it's a crucial aspect of women's health and safety. Society should ensure that abortions are safe, legal, and accessible, respecting the individual's autonomy and privacy in making such decisions.
\item[S5.] I believe that abortion should not be used as a form of birth control. However, it should be legally permissible in cases where the mother's life is in danger, and in instances of rape or incest within an early timeframe. Furthermore, there should be robust educational efforts to prevent unwanted pregnancies and ensure people are informed about the complexities and consequences of abortion.
\end{enumerate}
Statements S1 and S5 were generated through nearest-neighbor clustering (with clusters of size 10), whereas the other three statements were generated by balanced k-means clustering.
Statements S1, S2, and S5 were generated in the iteration of the greedy algorithm in which they were selected (i.e., the first, second, and fifth iteration, respectively), whereas S3 was generated in the first iteration and S4 in the third iteration.

Three of the statements (S1, S2, and S4) above are very similar. They express a position that strongly favors broadly legal abortions and sees abortion as a private choice, in the realm of healthcare, and as part of a right to bodily autonomy.
Statement S3 wants abortion to be broadly illegal, albeit with some exceptions, since it sees abortion as murder of a human being with full moral consideration. It also expresses concerns about an abuse of abortion as birth control.
Statement S5 shares this concern about abuse, and expresses discomfort with abortion, but takes a more moderate position. Abortion should be legal in medical emergencies, or cases of rape and incest; the statement does not take a stance on whether abortion should be legal or illegal outside of these circumstances. Instead, the statement hopes to reduce the prevalence of abortion through education.

To check that this slate plausibly represents the generation sample, we coarsely cluster the participants by which of the five example statements (full statements in \cref{app:generationsurvey}) they agree with most.
Specifically, we match each participant in the generation sample to the example statement they rated most highly, matching them fractionally if several statements are tied for the highest rating.
In this case, the number of participants matched to each statement is as follows:
\begin{center}
\small
\begin{tabular}{rl}
\toprule
\# matched & statement \\
\midrule
32.5 & I think abortion should be a personal decision between a woman and her doctor\dots \\
27.2 & I believe that abortion should be legal and accessible because women have the right\dots \\
17.0 & I think abortion should be restricted to certain circumstances\dots \\
15.0 & I believe that abortion should be illegal because it involves taking a human life\dots \\
8.3  &  believe that abortion should be allowed in the first trimester but restricted afterward\dots \\
\bottomrule
\end{tabular}
\end{center}

Conceptually, the three pro-choice statements on our slate (S1, S2, and S4) are a blend of the first two statements (minus the call for contraception and sex education), so representing $32.5+27.2=59.7$ participants by three statements (which ideally should represent $3 \, n/k = 60$ participants) seems quite accurate.
These two example statements are also the pair whose ratings are the most correlated (correlation coefficient 0.88), so it is plausible to treat the agents whose favorite is among those two as a single bloc.

The pro-life statement on the slate (S3) best matches the fourth statement above, given that both statements express that abortion should be illegal and equate it with murder (though the specifics vary).
Representing 15 pro-life participants by one statement is also plausible.

The moderate statement on our slate (S5) most closely resembles the third statement in the table, given that both statments want abortion to be legal in cases of incest, rape, and risk to life, and given that both look for tools to reduce the prevalence of abortions besides of criminalization (though, again, some specifics vary).
Representing these 17 participants by a statement also seems close to proportional.

This leaves the 8.3 adherents of the last statement (advocating for a concrete temporal cutoff for legality) without an obvious representative on our slate, but this group are too small to cause a violation of BJR on their own.
The fact that agreement with this statement does not correlate much with agreement to the other statements\footnote{{The two correlations with largest magnitude are positive correlations with the pro-choice statements, with correlation coefficients of 28\% and 24\%.}} further suggests that the about 8 adherents of the last statement are not the core of a cohesive group whose BJR guarantee would be violated.

If we repeat the above assignment to favorite statements, but only among the second, third, and fourth statement in the table above (i.e., one pro-choice statement, one hesitant intermediate statement, and one pro-life statement), the respective numbers of matched participants are 62, 21.5, and 16.5.
Once more, this suggests that representing the first of these three viewpoints by three statements, and the latter two viewpoints by one statement each, is proportional to the opinions in the generation sample.

\subsubsection{Baseline Slate Generated By GPT-4o}
\label{app:baselineslate}
\begin{enumerate}
\item[B1.] I believe abortion should be legal because it is a woman's right to choose what happens with her body. It is important for maintaining reproductive freedom and ensuring safe and accessible healthcare options. Women need to make decisions that are best for them and their families without government interference.
\item[B2.] In my opinion, abortion should be illegal except in cases where the mother's life is at risk or in instances of rape or incest. This stance protects unborn children while allowing necessary exceptions. It's about balancing moral concerns with compassion for those in difficult situations.
\item[B3.] Abortion should be legal up to a certain point in the pregnancy, such as the first or second trimester. After that, I think it needs to be restricted unless there are severe medical reasons. Society needs to find a middle ground that respects both a woman's right to choose and the potential life of the fetus.
\item[B4.] I think abortion should be legal and accessible at all stages of pregnancy. Women need to have control over their reproductive health for various personal and medical reasons. Restrictions can force women into unsafe or undesirable circumstances.
\item[B5.] Abortion should be illegal in all cases, as I believe it is morally wrong and equivalent to ending a potential life. Society should look for alternative solutions such as adoption and support for pregnant women. We need to value and protect all human life, starting from conception.
\end{enumerate}
}

\subsubsection{Distribution of Ratings in Validation Sample}
\label{app:ratingdistribution}
{
Below, we show the distribution of ratings for the statements of both slates in the validation sample.
For each slate, we disaggregate agents depending on which statement they are assigned to, and show, for each group and each statement, a histogram of rating levels.

First, we show the distribution of rating levels for our slate:
\begin{center}
\includegraphics[width=.8\textwidth]{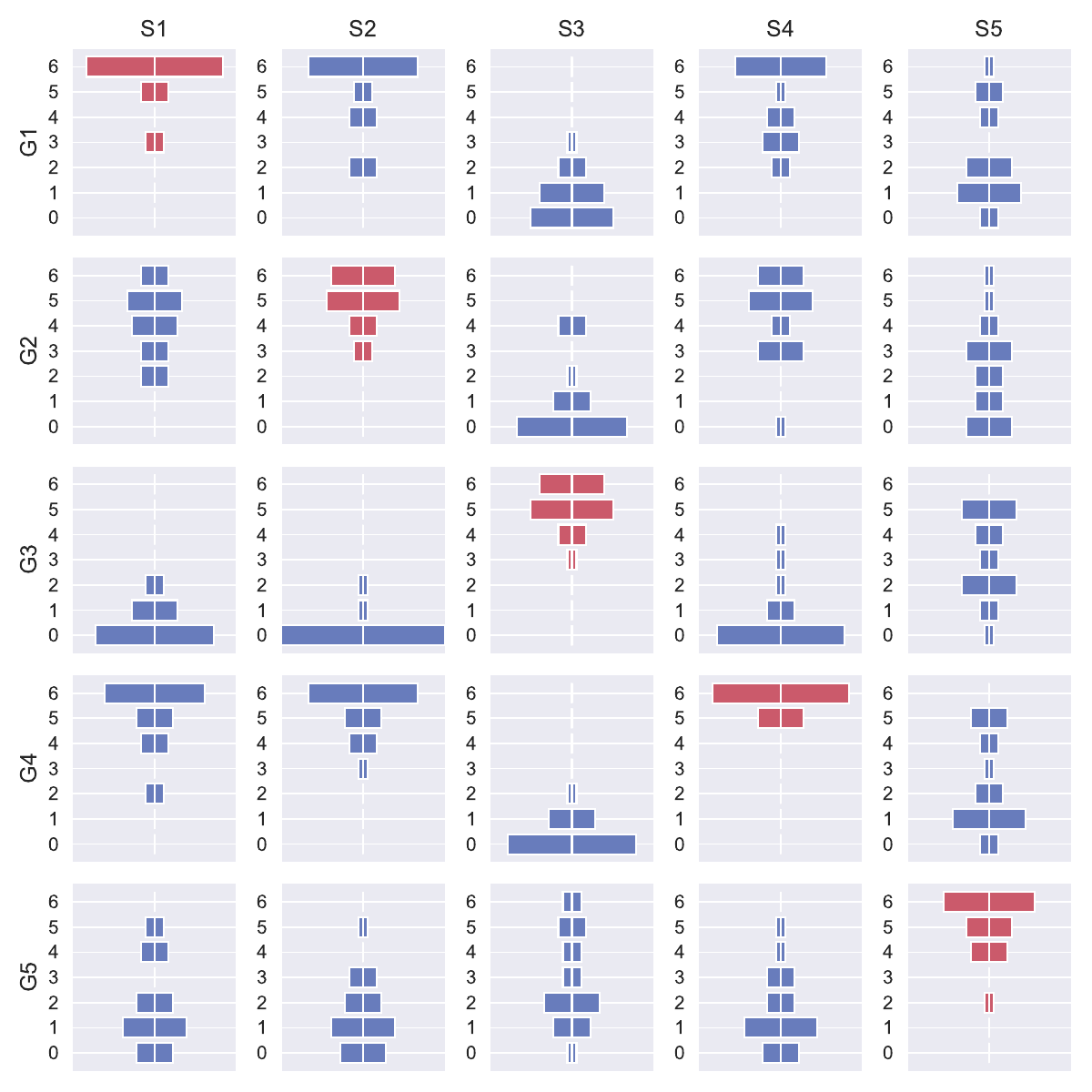}
\end{center}
Each row corresponds to a group; for example, G3 represents the 20 participants assigned to statement S3. For each group, we polot the frequencies of rating levels given by members of this group to statements S1 through S5.

We observe that rating levels on the diagonal, (i.e., ratings for the groups' assigned statement) are higher than those off diagonal (i.e., ratings for other statements).
We also see that there is substantial disagreement in the dataset; for example, group G3 (the 20 agents matched to the pro-life statement) are very critical of the three pro-choice statements S1, S2, and S4.
We generally see that participants of all groups rate these three pro-choice statements similarly, attesting to their striking similarity.

We now show the corresponding figure for the baseline slate, where groups G1 through G5 now refer to the agents assigned to B1 through B5, respectively:
\begin{center}
\includegraphics[width=.8\textwidth]{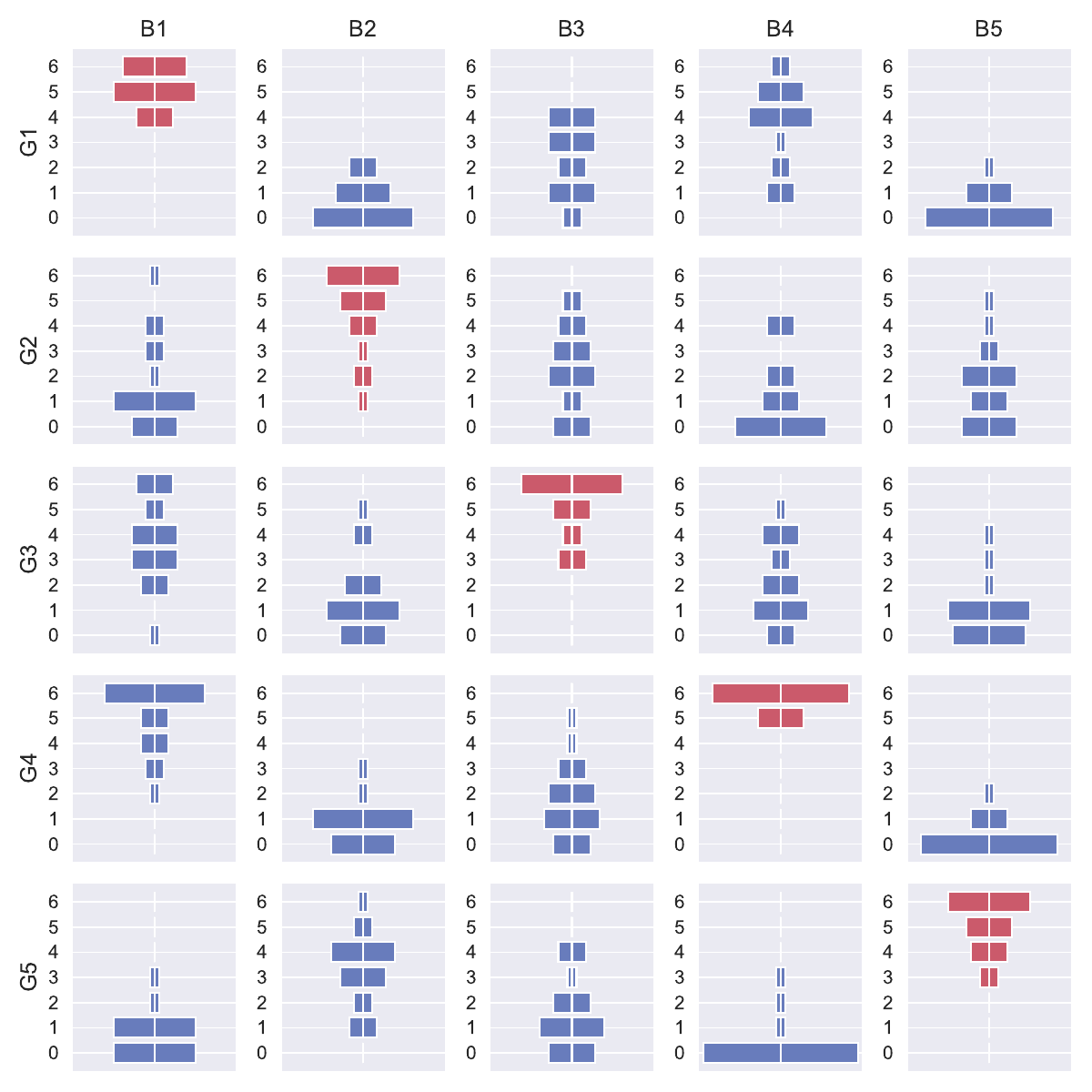}
\end{center}

We again see higher ratings on the diagonal than outside of it.
We can also see that statement B2 seems slightly less well received by its assigned group than the other statements on this slate or the five statements on our slate.
}

\section{Prompts}\label{app:prompt-design-process}
The formatting of the prompts has been lightly edited for readability. 
\subsection{Summarization}\label{app:summarization}

\subsubsection*{System Prompt}
\begin{prompt}
You will be provided with a user's response to a survey, in which they describe their opinions on a topic in detail. Your task is to produce a detailed summary of that user's opinion. Your response should be in JSON according to the format specified below. 
\end{prompt}

\subsubsection*{Prompt}
\begin{prompt}
\textit{User survey responses:}
\\
\{user\_data\}
\\ \\
\textit{Output instructions:}
\\
Complete the following entries:

\begin{itemize}
\item most\_important\_aspects (List[str]): List the aspects of the topic that are most important to the user. Each aspect should be succinct and self-contained, so that it can be understood without context. (For example, instead of writing "Religious beliefs", write "Believes X due to religious beliefs") 
\item specific\_details (List[str]): List any specific details or examples the user provided to support their opinion. Each detail should be succinct and self-contained, so that it can be understood without context. These details, together with the\\most\_important\_aspects above, should be enough to mostly reconstruct the user's opinion.
\item user\_background (List[str]): List any personal information the user may have divulged that is relevant to their opinion.
\item overall\_summary (str): Write a detailed 2-3 sentence summary of the user's opinion, taking into account all the context provided above.
\end{itemize}

\noindent Respond in JSON with the fields above filled in.
\end{prompt}

\subsection{Tagging}

\subsubsection*{System Prompt}
\begin{prompt}
You will be provided with a user's response to a survey, in which they describe their opinions on a topic in detail. Then you will be provided with a list of aspects of that topic. For each aspect, your task is to rate the extent to which it pertains to the user's response or captures the user's opinion. The scale is as follows:

\begin{enumerate}
\item Strongly goes against user's opinion 
\item Goes against user's opinion
\item Somewhat goes against user's opinion
\item Neutral / unknown 
\item Somewhat aligned with user's opinion 
\item Aligned with user's opinion
\item Strongly aligned with user's opinion
\end{enumerate}

\noindent Respond in JSON, mapping each aspect to your rating. It is important to copy the EXACT wording of the aspect with no changes.
\end{prompt}

\subsubsection*{Prompt Template}
\begin{prompt}
\textit{User survey responses:}
\\
\{user\_data\}
\\ \\
\textit{Output instructions:}
\\For each of the following aspects, rate on the scale from 1-7 how well it pertains to the user's response or captures the user's opinion. 
\\
\{list\_of\_fields\}
\\ \\
Respond in JSON, with the above fields as keys and your ratings as values.
\end{prompt}

Example fields used include: \texttt{most\_important\_aspects} and \texttt{specific\_examples}.

\subsection{Discriminative Query}\label{app:discriminative_query}
\subsubsection*{System Prompt}
\begin{prompt}
You are an AI-based text completion system, and you are tasked with helping a user fill out an in-depth opinion survey.\\ \\
You will receive in your input a sequence of questions along with the answers already given by the user. Then, you will receive the question that the user will answer next. Based on the question and what you have learned so far about the user, you must make your best guess what the user will answer. If you guess right, the user can save a lot of effort and time, so it is important that you suggest what the user would be most likely to answer. Your prediction must strictly adhere to the format at the end of the prompt.
\end{prompt}

\subsubsection*{Prompt Template}
\begin{prompt}
Question 1:
\\ \{question\}\\ \\
Answer 1: <ANSWER FORMAT: string.>
\\\{freetext\_answer\}\\ \\
...\\ \\
Question 6:\\ \{question\} \\ \\
Answer 6: Part A — User's level of agreement: <ANSWER FORMAT: numeric score and text label. Must be exactly equal to one of the following options: "0 = very poorly", "1 = poorly", "2 = moderately", "3 = well", "4 = very well", "5 = excellently", or "6 = exceptionally".>\\ \{choice\_numeric\} = \{choice\}\\ \\
Answer 6 Part B — User's explanation: <ANSWER FORMAT: string>\\ \{explanation\} \\ \\
... \\ \\
Question 11: \\ \\ To what extent does this statement capture your full opinion on abortion? \\ \\ \{statement\} \\ \\ You must now infer the user's most likely agreement level with Question 11 from their previous responses. Your response must exactly equal one of the following options: "0 = very poorly", "1 = poorly", "2 = moderately", "3 = well", "4 = very well", "5 = excellently", or "6 = exceptionally"
\end{prompt}

\subsection{Generative Query}\label{app:generative_query}
\subsubsection*{System Prompt}
\begin{prompt}
You will be provided with a list of users and their opinions on a topic. Your task is to write a paragraph all of the users would agree with. Specifically, your measure of success is based on how satisfied the *least satisfied* user is with the paragraph.

When selecting the content for your paragraph, follow these guidelines:
\begin{itemize}
\item For any aspect of the topic that all users agree on, include that detail in your paragraph. Especially when the users disagree on major aspects of the topic, it's important to make note of any minor aspects that they all agree on and include them.
\item Sometimes there is an aspect of the topic that most users agree on, but some users don't express a strong opinion on. If it's reasonable to assume that those users would also agree, you should still include that detail in your paragraph. 
\item For any aspect of the topic that the users fundamentally disagree on, omit details from the paragraph. Only include this aspect if there is a way to phrase it that everybody would agree with. 
\end{itemize}
As for writing style, your paragraph be written like an answer to the following survey question:\\ \\
> "\{statement\_question\_text\}" \\ \\
When writing, you should first think about the content you want to include (cf. the instructions above on content), and then conjure an imaginary new user who holds those precise beliefs. Then, when writing your paragraph, you should answer the survey question as if you were that user. Your paragraph should never reference the other users -- you should write "I think..." statements instead of "Some users think..." statements. Your paragraph should not even implicitly mention the users -- you should never write "Opinions vary about...". The users' opinions are only there to help you understand the spread of opinions for the content selection step -- when it comes to the writing step, you should write from the perspective of a single (imaginary) new user.
\end{prompt}

\subsubsection*{Prompt Template}
\begin{prompt}
\textit{List of users:}
\\ \\
\{user\_descriptions\}
\\ \\
\textit{Instructions:}
\\ \\
This task consists of two main parts. First, you need to determine the content of your paragraph (recall the content guidelines from the beginning of the instructions). Second, you need to write the paragraph itself, which should be phrased like a single user's answer to the survey question (recall the writing guidelines from the beginning of the instructions).
\\ \\
Step 1. \textit{Identify common themes.} List every viewpoint that is expressed by multiple users at once. For each such viewpoint, thoughtfully make note of how many users hold that viewpoint. Keep in mind that some users may express multiple viewpoints.
\\ \\
Step 2. \textit{Identify key disagreements.} Identify all aspects of the topic where users have differing opinions. For each such aspect, thoughtfully make note of how many users hold each opinion (or if there are users who abstain).
\\ \\
Step 3. \textit{Analyze potential contents of paragraph.} For each aspect, discuss how (if at all) it should be included in the paragraph. 
\\ \\
Step 4. \textit{Write the paragraph.} This is the most important step, for which you need to do the most thinking. Write a paragraph that captures the opinions of the users as best as possible. 
\\ \\
When doing so, keep in mind these *content guidelines*:
\begin{itemize}
\item Your measure of success is based on how satisfied the *least satisfied* user is with the paragraph.
\item When all users agree on an aspect, include that aspect. (Still include if some users don't express a strong opinion on it.)
\item When users disagree on an aspect, omit that aspect from the paragraph. Focus on the aspects that all users can agree on.
And additionally keep in mind these *writing style guidelines*:
\item Your paragraph should sound like a single (imaginary) user's to the survey question.
\item Your paragraph should neither explicitly nor implicitly mention the users whose data you are working with. That data should only be used to determine the content -- it should not be referenced in the writing itself.
\end{itemize}

\noindent \textit{Output instructions:}
\\ \\ 
Respond in JSON as follows: 
\{\{

"step1" : <your response to step 1>,

"step2" : <your response to step 2>,

"step3" : <your response to step 3>,

"step4" : <your response to step 4> (str),

\}\}
\end{prompt}

\subsection{Example Statements} \label{app:zero-shot}
\subsubsection*{System Prompt}
For consistency with the baseline prompts, we passed the following prompt to GPT-4o as a system prompt (with empty standard prompt):
\begin{prompt}
Write in bullet points five different beliefs that people from the US might have about abortion. Each belief should be written such that as many people as possible would agree with it. They should sound like sentences a real person would say, written in the first person. They should be written like answers to the following survey question:
Summarize your position on abortion in your own words.
Please write \textit{how you think society should deal with abortions}, and \textbf{give reasons} that support this policy.
\\ \\
Please write \textit{exactly 3 sentences}: Be as precise as possible and prioritize the points that are most important to you.\\
Your answer should be self-contained, which means that you can repeat things you already wrote as well as make new points.
\end{prompt}

The statements generated by this prompt can be found as part of the survey questions in \cref{app:generationsurvey}.

\subsection{Baseline Prompts} \label{app:one-shot}
\subsubsection*{System Prompt}
\begin{prompt}
Write in bullet points five different beliefs that people from the US might have about abortion. Each belief should be written such that as many people as possible would agree with it. They should sound like sentences a real person would say, written in the first person. They should be written like answers to the following survey question:\\
Summarize your position on abortion in your own words.\\
Please write \textit{how you think society should deal with abortions}, and \textit{give reasons} that support this policy.
\\ \\
Please write \textit{exactly 3 sentences}: Be as precise as possible and prioritize the points that are most important to you.\\
Your answer should be self-contained, which means that you can repeat things you already wrote as well as make new points.
\\ \\ 
To help you better understand typical beliefs that people from the US have on abortion, attached are survey responses from a representative sample.
\end{prompt}

\subsubsection*{Prompt}

As the full prompt is extremely long, we include only the prompt template below. For the survey data, see \url{https://github.com/generative-social-choice/survey_data}.
\begin{prompt}
Survey responses from 100 people:\\ \\ 
Person 1\\
   \textit{(Person 1's complete survey responses)}
\\ \\ 
Person 2 \\
    \textit{(Person 2's complete survey responses)}
\\ \\
... \\ \\ 
Person 100 \\
    \textit{(Person 100's complete survey responses)}

\end{prompt}

\section{Survey Questions}
\label{app:survey}

Below are the full question prompts of the two Prolific surveys we ran.

\subsection{Generation Survey}
\label{app:generationsurvey}
\subsubsection*{Informed Consent}

We are a team of university researchers. We want to understand in detail what people like you think about social questions, in this case about abortion. We want to study the diversity of people’s beliefs, and whether algorithmic tools can help summarize and analyze these differing points of view.\\

\noindent\textit{What will I need to do and how long will the study last?}
We will ask you \textbf{10 questions}.
We expect that you will be in this research study for less than an hour. \\

\noindent\textit{Compensation: }
Your pay will \textbf{not} depend on your opinions: all good-faith responses will get fully compensated, so please just write what you think.
You may not use ChatGPT or other AI tools to write or edit your responses.
\textbf{We will award a \$1 bonus} if your submission satisfies all minimum length requirements across all questions.\\

\noindent\textit{Who will see your responses?}
The data you provide will be immediately anonymized. We may later on publish this anonymous data. By continuing this survey, you agree to this use of your responses.\\

\subsubsection*{Part 1/3: Background Questions}

We will begin by asking you \textbf{4 questions} touching on specific parts of your opinion.

\subsubsection*{Part 2/3: Your Position}

This part only has \textbf{one question}.
We will ask you to summarize, in three sentences, your overall position on abortion.

\subsubsection*{Part 3/3: Candidate Summaries}

To help researchers and policymakers understand public opinion on abortion, we will condense the survey responses into a short summary.
We will now present \textbf{5 candidate summaries} to you, and your task is to rate to what extent they would capture your point of view.

An ideal summary should

\begin{enumerate}
\item perfectly capture all your thoughts and feelings on abortion,
\item include the reasoning behind your opinion, and
\item be concrete and actionable.
\end{enumerate}

\subsubsection*{Background Questions: 1/4}

\textit{How often do think about abortion or discuss it with others?} How does this topic make you feel? \\

\noindent Please write \textbf{two} or more sentences.

\subsubsection*{Background Questions: 2/4}

\textit{Do you think abortion should be legal or illegal?} Which circumstances does your answer depend on? \\

\noindent Please write \textbf{two} or more sentences.

\subsubsection*{Background Questions: 3/4}

\textit{Where do your beliefs about abortion come from?} For example, did particular life experiences influence your beliefs? \\

\noindent Please write \textbf{two} or more sentences.

\subsubsection*{Background Questions: 4/4}

\textit{Can you describe a situation where you are not sure if abortion is appropriate or not?} If so, what makes this situation borderline or unclear? \\

\noindent Please write \textbf{two} or more sentences.

\subsubsection*{Your Position}

Summarize your position on abortion in your own words.\\

\noindent Please write \textbf{how you think society should deal with abortions}, and \textbf{give reasons} that support this policy.\\

\noindent Please write \textbf{exactly 3 sentences}: Be as precise as possible and prioritize the points that are most important to you.\\

\noindent Your answer should be self-contained, which means that you can repeat things you already wrote as well as make new points. 

\subsubsection*{Candidate Summaries: 1/5}

Here is a possible summary:

\begin{quote}
``I believe that abortion should be legal and accessible because women have the right to make decisions about their own bodies. Access to safe abortions is crucial for protecting women's health and well-being. Society should support comprehensive sex education and contraception to reduce the need for abortions.''
\end{quote}

\noindent How well does this summary capture your viewpoint on abortion?\\

\noindent Choices: very poorly, poorly, moderately, well, very well, excellently, exceptionally \\

\noindent Explain which parts you \textbf{agree} with, which parts you \textbf{disagree} with, and what would need to be \textbf{added or made more concrete} to fully represent your viewpoint. \\

\noindent (Please write \textbf{two} or more sentences.)

\subsubsection*{Candidate Summaries: 2/5}

Here is a possible summary:

\begin{quote}
``I think abortion should be a personal decision between a woman and her doctor, without government interference. Each situation is unique, and women should have the autonomy to make the best choice for themselves and their families. Society should ensure that all women have access to affordable healthcare, including reproductive services.''
\end{quote}

\noindent How well does this summary capture your viewpoint on abortion?\\

\noindent Choices: very poorly, poorly, moderately, well, very well, excellently, exceptionally \\

\noindent Explain which parts you \textbf{agree} with, which parts you \textbf{disagree} with, and what would need to be \textbf{added or made more concrete} to fully represent your viewpoint. \\

\noindent (Please write \textbf{two} or more sentences.)

\subsubsection*{Candidate Summaries: 3/5}

Here is a possible summary:

\begin{quote}
``I believe that abortion should be allowed in the first trimester but restricted afterward unless there are exceptional circumstances. This policy respects a woman's right to choose while recognizing the increasing moral considerations as the pregnancy progresses. Society should invest in education and healthcare to prevent unwanted pregnancies and support women through their reproductive choices.''
\end{quote}

\noindent How well does this summary capture your viewpoint on abortion?\\

\noindent Choices: very poorly, poorly, moderately, well, very well, excellently, exceptionally \\

\noindent Explain which parts you \textbf{agree} with, which parts you \textbf{disagree} with, and what would need to be \textbf{added or made more concrete} to fully represent your viewpoint. \\

\noindent (Please write \textbf{two} or more sentences.)

\subsubsection*{Candidate Summaries: 4/5}

Here is a possible summary:

\begin{quote}
``I think abortion should be restricted to certain circumstances, such as cases of rape, incest, or when the mother's life is at risk. This approach balances the rights of the unborn with the needs of women facing difficult situations. Society should provide support for women who carry their pregnancies to term, including healthcare and financial assistance.''
\end{quote}

\noindent How well does this summary capture your viewpoint on abortion?\\

\noindent Choices: very poorly, poorly, moderately, well, very well, excellently, exceptionally \\

\noindent Explain which parts you \textbf{agree} with, which parts you \textbf{disagree} with, and what would need to be \textbf{added or made more concrete} to fully represent your viewpoint. \\

\noindent (Please write \textbf{two} or more sentences.)

\subsubsection*{Candidate Summaries: 5/5}

Here is a possible summary:

\begin{quote}
    ``I believe that abortion should be illegal because it involves taking a human life, which I consider morally wrong. Society should focus on providing resources and support for pregnant women to encourage them to choose life. Adoption should be promoted as a viable alternative to abortion.''
\end{quote}

\noindent How well does this summary capture your viewpoint on abortion?\\

\noindent Choices: very poorly, poorly, moderately, well, very well, excellently, exceptionally \\

\noindent Explain which parts you \textbf{agree} with, which parts you \textbf{disagree} with, and what would need to be \textbf{added or made more concrete} to fully represent your viewpoint. \\

\noindent (Please write \textbf{two} or more sentences.)

\subsubsection*{Thank You}

Thank you for completing the survey!\\

\noindent Click below to register your survey completion with Prolific and return to their website.

\subsection{Validation Survey}

\subsubsection*{Informed Consent}

We are a team of university researchers. We want to understand in detail what people like you think about social questions, in this case about abortion. We want to study the diversity of people’s beliefs, and whether algorithmic tools can help summarize and analyze these differing points of view.\\

\noindent\textit{What will I need to do and how long will the study last?}
We will ask you \textbf{10 questions}.
We expect that you will be in this research study for less than an hour. \\

\noindent\textit{Compensation: }
Your pay will \textbf{not} depend on your opinions: all good-faith responses will get fully compensated, so please just write what you think.
You may not use ChatGPT or other AI tools to write or edit your responses.\\

\noindent\textit{Who will see your responses?}
The data you provide will be immediately anonymized. We may later on publish this anonymous data. By continuing this survey, you agree to this use of your responses.\\

\subsubsection*{Instructions}

In this survey, we want to hear about your opinions on \textbf{abortion}.

To help researchers and policymakers understand public opinion on abortion, we condensed the responses of previous participants into \textbf{10 short summaries}.
We will now present each summary to you, and your task is to rate to what extent it would capture \textbf{your} point of view.

An ideal summary should

\begin{enumerate}
\item perfectly capture all your thoughts and feelings on abortion,
\item include the reasoning behind your opinion, and
\item be concrete and actionable.
\end{enumerate}

\subsubsection*{Candidate Summaries: 1/10}

Here is a possible summary:

\begin{quote}
    ``I believe abortion should be legal because it is a woman's right to choose what happens with her body. It is important for maintaining reproductive freedom and ensuring safe and accessible healthcare options. Women need to make decisions that are best for them and their families without government interference.''
\end{quote}

\noindent How well does this summary capture your viewpoint on abortion?\\

\noindent Choices: very poorly, poorly, moderately, well, very well, excellently, exceptionally \\

\noindent Explain which parts you \textbf{agree} with, which parts you \textbf{disagree} with, and what would need to be \textbf{added or made more concrete} to fully represent your viewpoint.\\

\noindent Please write \textbf{two} or more sentences.

\subsubsection*{Candidate Summaries: 2/10}

Here is a possible summary:

\begin{quote}
    ``In my opinion, abortion should be illegal except in cases where the mother's life is at risk or in instances of rape or incest. This stance protects unborn children while allowing necessary exceptions. It's about balancing moral concerns with compassion for those in difficult situations.''
\end{quote}

\noindent How well does this summary capture your viewpoint on abortion?\\

\noindent Choices: very poorly, poorly, moderately, well, very well, excellently, exceptionally \\

\noindent Explain which parts you \textbf{agree} with, which parts you \textbf{disagree} with, and what would need to be \textbf{added or made more concrete} to fully represent your viewpoint.\\

\noindent Please write \textbf{two} or more sentences.

\subsubsection*{Candidate Summaries: 3/10}

Here is a possible summary:

\begin{quote}
    ``Abortion should be legal up to a certain point in the pregnancy, such as the first or second trimester. After that, I think it needs to be restricted unless there are severe medical reasons. Society needs to find a middle ground that respects both a woman's right to choose and the potential life of the fetus.''
\end{quote}

\noindent How well does this summary capture your viewpoint on abortion?\\

\noindent Choices: very poorly, poorly, moderately, well, very well, excellently, exceptionally \\

\noindent Explain which parts you \textbf{agree} with, which parts you \textbf{disagree} with, and what would need to be \textbf{added or made more concrete} to fully represent your viewpoint.\\

\noindent Please write \textbf{two} or more sentences.

\subsubsection*{Candidate Summaries: 4/10}

Here is a possible summary:

\begin{quote}
    ``I think abortion should be legal and accessible at all stages of pregnancy. Women need to have control over their reproductive health for various personal and medical reasons. Restrictions can force women into unsafe or undesirable circumstances.''
\end{quote}

\noindent How well does this summary capture your viewpoint on abortion?\\

\noindent Choices: very poorly, poorly, moderately, well, very well, excellently, exceptionally \\

\noindent Explain which parts you \textbf{agree} with, which parts you \textbf{disagree} with, and what would need to be \textbf{added or made more concrete} to fully represent your viewpoint.\\

\noindent Please write \textbf{two} or more sentences.

\subsubsection*{Candidate Summaries: 5/10}

Here is a possible summary:

\begin{quote}
    ``Abortion should be illegal in all cases, as I believe it is morally wrong and equivalent to ending a potential life. Society should look for alternative solutions such as adoption and support for pregnant women. We need to value and protect all human life, starting from conception.''
\end{quote}

\noindent How well does this summary capture your viewpoint on abortion?\\

\noindent Choices: very poorly, poorly, moderately, well, very well, excellently, exceptionally \\

\noindent Explain which parts you \textbf{agree} with, which parts you \textbf{disagree} with, and what would need to be \textbf{added or made more concrete} to fully represent your viewpoint.\\

\noindent Please write \textbf{two} or more sentences.

\subsubsection*{Candidate Summaries: 6/10}

Here is a possible summary:

\begin{quote}
    ``I believe that abortion should be legal and that a woman has the right to choose what she does with her body. It is crucial for the decision to have an abortion to remain a healthcare decision between a woman and her doctor, free from political interference. Society should respect and uphold these rights to ensure women's autonomy and wellbeing.''
\end{quote}

\noindent How well does this summary capture your viewpoint on abortion?\\

\noindent Choices: very poorly, poorly, moderately, well, very well, excellently, exceptionally \\

\noindent Explain which parts you \textbf{agree} with, which parts you \textbf{disagree} with, and what would need to be \textbf{added or made more concrete} to fully represent your viewpoint.\\

\noindent Please write \textbf{two} or more sentences.

\subsubsection*{Candidate Summaries: 7/10}

Here is a possible summary:

\begin{quote}
    ``I believe that abortion should be legal and accessible to everyone, as it is essential healthcare. It is critical for women to have control over their own bodies and make decisions about abortion without government interference. Safe and legal access to abortion ensures that women can make the best choices for their health and well-being.''
\end{quote}

\noindent How well does this summary capture your viewpoint on abortion?\\

\noindent Choices: very poorly, poorly, moderately, well, very well, excellently, exceptionally \\

\noindent Explain which parts you \textbf{agree} with, which parts you \textbf{disagree} with, and what would need to be \textbf{added or made more concrete} to fully represent your viewpoint.\\

\noindent Please write \textbf{two} or more sentences.

\subsubsection*{Candidate Summaries: 8/10}

Here is a possible summary:

\begin{quote}
    ``I believe that abortion should be illegal in most cases and only allowed under specific circumstances. It is morally wrong and equates to taking a life, as I believe in the sanctity of human life starting from conception. People should take responsibility for their actions and should never use abortion as a form of birth control.''
\end{quote}

\noindent How well does this summary capture your viewpoint on abortion?\\

\noindent Choices: very poorly, poorly, moderately, well, very well, excellently, exceptionally \\

\noindent Explain which parts you \textbf{agree} with, which parts you \textbf{disagree} with, and what would need to be \textbf{added or made more concrete} to fully represent your viewpoint.\\

\noindent Please write \textbf{two} or more sentences.

\subsubsection*{Candidate Summaries: 9/10}

Here is a possible summary:

\begin{quote}
    ``I believe that abortion should be legal and that women should have the right to make decisions about their own bodies. The decision to have an abortion should be a private matter, free from societal judgment, as it's a crucial aspect of women's health and safety. Society should ensure that abortions are safe, legal, and accessible, respecting the individual's autonomy and privacy in making such decisions.''
\end{quote}

\noindent How well does this summary capture your viewpoint on abortion?\\

\noindent Choices: very poorly, poorly, moderately, well, very well, excellently, exceptionally \\

\noindent Explain which parts you \textbf{agree} with, which parts you \textbf{disagree} with, and what would need to be \textbf{added or made more concrete} to fully represent your viewpoint.\\

\noindent Please write \textbf{two} or more sentences.

\subsubsection*{Candidate Summaries: 10/10}

Here is a possible summary:

\begin{quote}
    ``I believe that abortion should not be used as a form of birth control. However, it should be legally permissible in cases where the mother's life is in danger, and in instances of rape or incest within an early timeframe. Furthermore, there should be robust educational efforts to prevent unwanted pregnancies and ensure people are informed about the complexities and consequences of abortion.''
\end{quote}

\noindent How well does this summary capture your viewpoint on abortion?\\

\noindent Choices: very poorly, poorly, moderately, well, very well, excellently, exceptionally \\

\noindent Explain which parts you \textbf{agree} with, which parts you \textbf{disagree} with, and what would need to be \textbf{added or made more concrete} to fully represent your viewpoint.\\

\noindent Please write \textbf{two} or more sentences.

\subsubsection*{Thank You}

Thank you for completing the survey!\\

\noindent Click below to register your survey completion with Prolific and return to their website.

\end{document}